\documentclass[12pt]{article}
\usepackage{amssymb,amsmath}
\usepackage{amsthm}
\newtheorem{proposition}{Proposition}
\makeatletter

\usepackage{arydshln}

\newcount\comment@nesting

\begingroup
\xdef\comment@begincomment{\string\\begin\string\{comment\string\}}
\xdef\comment@endcomment{\string\\end\string\{comment\string\}}
\endgroup

\begingroup
\catcode`\^^M=\active
\def\@temp{\endgroup\def\comment@processline##1^^M}%
\@temp{%
    \def\comment@curline{#1}%
    \let\@next=\comment@processline
    \ifx\comment@curline\comment@endcomment
        \ifnum\comment@nesting=0
            \def\@next{\end{comment}}%
        \else
            \global\advance\comment@nesting by -1
        \fi
    \else
        \ifx\comment@curline\comment@begincomment
            \global\advance\comment@nesting by 1
        \fi
    \fi
    \@next
}

\makeatother

\usepackage[dvipdfmx]{graphicx}
\usepackage{bm}
\usepackage{color}
\usepackage{here}
\usepackage{enumerate}
\usepackage{here}
\usepackage[vcentermath]{youngtab}
\usepackage{subfigure}
\usepackage{tikz}
\usepackage{hyperref}

\newcommand{\Zb}{\mathbb{Z}}

\numberwithin{equation}{section}

\allowdisplaybreaks[1]

\definecolor{mygreen}{rgb}{0,0.714,0.286}
\definecolor{purple}{rgb}{0.55,0.0,0.60}
\newif\ifshowchanges \showchangestrue
\ifshowchanges 
\else          \fi

\begin{document}

\thispagestyle{empty}
\begin{flushright}

\end{flushright}
\vskip1.5cm
\begin{center}
{\Large \bf 
%
Symplectic Chern-Simons theories \\
\bigskip
dual to Abelian rank-$0$ theories
}

\vskip1.5cm
Tadashi Okazaki\footnote{tokazaki@seu.edu.cn}

\bigskip
{\it School of Physics and Shing-Tung Yau Center, Southeast University,\\
Yifu Architecture Building, No.2 Sipailou, Xuanwu district, \\
Nanjing, Jiangsu, 210096, China
}

\bigskip
and
\\
\bigskip
Douglas J. Smith\footnote{douglas.smith@durham.ac.uk}

\bigskip
{\it Department of Mathematical Sciences, Durham University,\\
Upper Mountjoy, Stockton Road, Durham DH1 3LE, UK}

\end{center}

\vskip1cm
\begin{abstract}
We propose that the 3d symplectic Chern-Simons theory $USp(2n)$ at level
$k=n+\frac12+\ell$ with a single chiral multiplet in the fundamental representation is dual, for
every pair of positive integers $(n,\ell)$, to the Abelian rank-$0$ theory $\mathcal{T}_{\ell,n}$
of Creutzig, Garner and Kim.  The fundamental is pseudo-real, so the level is half-integral,
there is no non-Abelian flavor symmetry, and the supersymmetry is enhanced to $\mathcal{N}=4$.
Both sides have vanishing Higgs and Coulomb branches.  We match the counting of supersymmetric vacua under real mass deformations for all
$(n,\ell)$, and the supersymmetric indices in a number of cases.  We then study in detail the
case $\ell=1$, the minimal $USp(2n)$ Chern-Simons theory at level $n+\frac32$, whose dual is a
$U(1)^{n}$ theory with a tadpole level matrix and a bare monopole superpotential.
\end{abstract}
\newpage
\setcounter{tocdepth}{3}
\tableofcontents

\section{Introduction and conclusion}
Three-dimensional $\mathcal{N}=2$ supersymmetric gauge theories exhibit a rich and intricate landscape of Seiberg-like infrared dualities. 
The exploration of this landscape originated from the foundational discoveries of the Aharony duality for theories 
with vanishing Chern-Simons (CS) levels \cite{Aharony:1997gp,Karch:1997ux} and
the Giveon-Kutasov duality for CS-matter theories \cite{Giveon:2008zn,Niarchos:2008jb}. 
Strong evidence for these dualities was subsequently provided by supersymmetric localization techniques computing exact partition functions and supersymmetric indices
\cite{Kapustin:2009kz,Kim:2009wb,Imamura:2011su,Kapustin:2011jm}, 
which not only tested the dualities across various gauge groups and matter representations but also elucidated the precise mapping of monopole operators and global symmetries
\cite{Benini:2011mf,Kapustin:2011gh,Hwang:2011ht,Aharony:2011ci,Kim:2013cma,Park:2013wta}. 
In addition, a detailed field theoretic analysis of the vacuum structure of CS-matter theories under real mass deformations clarified 
how these dualities are related to each other along RG flows, in particular allowing one to derive the Aharony duality from the Giveon-Kutasov duality \cite{Intriligator:2013lca}. 
A unified perspective was further achieved by directly deriving the 3d dualities from 4d Seiberg dualities 
via circle compactifications supplemented with suitable real mass flows \cite{Aharony:2013dha,Aharony:2013kma}.

Building upon these foundations, the web of 3d dualities has been substantially enlarged in several directions: 
to theories with classical gauge groups involving generic matter content and CS levels \cite{Aharony:2014uya,Nii:2014jsa,Hwang:2015wna,Hwang:2018uyj}, 
and to exceptional gauge groups \cite{Benvenuti:2018bav,Nii:2019dwi}. 
A pivotal role in these developments has been played by dynamically generated monopole superpotentials \cite{Benini:2017dud} 
as well as by the closely related mechanisms of sequential confinement \cite{Benvenuti:2017lle,Benvenuti:2017kud}
and tensor deconfinement \cite{Benvenuti:2020gvy}. 
For further developments of the 3d duality web in these and related directions, see e.g.
\cite{Nii:2018bgf,Amariti:2018wht,Nii:2019qdx,Nii:2019wjz,Nii:2020eui,
Nii:2020xgd,Nii:2020ikd,Amariti:2020xqm,Amariti:2021snj,Kubo:2021ecs,
Benvenuti:2021nwt,Hwang:2024hhy,Jia:2025koz,Amariti:2025gca}.

A second and rather different class of infrared equivalences is provided by 3d mirror
symmetry \cite{Intriligator:1996ex}, which relates pairs of $\mathcal{N}=4$ theories in such a
way that the Higgs branch of one is the Coulomb branch of the other.  The map exchanges at the
same time the flavor and topological symmetries and the two factors $SU(2)_H$ and $SU(2)_C$ of
the R-symmetry, so that quantities protected on one branch of one theory are computed on the
other branch of its mirror.  It was understood soon afterwards in terms of brane
configurations, where it descends from S-duality of type IIB string theory
\cite{Hanany:1996ie,deBoer:1996mp,deBoer:1996ck}, and for Abelian theories the whole web can be generated
from the single basic equivalence between $U(1)$ with one flavor and a free hypermultiplet
\cite{Kapustin:1999ha}.

Much less is settled when only four supercharges are preserved.  There is then no
hyperkahler structure to organize the two branches, the superconformal R-symmetry is a single
$U(1)$ that can mix with the flavor symmetries, and monopole superpotentials are generally
needed, so the mirror map is not fixed by the geometry of the moduli spaces in the way it is
above.  For Abelian gauge theories the 3d $\mathcal{N}=2$ mirror symmetry is nevertheless
understood \cite{Aharony:1997bx,deBoer:1997kr}, and later work has gone beyond that case,
constructing from webs of $(p,q)$ five-branes mirror pairs whose Abelian member carries a
monopole superpotential \cite{Benvenuti:2016wet}.

The web of 3d dualities is not exhausted by these two classes. 
Equivalences are known that belong to neither, such as the duality \cite{Kapustin:2010xq} between 
the ADHM theory with one flavor \cite{deBoer:1996mp,deBoer:1996ck} and the ABJM theory at level one \cite{Aharony:2008ug}, 
or those relating $USp(2N)$ and $O(N)$ theories with rank-two matter to quiver Chern-Simons matter theories \cite{Gang:2011xp}.  
Uncovering further dualities that fall outside the classes established so far is an interesting problem in its own right.

In parallel, supersymmetric boundary conditions have provided a refined probe of 3d dualities \cite{Gadde:2013wq,Okazaki:2013kaa}.
Pairs of half-BPS $\mathcal{N}=(0,2)$ boundary conditions exchanged under the bulk dualities were proposed and tested through anomaly matching and the computation
of half-indices \cite{Dimofte:2017tpi}.
The same programme has since been carried out for other amounts of boundary supersymmetry
and for other gauge groups, e.g.\ $\mathcal{N}=(0,4)$ and $\mathcal{N}=(2,2)$ boundary
conditions in Abelian theories \cite{Okazaki:2019bok,Okazaki:2020lfy,Okazaki:2025rjl},
orthogonal and symplectic gauge groups \cite{Okazaki:2021pnc}, quivers \cite{Okazaki:2021gkk}
and exceptional gauge groups \cite{Okazaki:2023kpq}, and more recently the rank-$0$ theories
\cite{Creutzig:2024ljv}.
Extending this refined perspective, the
systematic exploration of boundary conditions and the matching of half-indices
have more recently led to proposals of new dualities for the bulk theories
themselves \cite{Okazaki:2021gkk,Okazaki:2023hiv}.

In this paper we propose an exact infrared duality between a family of non-Abelian
$USp(2n)$ Chern-Simons theories and the Abelian rank-$0$ theories of \cite{Creutzig:2024ljv}.
The symplectic side is a 3d $USp(2n)$ gauge theory coupled to \textit{minimal}
matter, a single chiral multiplet in the fundamental representation, at Chern-Simons level
$k=n+\frac12+\ell$ with $\ell$ a positive integer. The index
vanishes identically for $k<n+\frac12$, where supersymmetry is broken, and equals $1$ at
$k=n+\frac12$, where the theory has no local operators, so $\ell\ge1$ is required for a non-trivial theory.  We argue that this is dual to the Abelian theory
$\mathcal{T}_{\ell,n}$, which is $U(1)^{n(2\ell-1)}$, with a level matrix
$C(T_{n})^{-1}\otimes C(A_{2\ell-1})$ which we will explain later, with one chiral of unit charge at each node and a monopole
superpotential.  While general $\ell \ge 1$ is treated throughout, we include detailed examination of the case $\ell=1$, where the Abelian side is the $U(1)^{n}$ \textit{tadpole}
theory, whose level matrix is twice the inverse Cartan matrix of the tadpole graph $T_{n}$ and
whose superpotential is built from bare monopole operators.  It is that requirement, that bare
monopoles suffice, which singles out $\ell=1$. 

The correspondence proposed here appears to be of yet another kind of duality.  It is not a
rearrangement of the same matter content, the two sides being a non-Abelian and an Abelian
theory, and it is not an exchange of a Higgs branch for a Coulomb branch, since both branches
are points on either side.  We return to what can be said about its place, and to the relation
with the rank-$0$ theories of \cite{Gang:2018huc,Gang:2023rei}, in
section~\ref{sec_matching}.

Whatever its ultimate classification, the following features are what set the duality
apart.
\begin{enumerate}

\item \textbf{Pseudo-real matter:}
The fundamental of $USp(2n)$ is pseudo-real, and the symplectic theory carries a single chiral
multiplet in it.  An even number of fundamentals could be paired into a self-conjugate set, but a
single one cannot. 
This single fact organizes much of what follows: 
there is no non-Abelian flavor symmetry and the Chern-Simons level is half-integrally quantized.
We also note that the duality cannot be derived from the usual Giveon-Kutasov-type dualities for $USp(2n)_k$ with $2N_f$ fundamentals.
The Abelian side has the counterpart property, since each node carries a single chiral of
charge $+1$ with no conjugate partner and its matter cannot be paired either.  There is no
pseudo-reality to appeal to in the Abelian case, and the statement is simply that the matter is
chiral.

\item \textbf{A duality of spin theories:}
Both sides carry half-integrally quantized Chern-Simons levels and are therefore spin theories, well defined only after a choice of spin structure on the 3-manifold. 
The duality is a statement about spin theories, and the sign that the parity anomaly leaves in the localization formula must be matched on both sides.

\item \textbf{Non-Abelian/Abelian equivalence:}
To our knowledge, this provides a previously unknown example in quantum field theory where a
strongly coupled non-Abelian $USp(2n)$ gauge theory is completely equivalent in the infrared to
an Abelian gauge theory, and at $\ell=1$ to one of the same rank, for generic $n$.  One-parameter families of
non-Abelian theories can of course have Abelian Seiberg-like duals, but the rank of the Abelian
dual does not then vary along the family.  With a suitable matter content and a Chern-Simons
level depending on $N$, for instance, a $U(N)$ theory can be Seiberg dual to a $U(1)$ theory.  What is
unusual here is that the two ranks grow together.

\end{enumerate}

Furthermore, these specific theories have recently attracted significant interest from entirely different perspectives. 
For the $n=1$ case, the Abelian side is equivalent (up to orientation) to the minimal theory proposed by Gang and Yamazaki \cite{Gang:2018huc}. 
It has been proposed that these theories can emerge from M5-branes 
and are highly distinguished by the fact that their supersymmetry enhances to $\mathcal{N}=4$ in the infrared. 
For general $n$, the Abelian theories precisely coincide with the Abelian Chern-Simons matter theories studied by Gang, Kim and Stubbs \cite{Gang:2023rei}. 
In that context, it has been argued that when endowing these theories with $\mathcal{N}=(0,2)$ boundary conditions,  
the corresponding boundary chiral algebras realize the non-unitary Virasoro minimal models $M(2, 2n+3)$. 
On the symplectic side, the $USp(2n)$ Chern-Simons theory with a single fundamental chiral multiplet has also been actively investigated from the viewpoint of boundary physics. 
In particular, the $\mathcal{N}=(0,2)$ boundary conditions and the associated line defect half-indices for symplectic Chern-Simons theories 
were analyzed for $n=1$ in \cite{Okazaki:2024paq} (noting that $SU(2) \cong USp(2)$) and for general $n$ in \cite{Okazaki:2024kzo}. 
The duality \eqref{the_duality} at $n=1$ was in fact already conjectured there, up to a
parity transformation switching the sign of the $SU(2)$ Chern-Simons level, on the basis of
matching half-indices in the presence of a boundary \cite{Okazaki:2024paq}.  The present paper
establishes it at the level of the bulk indices, extends it to general $n$ and identifies the
dual as the tadpole theory.

\subsection{Structure}
The paper is organized as follows.
Section~\ref{sec_electric} develops the symplectic theory, $USp(2n)$ at level
$k=n+\frac12+\ell$ with a single fundamental chiral.  We fix the level quantization and examine
the sign that the parity anomaly leaves in the index (section~\ref{sec:electric_level}), count
the vacua under the two real mass deformations and show that $\ell\ge1$ is the condition for the
theory to be non-trivial (section~\ref{sec_electric_vacua}), show that there is no moduli space
(section~\ref{sec_electric_moduli}), work out the monopole operators
(section~\ref{sec_monopoles}), record two constraints that any dual must satisfy
(section~\ref{sec_dual_constraints}), write the full index (section~\ref{sec_index}),
show that the supersymmetry is enhanced to $\mathcal{N}=4$ and no further
(section~\ref{sec_electric_susy}) and record a relation among the symplectic indices at
different ranks (section~\ref{sec_level_pair}).
Section~\ref{sec:tadpole} develops the Abelian side.  Section~\ref{sec_tadpole_theory}
treats the tadpole $U(1)^{r}$ Chern-Simons theory, with one charge-one chiral per node and a
bare monopole superpotential, which is the dual at $\ell=1$.  Its subsections deliberately
mirror those of section~\ref{sec_electric}, so that the two theories may be read side by side,
and the important differences then stand out, in particular the absence on the Abelian side of
the pairing of weights that pseudo-reality supplies on the symplectic side.
Section~\ref{sec_general_rank0} then treats the general member $\mathcal{T}_{\ell,n}$ of the
family of \cite{Creutzig:2024ljv}, following the same order point by point.
Section~\ref{sec_matching} states the duality and assembles the evidence, organized by the
two constraints of section~\ref{sec_dual_constraints} and by the structure of the two indices.
Each of sections~\ref{sec_matching_symmetries} to \ref{sec_index_matching} first considers general
$\ell$ and then treats $\ell=1$ in detail, the index comparisons at $\ell>1$ being collected in
sections~\ref{sec_matching_12} to \ref{sec_matching_22} and the pattern of charge-conjugation
symmetry in section~\ref{sec_matching_conj}.
Appendix~\ref{app_bethe} contains the Bethe vacuum computation used in
section~\ref{sec_vacuum_counting}.

\subsection{Future works}
\label{sec_futureworks}

\begin{itemize}

\item \textbf{Boundary conditions and half-indices:} 
A bulk duality may descend to a correspondence between the $\mathcal{N}=(0,2)$ boundary conditions on the two sides and to identities between their half-indices. 
Since the boundary data look very different, a non-Abelian gauge group with a single fundamental
against $n(2\ell-1)$ Abelian nodes with a monopole superpotential, 
the resulting $q$-series identities are not manifest, and they would provide an independent test
of \eqref{family_main}. 
The same setting is also what a treatment of the boundary vertex operator algebras
\cite{Costello:2018fnz} of these theories would require.  Under \eqref{family_main} the
symplectic theory at level $n+\frac12+\ell$ inherits the algebras
$W^{\min}_{n-\frac12}(\mathfrak{sp}_{2\ell})$ and $L_{n}(\mathfrak{osp}_{1|2\ell})$ from its
Abelian dual.  To obtain them on the symplectic side one has to construct the boundary
conditions themselves and the line operators ending on them, neither of which we have done here, and
the $\mathcal{N}=(0,2)$ boundary conditions and line defect half-indices for symplectic
Chern-Simons theories analyzed in \cite{Okazaki:2024paq,Okazaki:2024kzo} are the natural
starting point.
We will report on this in upcoming work.

\item \textbf{Other graphs:} 
The requirements that select the tadpole graph in section~\ref{sec_magnetic_level_choice}, 
symmetry of the level matrix, integrality of $2C^{-1}$, and a single surviving global symmetry, 
are also satisfied for $C$ being the Cartan matrix of $A_1$, $D_{2k}$, $E_7$ or $E_8$. 
It would be interesting to know whether the corresponding Abelian theories admit non-Abelian duals of the present kind, 
and if so which gauge groups and matter contents appear.
The same question arises for general $\ell > 1$.  The level matrix of $\mathcal{T}_{\ell,n}$ is
$C(T_{n})^{-1}\otimes C(A_{2\ell-1})$, and \cite{Creutzig:2024ljv} note that Nahm sums built on
$C(Y)^{-1}\otimes C(X)$ are expected to be modular for other pairs of Dynkin diagrams, without
it being known which of them come from theories with supersymmetry enhancement.  Our
\eqref{family_main} is the statement that the pair $(T_{n},A_{2\ell-1})$ has a symplectic
non-Abelian dual, and one would like to know what, if anything, the other pairs correspond to.

\item \textbf{Other minimal theories:} 
More broadly, one may ask for dual descriptions of minimal Chern-Simons matter theories with other gauge groups. 
The organizing principle here was that the matter cannot be paired into a self-conjugate
set, a condition that can be imposed for orthogonal and exceptional groups as well, and for odd
numbers of fundamentals larger than one.  We note that the Gaiotto-Witten structure of
section~\ref{sec_electric_susy} is special to the symplectic case, since the fundamental identity
holds there because the generators span all of $\mathrm{Sym}^{2}V$.  These other minimal
theories should therefore be expected to be genuinely $\mathcal{N}=2$.

\item \textbf{4d uplift:} 
Many 3d dualities descend from 4d Seiberg dualities compactified on a circle, 
with monopole superpotentials generated along the way \cite{Aharony:2013dha,Aharony:2013kma}. 
A direct 4d parent for the symplectic side is obstructed in an interesting way. 
The 4d $USp(2n)$ gauge theory with an odd number of fundamental Weyl fermions is inconsistent by the Witten anomaly \cite{Witten:1982fp}, 
and the half-integral Chern-Simons level has no 4d ancestor. 
Any 4d origin of \eqref{the_duality} would therefore have to start from an anomaly-free parent, 
e.g. with an even number of fundamentals, and reach the minimal theory by a real mass flow decoupling all but one of them, 
with the parity anomaly generated in the process. 
It would be interesting to determine whether such a derivation exists.

\end{itemize}

\section{Minimal symplectic Chern-Simons theories}
\label{sec_electric}
We consider the 3d $\mathcal{N} = 2$ supersymmetric gauge theory with
\begin{itemize}
  \item symplectic gauge group $G = USp(2n)$;
  \item Chern--Simons level $k = n+\frac12+\ell$ with $\ell$ a positive integer;
  \item a single chiral multiplet $Q$ in the fundamental $\mathbf{2n}$;
  \item vanishing superpotential $\mathcal{W} = 0$.
\end{itemize}
We call the matter content \emph{minimal} because a single chiral multiplet in the
fundamental is the smallest amount of matter a symplectic gauge theory can carry, the
representation being pseudo-real.  The word refers to the matter alone and does not fix the
level, which is why the family is labelled by $\ell$ as well as by $n$.
The field content and the charges under the gauge group, the axial symmetry $U(1)_a$ and the R-symmetry $U(1)_R$ are
\begin{align}
\label{USp2n_Nf1_charges}
\begin{array}{c|c|c|c}
& USp(2n) & U(1)_a & U(1)_R \\ \hline
\textrm{VM}& {\bf Adj} & 0 & 0 \\[2pt]
Q & {\bf 2n} & 1 & r_Q
\end{array}
\end{align}
The trial R-charge $r_Q$ is a free parameter.  We call $U(1)_a$ axial because it
is what survives of the $U(1)_a \times SU(2N_f)$ flavor symmetry of $USp(2n)$ with $2N_f$ fundamentals, i.e.\ 
the single $U(1)_a$ acting on $Q$ is the $N_f = \frac12$ member of that family.

A distinguishing feature of (\ref{USp2n_Nf1_charges}) is that the fundamental of
$USp(2n)$ is pseudo-real, its weights coming in pairs $\pm e_i$.  An even number of fundamentals
could therefore be assembled into a self-conjugate set, but the single one here cannot be.
This has several consequences, in the quantization of the level, in the induced gauge charge of the bare monopole and in the form of the index. 
The most immediate one is that there is no non-Abelian flavor symmetry, 
and the usual Giveon-Kutasov-type dualities \cite{Giveon:2008zn,Kapustin:2011vz,Willett:2011gp} for $USp(2n)_k$ with $2N_f$ fundamentals do not by themselves produce an Abelian dual, although we will see in section~\ref{sec_level_pair} that they do continue to $2N_f=1$ and relate our theories to one another. 
Nevertheless, we will argue that the dual theory is an interesting Abelian Chern-Simons theory. 
Before presenting this argument, we first examine several properties of the theory.

The supersymmetry of \eqref{USp2n_Nf1_charges} is in fact larger than $\mathcal{N}=2$, and
we return to this in section~\ref{sec_electric_susy}, once the index is available.  The trial
R-charge is free because it is a choice of the $U(1)_R$ by which the index is graded, made in
the Lagrangian and available in any $\mathcal{N}=2$ theory with a flavor symmetry containing a $U(1)$.  What the larger supersymmetry fixes
is the \emph{superconformal} R-charge, which turns out to be $r_Q=\tfrac12$ and is determined by
the algebra rather than by F-maximization.  There is one theory here and a one-parameter family
of gradings of it, not a family of theories, since $\mathcal{W}=0$ leaves the Lagrangian
untouched by the choice.  Once enhancement to $\mathcal{N}=4$ is realised the point can be put more sharply
still.  The symmetry $U(1)_a$ is then not a flavor symmetry but the axial combination of the two
R-symmetry Cartans, so the trial R-symmetries $U(1)_R+r_Q\,U(1)_a$ all lie inside
$SU(2)_{H}\times SU(2)_{C}$, and varying $r_Q$ moves between Cartan directions of the enlarged
R-symmetry rather than mixing an R-symmetry with a flavor symmetry.

The theory (\ref{USp2n_Nf1_charges}) has been met before from a different angle. 
For $n=1$, noting $SU(2) \cong USp(2)$, it appears among the 3d $\mathcal{N}=2$ Chern-Simons theories 
whose $\mathcal{N}=(0,2)$ boundary conditions and line defect half-indices were analyzed in \cite{Okazaki:2024paq}, 
and for general $n$ the symplectic case is treated in \cite{Okazaki:2024kzo}, 
where the Neumann half-indices and the one-point functions of fundamental Wilson lines are organized into Rogers-Ramanujan type functions. 
A bulk duality should descend to a correspondence between the boundary conditions on the two
sides and to identities between their half-indices, which we will report on in upcoming work as noted in
section~\ref{sec_futureworks}.

\subsection{Level quantization and the parity anomaly}
\label{sec:electric_level}
An odd number of fundamental fermions induces a parity anomaly, 
so that the level must be half-integrally quantized,
\begin{align}
k+\frac{N_f}{2}\in\Zb
\qquad\Longrightarrow\qquad
k\in \Zb + \frac12
\quad (N_f=1),
\end{align}
which $k=n+\tfrac12+\ell$ satisfies for every integer $\ell$, or equivalently $2k=2n+1+2\ell$ is odd.
This shows up directly in the classical CS contribution to the supersymmetric full index through
\begin{align}
\label{CS_classical_sign}
(-s_i)^{2k m_i}
&=(-1)^{(2n+1+2\ell)m_i}\, s_i^{2k m_i}
=(-1)^{m_i}\, s_i^{2k m_i}\,,
\end{align}
so that the integrand carries an overall factor $(-1)^{\sum_i m_i}$.

The extra sign is the hallmark of a half-integral level. 
Where it sits is a matter of convention. 
It may be assigned to the phase of the matter 1-loop determinant 
or, as in (\ref{CS_classical_sign}), absorbed into the classical Chern-Simons factor, 
the difference between the two being a half-integral contact term \cite{Closset:2012vp,Closset:2019hyt}. 
What is not a convention is the presence of a sign factor 
that no choice of counterterm can remove. By definition this is a \emph{spin} theory, requiring a choice of spin structure on the 3-manifold \cite{Closset:2018ghr}. 

It is worth being precise about the status of this sign. 
Under the Weyl reflection $s_i \to s_i^{-1}$, $m_i \to -m_i$ 
the two factors in (\ref{CS_classical_sign}) are \emph{separately} invariant since $(-1)^{-m_i}=(-1)^{m_i}$. 
Likewise the convergence of the sum over GNO fluxes is insensitive to a sign. 
Dropping $(-1)^{m_i}$ therefore does not violate Weyl invariance. 
What one obtains that way is nevertheless not the index of some other consistent theory. 
The half-integral level is forced by the parity anomaly, and the spin structure comes with it. 
The sign is what that choice leaves behind in the localization formula. 
Removing it amounts to declaring the theory to be defined on oriented 3-manifolds carrying no spin structure, which is exactly what the anomaly forbids. 
We should be careful with the word anomaly here. 
The obstruction is to parity invariance at integral level, not to the consistency of the theory at half-integral level, 
which is perfectly well defined once a spin structure is chosen. 
Note also that the sign cannot be absorbed into $s_i \to -s_i$ 
because the 1-loop contribution of $Q$ contains odd powers $s_i^{\pm 1}$.

\subsection{Vacuum counts and the choice of level}
\label{sec_electric_vacua}
We now show that the range of $\ell \ge 1$ is determined by the counting of vacua.
First recall that for pure $\mathcal{N}=2$ $USp(2n)_{k_{\mathrm{eff}}}$ Chern-Simons theory 
the number of supersymmetric vacua is the number of integrable highest weights of $\widehat{\mathfrak{usp}}(2n)$ at level $\ell'= |k_{\mathrm{eff}}| - h^\vee$, with $h^\vee = n+1$ 
\cite{Witten:1999ds,Intriligator:2013lca,Closset:2023jiq}
\begin{align}
\label{vacuum_count}
\# \text{vacua}
= \binom{\ell'+n}{n}
= \binom{|k_{\mathrm{eff}}|-1}{n},
\qquad
|k_{\mathrm{eff}}| \ge n+1\,,
\end{align}
and supersymmetry is broken for $|k_{\mathrm{eff}}| < h^\vee = n+1$. 
Here $h^\vee$ is the dual Coxeter number. 
Note that the distinction between Coxeter and dual Coxeter matters for $USp(2n)$, where $h^\vee = n+1$ while the Coxeter number is $h = 2n$, unlike 
for the $SU(N)$ case usually quoted, for which the two coincide.

Turning on a real mass for $Q$ through $U(1)_a$ and integrating it out shifts the level by
$\pm\tfrac12$, so that writing $k=n+\tfrac12+\ell$,
\begin{align}
k_{\mathrm{eff}}^{\pm} 
= k \pm \frac{1}{2}
{}= \begin{cases}
n+1+\ell\,, &\cr 
n+\ell\,, &
\end{cases}
\end{align}
and so by \eqref{vacuum_count} the two deformations leave
\begin{align}
\label{two_vacuum_counts}
\# \text{vacua}\big|_{k_{\mathrm{eff}}=n+\ell}  &{}= \binom{n+\ell-1}{n},
\nonumber \\
\# \text{vacua}\big|_{k_{\mathrm{eff}}=n+1+\ell}  &{}= \binom{n+\ell}{n}.
\end{align}
Both counts are positive precisely when $\ell\ge1$, since $k_{\mathrm{eff}}^{-}=n+\ell$
falls below $h^{\vee}$ at $\ell=0$ and supersymmetry is then broken on one side.  At $\ell=1$ the
pair is $(1,\,n+1)$, and it is a precise detail that any proposed dual must reproduce.  We note
in passing that $\ell$ has an intrinsic meaning on this side of the duality.  Since
$h^{\vee}=n+1$, the affine level reached under the deformation with $k_{\mathrm{eff}}^{+}$ is
$\ell'=\ell$ itself, so the parameter labelling the family is the level of
$\widehat{\mathfrak{usp}}(2n)$ that the theory flows to under one of the two real mass
deformations, the other giving $\ell'=\ell-1$.

The range of $\ell$ is also seen from the index. Expanding \eqref{ind_USp2n_Nf1_full} at
half-integral $k$, we find that it vanishes identically for $\ell<0$, equals $1$ identically at
$\ell=0$, and is a non-trivial series for $\ell\ge1$, as we have checked at $n=1$ and $n=2$.  The
first regime is the supersymmetry breaking just described and the second is a theory with no
local operators at all, so $\ell\ge1$ is not a choice of level but the statement that the theory
is non-trivial.  What distinguishes the three regimes is invisible in the Gaiotto-Witten
description of section~\ref{sec_electric_susy}, in which the $\mathcal{N}=4$ supersymmetry is
present at every level.  The dual assigned to each $\ell$ in section~\ref{sec_matching} accounts
for the pattern, since the Abelian theory to which it is matched exists only for $\ell\ge1$.

\subsection{Global symmetries and the moduli space}
\label{sec_electric_moduli}
The global symmetry is
\begin{align}
U(1)_a \times U(1)_R\,,
\end{align}
with no non-Abelian factor, because there is only one fundamental. 
Since $\pi_1\big(USp(2n)\big)$ is trivial, there is no topological symmetry $U(1)_J$.
Nor is there a one-form symmetry, since $Q$ is charged under the center
$\Zb_2\subset USp(2n)$, 
so the would-be $\Zb_2$ one-form symmetry is completely broken and there is no discrete gauging ambiguity to keep track of.

The only candidate gauge invariant built from $Q$ alone is the meson
\begin{align}
\label{meson_vanishes}
M=J_{ab}Q^a Q^b\equiv 0,
\end{align}
which vanishes identically because a symplectic form $J_{ab}$, $a,b=1,\cdots,2n$ is antisymmetric 
while the components of $Q$ commute. 
There is therefore no Higgs branch.  
This should be distinguished from the \emph{real} moment map $\mu^{A} = Q^{\dagger} T^{A} Q$, $T^{A}\in\mathfrak{usp}(2n)$,
which need not vanish for $Q\neq 0$.  In the presence of a CS term the vacuum equations read
\begin{align}
\label{electric_vacuum_eqs}
\frac{k}{2\pi}\,\sigma^{A} + Q^{\dagger} T^{A} Q = 0, \qquad \sigma Q = 0,
\end{align}
where $\sigma = \sigma^{A}T^{A}$ is the real scalar of the $\mathcal{N}=2$ vector multiplet, 
valued in the adjoint of $\mathfrak{usp}(2n)$.
Contracting the first with $\sigma^{A}$ and using the second gives $\frac{k}{2\pi}|\sigma|^{2}=0$, hence $\sigma=0$ for $k\neq 0$. 
The remaining condition $\mu^{A}=0$ then forces $Q=0$ so the only solution is the origin. 

Since $Q$ contributes $2n$ fermions of unit $U(1)_a$ charge, the axial contact term is integrally quantized,
\begin{align}
\label{electric_contact}
\delta k_{aa} = \tfrac12\cdot 2n\cdot 1^2 = n \in \Zb\,,
\end{align}
so that $k_{aa}\in\Zb$. 
The mixed and gravitational contact terms $k_{aR},\,k_{RR},\,k_{gg}$ are additional data, 
not determined by the Lagrangian alone.

\subsection{Monopole operators}
\label{sec_monopoles}
Let $m = (m_1,\dots,m_n) \in \Zb^n$ be the GNO flux, valued in the coroot lattice of $USp(2n)$. 
The positive roots are $2e_i$ (long) and $e_i \pm e_j$, $i<j$ (short). 
The weights of the representation $\mathbf{2n}$ are $\pm e_i$. 
The Weyl group is the group of signed permutations, $W = \Zb_2^{\,n} \rtimes S_n$ of order $2^n n!$, 
so that fluxes may be restricted to the fundamental domain
\begin{align}
m_1 \ge m_2 \ge \dots \ge m_n \ge 0.
\end{align}

The quantum numbers of the bare monopole $V_m$ are
\begin{align}
R[V_m]&=
\tfrac{1}{2}(1-r_Q)\sum_{\rho \in \mathbf{2n}} |\rho(m)|-\sum_{\alpha > 0} |\alpha(m)|
\nonumber\\
&= (1-r_Q)\sum_i |m_i|-2\sum_i |m_i|-\sum_{i<j}\big( |m_i+m_j| + |m_i - m_j| \big),
\end{align}
\begin{align}
a[V_m] &= -\sum_i |m_i|,
\end{align}
and
\begin{align}
\label{monopole_gauge_charge}
\text{gauge charge under $i$-th Cartan } U(1)_i &: \quad 2k\, m_i
{}\;=\;(2n+1+2\ell)\,m_i \,.
\end{align}

Equation (\ref{monopole_gauge_charge}) deserves comment. 
In general the induced gauge charge of a bare monopole receives, besides the CS contribution $2k\,m_i$, a 1-loop piece
\begin{align}
-\frac12\sum_{\rho\in \mathbf{2n}}\rho_i\,|\rho(m)|+\frac12\sum_{\alpha\in\Delta}\alpha_i\,|\alpha(m)|.
\end{align}
Here both sums vanish identically. 
The roots come in pairs $\pm\alpha$, and the fundamental of $USp(2n)$ is pseudo-real, 
so its weights come in pairs $\pm e_i$. 
Hence the charge is \emph{purely} the Chern-Simons contribution, and it is a linear function of $m$.

Crucially, because of the CS term, the bare monopole is \emph{not} gauge invariant. 
It carries $2k\,m_i$ units of electric charge under the $i$-th Cartan. 
A genuine local operator must be dressed by modes of $Q$ and by $W$-boson/gaugino modes carrying the compensating charge.

\paragraph{Example}
For the minimal flux $m = (1,0,\dots,0)$ one finds
\begin{align}
R[V_m] = 1-r_Q-2n,
\qquad
a[V_m] = -1,
\qquad
\text{charge under } U(1)_1 = 2k=2n+1+2\ell,
\end{align}
so the bare operator is not gauge invariant and has a large negative R-charge. 
Dressing with $p$ modes of $Q$ of weight $-e_1$ adds $(-p,\,+p,\,+p\,r_Q)$ to the (gauge, $a$, $R$)-charges. 
Gauge invariance requires $p = 2k$ if only these modes are used, giving
\begin{align}
R = 1-2n+(2k-1)\, r_Q\,,
\qquad
a = 2k-1\,.
\end{align}
At $\ell=1$ these are $R=1-2n+(2n+2)r_Q$ and $a=2n+2$.

\subsection{Constraints on the dual}
\label{sec_dual_constraints}
Before constructing the dual we record two constraints that strongly limit the possibilities.
We state them at general $\ell$ and then specialize to $\ell=1$, which is the case we
construct in section~\ref{sec_tadpole_theory}.

\paragraph{(i) Vacuum counting}
Any dual must reproduce the pair $\big(\binom{n+\ell-1}{n},\binom{n+\ell}{n}\big)$ of
\eqref{two_vacuum_counts} under the two real mass deformations. 
We begin with the simplest guess, an Abelian theory with a \emph{single} gauge node. 
The answer we eventually propose has $n(2\ell - 1)$ nodes, so this is a warm-up rather than a derivation. 
However, it is worth going through because it produces the seed case $n = \ell = 1$, 
and because the precise way it fails for general $n$ is what forces the multi-node structure of the tadpole theory for $\ell = 1$ described in section~\ref{sec_tadpole_theory}. 
Consider then a $U(1)_{k'}$ gauge theory with chirals of charges $q_1,\dots,q_N$. 
The pure $\mathcal{N}=2$ $U(1)_{k}$ theory has $|k|$ vacua, and the mass deformations
(sending the mass of all chirals to either plus or minus infinity)
give $k'^{\pm}_{\mathrm{eff}} = k' \pm \tfrac12\sum_i q_i^2$. Matching the two counts
requires
\begin{align}
\label{dual_condition}
\sum_{i=1}^{N} q_i^{\,2} = \binom{n+\ell-1}{n-1}\,,\qquad
k' = \frac12\left[\binom{n+\ell}{n}+\binom{n+\ell-1}{n}\right],
\end{align}
which at $\ell=1$ is $\sum_i q_i^{2}=n$ and $k'=\frac{n+2}{2}$, in the case
$(k'^{-}_{\mathrm{eff}}, k'^{+}_{\mathrm{eff}}) = \big(\binom{n+\ell-1}{n},
\binom{n+\ell}{n}\big)$.\footnote{Writing $A$ and $B$ for the two entries of \eqref{two_vacuum_counts}, the other
sign choices give either the same up to the sign of $k'$, or the alternative
$\sum_i q_i^{\,2}=A+B$ with $k'=\tfrac12(B-A)$, which at $\ell=1$ reads
$\sum_i q_i^{\,2}=n+2$ with $k'=\pm\tfrac{n}{2}$.  There the special role passes from $n$ to
$n+2$ being a sum of squares, but the conclusion below is unchanged.}
One checks that the $U(1)$ parity anomaly is then automatic, since
$k' - \tfrac12\sum_i q_i^2 = 1 \in \Zb$ for every $n$.

Only the two uniform deformations have been used here, and it is worth saying why the others
are not available.  For $N>1$ one may also send some masses to plus infinity and others to
minus infinity, giving $k'_{\mathrm{eff}}=k'+\tfrac12\sum_i\epsilon_i q_i^{2}$ with
$\epsilon_i=\pm1$.  Such deformations are turned on through the spurious flavor symmetries
$U(1)^{N-1}$ acting on the chirals, which have no counterpart in the symplectic theory, whose
only flavor symmetry is the single $U(1)_a$.  Constraint (ii) below is precisely the requirement
that they be absent, and the mismatch of deformation spaces at $N>1$ is another symptom of the
same tension.  In the Abelian theories we eventually construct they are indeed absent, though
not by being lifted.  The charge matrix there is $\mathbf Q=\mathbb{I}$, so the flavor
symmetries are gauged away entirely, and what the monopole superpotential removes is the
topological symmetry, leaving the single factor that maps to $U(1)_a$.

\paragraph{(ii) Global symmetry}
The symplectic theory has only $U(1)_a\times U(1)_R$ global symmetry. 
On the Abelian side the topological symmetry $U(1)_J$ must map to $U(1)_a$, 
and \emph{no further} flavor symmetry may survive. 
A theory with $N$ chirals has at least $U(1)^{N}$ acting on them (or potentially a non-Abelian flavor symmetry $U(N)$ or a subgroup of $U(N)$), one combination of which is gauged, leaving at least $U(1)^{N-1}$ spurious factors. 
Hence either $N=1$, or a superpotential must lift the extra symmetries. 

If we assume the case $N=1$ in \eqref{dual_condition}, we have the proposal
\begin{align}
\label{N1_proposal}
USp(2n)_{n+\frac32} \ \text{with one } \mathbf{2n}
\quad\leftrightarrow\quad
U(1)_{\frac{n+2}{2}} \ \text{with one chiral of charge } q,\quad q^2=n\,,
\end{align}
which is already restricted to the case that $n$ is a perfect square. For $n=1$ this reads
\begin{align}
\label{n1_duality}
SU(2)_{\frac52}\ \text{with one doublet}
\quad\leftrightarrow \quad
U(1)_{\frac32}\ \text{with one charge-$1$ chiral},
\end{align}
with vacuum counts $(|{\tfrac32} - \tfrac12|, |{\tfrac32} + \tfrac12|) = (1, 2)$ on both sides. As we will explain, this is indeed the correct proposal, but for $n > 1$ we need a different generalization.

For $n$ not a perfect square the single node fails. 
Constraint (i) asks for $\sum_i q_i^2 = n$, which wants several chirals, 
while constraint (ii) allows only one unless something lifts the spurious symmetries. 
It is not clear how to achieve this, and anyway it would be surprising that the structure of the Abelian theory, in particular the number of chirals and their charges, was so sensitive to $n$ being a perfect square or not.
Enlarging the gauge group to $U(1)^r$ resolves the tension. 
The charges are then a matrix $q_{Ii}$ obeying the matrix analogue of (\ref{dual_condition}), 
and the extra symmetries can be removed, in a way that turns out to combine both options,
part of the work being done by the gauging itself and part by a superpotential built from
monopole operators. 
We construct that theory in the next section, where the two constraints are satisifed with $r=n$ nodes, one charge-one chiral per node and $r-1$ superpotential terms.
The two constraints are necessary but not sufficient once $\ell>1$, as we note at the end of
section~\ref{sec_electric_susy}, once the supersymmetry of \eqref{USp2n_Nf1_charges} has been
pinned down.

\subsection{The full index}
\label{sec_index}
For the theory (\ref{USp2n_Nf1_charges}) the full index reads
\begin{align}
\label{ind_USp2n_Nf1_full}
\mathcal{I}^{USp(2n)}
&=\frac{1}{2^n\, n!} \sum_{m_i \in \Zb} \oint 
\left( \prod_{i=1}^{n} \frac{ds_i}{2\pi i\, s_i}\, (-s_i)^{2k m_i} \right) 
q^{-\sum_i |m_i| \,-\, \frac12 \sum_{i<j} |m_i \pm m_j|}
\nonumber \\
& \left( \prod_i \big(1 - q^{|m_i|} s_i^{\pm 2}\big) \right)
\prod_{i<j}
\big(1 - q^{\frac{|m_i \pm m_j|}{2}} s_i s_j^{\pm}\big)
\big(1 - q^{\frac{|-m_i \pm m_j|}{2}} s_i^{-1} s_j^{\pm}\big)
\nonumber \\
&\big(q^{\frac{1-r_Q}{2}} a^{-1}\big)^{\sum_i |m_i|}
 \prod_i
 \frac{(q^{1 - \frac{r_Q}{2} + \frac{|m_i|}{2}} a^{-1} s_i^{\mp};q)_{\infty}}
 {(q^{\frac{r_Q}{2} + \frac{|m_i|}{2}} a\, s_i^{\pm};q)_{\infty}}.
\end{align}
Here $(x;q)_{\infty} = \prod_{\ell \ge 0}(1 - x q^{\ell})$, and a superscript
$\pm$ (or $\mp$) inside a product means that both signs are taken and the
corresponding factors are multiplied.

The prefactor $1/(2^n n!)$ is the inverse order of the Weyl group. 
Combined with the unrestricted sum over fluxes it implements the projection onto gauge singlets. 
Since $USp(2n)$ is simply connected, that sum runs over the full coroot lattice, $m \in \Zb^n$. 
The classical Chern-Simons contribution $(-s_i)^{2k m_i}$ is normalized with the trace in $\mathbf{2n}$, and carries the sign discussed above. 
The two remaining flux-dependent prefactors encode the quantum numbers of the bare monopole. 
The power of $q$,
\begin{align}
q^{-\sum_i|m_i| - \frac12\sum_{i<j}|m_i \pm m_j|}
= q^{-\frac12 \sum_{\alpha>0}|\alpha(m)|}\,,
\end{align}
is a contribution from the vector multiplet, 
while $\big(q^{\frac{1-r_Q}{2}} a^{-1}\big)^{\sum_i |m_i|}$ supplies the matter contribution to the R-charge together with the axial charge $a[V_m] = -\sum_i |m_i|$.
The 1-loop determinants then follow the usual pattern of one factor per
weight. For the vector multiplet each root $\alpha$ contributes
$\big(1 - q^{|\alpha(m)|/2} s^{\alpha}\big)$, so that the long roots $\pm 2 e_i$ give
the factors $\big(1 - q^{|m_i|} s_i^{\pm 2}\big)$, and the short roots
$\pm e_i \pm e_j$ give $q^{|m_i \pm m_j|/2}$ with the corresponding fugacity.
The ratio of $q$-Pochhammer symbols is the one-loop contribution of the fundamental chiral,
the two signs corresponding to the weights $\pm e_i$ of $\mathbf{2n}$.

Note that if we define
\begin{align}
\label{ahat_electric}
\hat{a}\equiv a q^{\frac{r_Q}{2}}, 
\end{align}
then the matter part of (\ref{ind_USp2n_Nf1_full}) becomes
\begin{align}
\big(q^{\frac12}\hat{a}^{-1}\big)^{\sum_i |m_i|}
\prod_i \frac{(q^{1+\frac{|m_i|}{2}}\hat{a}^{-1} s_i^{\mp};q)_{\infty}}{(q^{\frac{|m_i|}{2}}\hat{a}\, s_i^{\pm};q)_{\infty}},
\end{align}
so that \emph{all} dependence on the trial R-charge is absorbed into $\hat{a}$.
This has two consequences:
\begin{enumerate}
  \item The superconformal R-charge must be determined independently, e.g.\ by $F$-maximization \cite{Jafferis:2010un}.
        For the theory at hand it is fixed instead by the enlarged supersymmetry of
        section~\ref{sec_electric_susy}, at $r_Q=\tfrac12$.
  \item For computational purposes one may set $r_Q$ to any convenient value
        and restore it at the end by rescaling $a$ by an appropriate power of $q$. For the assignment $r_Q=0$ the purely $q$-graded expansion (i.e.\ the unflavored index where $a = 1$) degenerates
        and the refined truncation by $U(1)_a$ charge described in
        section~\ref{sec_index_matching} is required.
\end{enumerate}
The unit-circle contour separates the poles of
$1/(q^{\frac{|m|}{2}}\hat{a} s^{\pm};q)_{\infty}$ from their images provided
\begin{align}
\label{electric_contour}
  |q| < |\hat{a}| < 1.
\end{align}
This is the standard regime in which the index is a well-defined power series.

Take $m = (1,0,\dots,0)$.  
The monopole contribution is
\begin{align}
q^{-|m_1| - \frac12 \sum_{j>1}\big(|1+0| + |1-0|\big)}=q^{-1 - (n-1)}=q^{-n}\,,
\end{align}
and the flux axial factor is $q^{1/2}\hat{a}^{-1}$.  
One needs $s_1^{-2k}$ to cancel the CS phase.  
The sources of negative $s_1$ charge are:
\begin{itemize}
  \item the long-root factor $q\, s_1^{-2}$: two units at cost $q$;
  \item for each $j = 2,\dots,n$, the pair $q^{1/2} s_1^{-1} s_j$ and
        $q^{1/2} s_1^{-1} s_j^{-1}$: two units of $s_1^{-1}$ at cost $q$, with the
        $s_j$ charges cancelling between the two;
  \item modes of $Q$: one unit each at cost $q^{1/2}\hat{a}$.
\end{itemize}
Using the vector multiplet maximally gives $s_1^{-2n}$ at cost $q^{n}$, and the
remaining $2k-2n=2\ell+1$ units must come from $Q$, at cost
$q^{\ell+\frac12}\hat{a}^{2\ell+1}$.  
The total is
\begin{align}
\label{electric_m1_min}
q^{-n} \cdot q^{\frac12}\hat{a}^{-1} \cdot q^{n} \cdot\big(q^{\frac12}\hat{a}\big)^{2\ell+1}
&=q^{1+\ell}\,\hat{a}^{2\ell}.
\end{align}
At the other extreme, taking all $2k$ units from $Q$ and none from the vector
multiplet gives
\begin{align}
\label{electric_m1_max}
q^{-n} \cdot q^{\frac12}\hat{a}^{-1} \cdot \big(q^{\frac12}\hat{a}\big)^{2k}
=q^{1+\ell}\hat{a}^{2n+2\ell}.
\end{align}
So, we see that trading vector multiplet dressing for matter dressing does not change the $q$-degree.  
Both extremes sit at $q^{1+\ell}$, which is $q^{2}$ at $\ell=1$, so the rank $n$ is
invisible at this order for every level.
\footnote{We have derived this here for the $m=(1,0,\dots,0)$ sector but we believe this is true for all sectors, and numerical expansion of the index for low values of $n$ supports this.}
We stress that \eqref{electric_m1_min} and \eqref{electric_m1_max} bound only the
range of $\hat a$-charges that the $m=(1,0,\dots,0)$ sector can contribute at order
$q^{1+\ell}$.  Whether a given contribution survives in the index is determined by the full
expansion, in which cancellations against other contributions of the same weight can
and do occur, as we will see in section~\ref{sec_index_matching}.
\subsection{Supersymmetry enhancement}
\label{sec_electric_susy}

The consequences of pseudo-reality noted below \eqref{USp2n_Nf1_charges} are all restrictions.
Stated positively, the same fact makes something available. 
Because $\mathbf{2n}$ is pseudo-real, a single chiral multiplet in it is a \emph{half}-hypermultiplet, 
and \eqref{USp2n_Nf1_charges} is then of the form to which the construction of Gaiotto and Witten \cite{Gaiotto:2008sd} applies (see also \cite{Hosomichi:2008jd}). 
The input of that construction is a symplectic space $V$, that is the representation in which
the matter transforms, equipped with a non-degenerate antisymmetric form $\omega_{AB}$, together
with a gauge group embedded in the group $USp(V)$ of transformations preserving $\omega$, and a
level.  Lowering the second index of the generators with $\omega$
\begin{align}
\label{t_lowered}
t^{m}_{AB}=\omega_{AC}\,(t^{m})^{C}{}_{B}\,,
\end{align}
gives objects symmetric in $AB$, since preserving $\omega$ is the statement that the
antisymmetric part vanishes, so the $t^{m}_{AB}$ live in $\mathrm{Sym}^{2}V$.
We take $V=\mathbf{2n}$ with the gauge group all of $USp(2n)$, for which
$\mathrm{Sym}^{2}V\cong\mathfrak{usp}(2n)$ and the embedding is the identity. 
The condition under which the resulting Lagrangian has $\mathcal{N}=4$ supersymmetry is the fundamental identity 
$k_{mn}t^{m}_{(AB}t^{n}_{C)D}=0$ with $A,B,C$ symmetrized over cyclic permutations. 
For a simple factor $k_{mn}\propto\delta_{mn}$, and the $t^{m}_{AB}$ are a basis of
$\mathrm{Sym}^{2}V$, 
so that $\sum_{m}t^{m}_{AB}t^{m}_{CD}\propto\omega_{AC}\omega_{BD}+\omega_{AD}\omega_{BC}$ and the three cyclic terms cancel in pairs,
\begin{align}
(\omega_{AC}+\omega_{CA})\,\omega_{BD}
+(\omega_{BC}+\omega_{CB})\,\omega_{AD}
+(\omega_{AB}+\omega_{BA})\,\omega_{CD}=0\,,
\nonumber
\end{align}
which is the antisymmetry of $\omega$ and nothing more.  The Lagrangian the construction then
produces is ours.  Its holomorphic $\mathcal{N}=2$ superpotential
$k_{mn}\mu^{m}_{\mathbb C}\mu^{n}_{\mathbb C}$, with $\mu^{m}_{\mathbb C}=t^{m}_{AB}Q^{A}Q^{B}$,
is proportional to $(J^{ab}Q_aQ_b)^{2}$ and vanishes by \eqref{meson_vanishes}, in agreement with
$\mathcal{W}=0$, and its sextic scalar potential is the one generated here by integrating out
$\sigma$.  We conclude that \eqref{USp2n_Nf1_charges} has $\mathcal{N}=4$ supersymmetry,
manifestly and at every level, in the sense that the eight supercharges act on the fields of
\eqref{USp2n_Nf1_charges} itself and no infrared dynamics is invoked.  The $\mathcal{N}=2$
R-symmetry is then the one contained in $SU(2)_{H}\times SU(2)_{C}$, so that $r_Q=\tfrac12$ is
fixed by the algebra and not by F-maximization.  The eight supercharges are symmetries of the
Lagrangian and are not anomalous, so this is a statement about the quantum theory, and one that
requires no infrared input, the supercharges being visible already in the ultraviolet.

We stress that this is an application of the construction and not an appeal to a
classification.  The superalgebra associated with $\big(\mathfrak{usp}(2n),\mathbf{2n}\big)$ is
$\mathfrak{osp}(1|2n)$, whose orthogonal part is trivial, so the case is the degenerate $N=1$
member of the $OSp(N|M)$ family and is not among the examples written out explicitly in
\cite{Gaiotto:2008sd,Hosomichi:2008jd}.  Nothing degenerates in the field theory, however,
because the input is the pair $(G,V)$ and here only $USp(2n)$ is gauged, so that the usual
requirement of equal and opposite levels on two factors has nothing to constrain and the level
is free.  The degeneracy is a feature of the labelling by a superalgebra, not of the theory.
One caveat does remain, that
the half-integral quantization forced by the parity anomaly is a quantum effect not treated in
those references.

That construction fixes the amount of supersymmetry at $\mathcal{N}=4$ exactly, and no
more.  Orthosymplectic Chern-Simons matter theories can enhance further,
\cite{Hosomichi:2008jb} finding $\mathcal{N}=5$ for $O(N)\times USp(2M)$ and $\mathcal{N}=6$ for
$O(2)\times USp(2M)$.  This does not happen for \eqref{USp2n_Nf1_charges}, and the reason is a
counting of matter.  The
supercharges sit in the vector of the R-symmetry $SO(\mathcal{N})$, so at $\mathcal{N}=4$ one has
$SO(4)=SU(2)_H\times SU(2)_C$ and the matter splits into hypermultiplets acted on by $SU(2)_H$
and twisted hypermultiplets acted on by $SU(2)_C$, in representations $R_1$ and $R_2$.  At
$\mathcal{N}=5$ the scalars must fill the spinor $\mathbf{4}$ of $SO(5)$, which decomposes under
$SO(4)$ as $(\mathbf{2},\mathbf{1})\oplus(\mathbf{1},\mathbf{2})$, so a hypermultiplet and a
twisted hypermultiplet in the \emph{same} gauge representation are needed.  This is the
condition $R_1=R_2$ of \cite{deMedeiros:2009eq}, and it means that the $\mathcal{N}=5$ theories
carry twice the matter of the corresponding $\mathcal{N}=4$ ones.  The pattern is familiar from
the unitary case, where the Gaiotto-Witten theory for $\mathfrak{su}(N|M)$ has one bifundamental
hypermultiplet and is $\mathcal{N}=4$, while the ABJM theory has two and is $\mathcal{N}=6$.
Our theory has a single chiral multiplet, which is half of a hypermultiplet and no twisted
hypermultiplet at all, so $R_2=0$.  The $\mathcal{N}=5$ theory at $N=1$ would be $USp(2n)$ with
two fundamental chirals, a different theory.

The index of section~\ref{sec_index} confirms this, and it is worth spelling out how, since
we use the same reasoning in section~\ref{sec_index_matching}.  Supersymmetry enhancement is visible in the
$\mathcal{N}=2$ index because the currents responsible for it sit in short multiplets whose
quantum numbers are fixed by the superconformal algebra, so that each such multiplet contributes
to a definite coefficient of \eqref{ind_USp2n_Nf1_full}.  This is the diagnostic used in \cite{Gang:2018huc} and in
\cite{Creutzig:2024ljv}, and we follow it here.  Two entries of the dictionary are
relevant.  A conserved flavor current multiplet, whose bottom component is a real scalar of
dimension one and vanishing R-charge, contributes $-q$, weighted by the flavor fugacity of the
current.  This is the standard way in which an enlarged global symmetry is read off from the
index, and it is how the single $-q$ of our expansions is identified below
\eqref{rank1_expansion_rQhalf} with the one $U(1)$ current of the theory.  The multiplet
containing the supercharges by which $\mathcal{N}=2$ is enhanced contributes instead at
$q^{3/2}$, weighted by the fugacity of the $U(1)$ that becomes part of the enlarged R-symmetry,
and its appearance is the usual index diagnostic of enhancement
\cite{Gang:2018huc,Creutzig:2024ljv}.  For the general structure of superconformal multiplets
see \cite{Cordova:2016emh}, whose emphasis is on $d>3$ but whose tables of current multiplets
are the ones invoked in this context, and \cite{Evtikhiev:2017heo} for when such an index
argument can fail.

Applied to $\mathcal{N}=4$, whose R-symmetry is $SO(4)=SU(2)_{H}\times SU(2)_{C}$, this
predicts a single flavor-current-like contribution $-q$ from the axial combination of the two
Cartans, which on this side is $U(1)_a$, together with a pair of enhancing supercurrents of
$U(1)_a$ charge $\pm2$ at $q^{3/2}$.  That is exactly the opening
$1-q+(a^{2}+a^{-2})q^{3/2}$ of the expansions of \eqref{ind_USp2n_Nf1_full}, and the two
contributions at $q^{3/2}$ are the
gauge invariants $J_{ab}Q^{a}\partial Q^{b}$ and $J^{ab}\bar\psi_{a}\bar\psi_{b}$, the meson
itself vanishing by \eqref{meson_vanishes}.

An enhancement to $\mathcal{N}=5$ would add to both.  The R-symmetry adjoint grows from
$\mathbf{6}=(\mathbf{3},\mathbf{1})\oplus(\mathbf{1},\mathbf{3})$ of $SO(4)$ to
$\mathbf{10}=\mathbf{6}\oplus\mathbf{4}$ of $SO(5)$, and writing the charges of the extra
$\mathbf{4}=(\mathbf{2},\mathbf{2})$ as $(R,A)=(J_{H}+J_{C},J_{H}-J_{C})$ gives $(\pm1,0)$ and
$(0,\pm1)$.  The last two are conserved currents of vanishing R-charge and $U(1)_a$ charge
$\pm2$, so by the first entry of the dictionary they contribute at order $q$ weighted by
$a^{\pm2}$, and an $\mathcal{N}=5$ theory would begin $1-(1+a^{2}+a^{-2})q+\dots$ rather than
$1-q+\dots$.  The supercharges likewise grow from the $\mathbf{4}$ to the $\mathbf{5}$ of the
R-symmetry, and since $\mathbf{5}\to\mathbf{4}\oplus\mathbf{1}$ the extra one is an $SO(4)$
singlet, so by the second entry it would contribute at $q^{3/2}$ with no $a$ dependence.
Neither term is present.  The coefficient of $q$ is exactly $-1$ and that of $a^{0}q^{3/2}$
vanishes, at $n=1$, $2$ and $3$ alike.

One caveat applies to arguments of this kind, and we state it once here.  The index is a
signed count, so a contribution can in principle be cancelled by another short multiplet with
the same quantum numbers, and an absence is conclusive only when such multiplets are excluded.
This is the reason the enhancement to $\mathcal{N}=4$ was checked semiclassically in
\cite{Gang:2018huc} rather than read off from the index alone.  Here
the conclusion we draw is the negative one, that $\mathcal{N}=5$ does not occur, and for that
the argument of the previous paragraph is a consistency check on the counting of matter given
above rather than an independent proof.

This bears on the constraints of section~\ref{sec_dual_constraints}, which are necessary
but not sufficient once $\ell>1$, so that the tightness of the argument there is special to
$\ell=1$.  Take $n=1$ and $\ell=2$, where \eqref{dual_condition} asks for
$\sum_i q_i^{2}=\binom{2}{0}=1$ and $k'=\tfrac52$.  A single node with one chiral of unit charge
meets this, and having one chiral it carries no spurious flavor symmetry either, so it satisfies
both constraints.  Its index, however, is
\begin{align}
1-q+\big(\zeta^{-1}-2\big)q^{2}+\dots\,,
\nonumber
\end{align}
in terms of the fugacity $\zeta$ for its own topological symmetry, with no term at order
$q^{3/2}$ at all, whereas the symplectic theory at $k=\tfrac72$ has the pair
$(a^{2}+a^{-2})q^{3/2}$ that signals the enhancement.  The comparison needs no dictionary
between the two fugacities, since what distinguishes the two series is the presence or absence
of the order $q^{3/2}$ altogether.  The two theories are therefore
different.  What the constraints do not encode is that the dual must itself be a rank-$0$ theory
with $\mathcal{N}=4$ supersymmetry, and it is that requirement which the three conditions of
section~\ref{sec_magnetic_level_choice} implement at $\ell=1$ and which the construction of
\cite{Creutzig:2024ljv} implements in general.

\subsection{A relation among the symplectic theories}
\label{sec_level_pair}

Before turning to the dual we record a property of the indices \eqref{ind_USp2n_Nf1_full}
that involves only the theories of this section.  Write the level as $k=n+\tfrac12+\ell$ as
above.  The combination $n+\ell$ is what fixes the level, and it is symmetric under exchanging
the rank with the level parameter, so the two theories
\begin{align}
USp(2n)_{\,k}\quad\text{and}\quad USp(2\ell)_{\,k}\,,
\qquad k=n+\ell+\tfrac12\,,
\nonumber
\end{align}
each with a single fundamental, sit at one and the same Chern-Simons level and carry the same
amount of matter, differing only in the rank of the gauge group.  We find that their indices
agree up to charge conjugation,
\begin{align}
\label{level_pair}
\mathcal{I}^{USp(2n)}\big(q;\hat y\big)\Big|_{k=n+\ell+\frac12}
=\mathcal{I}^{USp(2\ell)}\big(q;\hat y^{-1}\big)\Big|_{k=n+\ell+\frac12}\,,
\end{align}
where $\hat y=q^{-1/2}\hat a^{2}$, with $\hat a$ the combination \eqref{ahat_electric} in which
the R-charge dependence is carried, so that the inversion acts on the fugacity of
\eqref{ind_USp2n_Nf1_full} as $a^{2}\to a^{-2}q^{1-2r_Q}$.  The notation $\hat y$ anticipates
section~\ref{sec:tadpole}, where the same symbol denotes the corresponding combination on the
Abelian side, the two being identified by the dictionary of section~\ref{sec_matching}.

The first non-trivial case is $(n,\ell)=(1,2)$, where both sides are at $k=\tfrac72$, and
there we find $\mathcal{I}^{USp(2)}(q;\hat{y})=\mathcal{I}^{USp(4)}(q;\hat{y}^{-1})$ through order $q^{5}$ at
$r_{Q}=\tfrac12$ and through order $q^{3}$ at $r_{Q}=\tfrac13$.  Note that this relates a rank-one
to a rank-two gauge theory, so no field redefinition can be responsible.

The self-conjugate case $n=\ell$ of \eqref{level_pair} says that the index of
$USp(2n)_{2n+\frac12}$ is invariant under $\hat y\to\hat y^{-1}$ by itself, and that no other
member of the family is.  This too we find, at $r_{Q}=\tfrac12$,
\begin{align}
\begin{array}{c|ccc}
 & \ell=1 & \ell=2 & \ell=3 \\ \hline
n=1 & \text{symmetric} & q^{5/2} & q^{5/2} \\
n=2 & q^{5/2} & \text{symmetric} & q^{4}
\end{array}
\nonumber
\end{align}
where an entry records the order at which the invariance first fails.  Only the diagonal is
symmetric.  The theories treated in the rest of this paper sit at $\ell=1$, which meets the
diagonal only at $n=1$, and this is the origin of the charge-conjugation symmetry of the rank-one
index that we will meet in section~\ref{sec_matching_rank1}.

We arrived at \eqref{level_pair} from the duality proposed in
section~\ref{sec_matching} together with the mirror symmetry of \cite{Creutzig:2024ljv}, as
explained there.  It is, however, nothing other than the Giveon-Kutasov duality
\cite{Giveon:2008zn,Kapustin:2011vz,Willett:2011gp} continued to a single fundamental.  For
$USp(2N)_{k}$ with $2N_{f}$ fundamentals that duality gives a dual $USp(2\tilde N)_{-k}$ with
\begin{align}
\label{GK_rank}
\tilde N = N_{f}+|k|-N-1\,,
\end{align}
and setting $2N_{f}=1$, $N=n$ and $k=n+\ell+\tfrac12$ gives
$\tilde N=\tfrac12+n+\ell+\tfrac12-n-1=\ell$, which is \eqref{level_pair}, the reversal of the
level being the charge conjugation.  The self-dual case is $\tilde N=N$, that is $\ell=n$.

The absence of a superpotential, which would be surprising for a Seiberg-like duality, is
forced by the same counting.  The Giveon-Kutasov duality carries mesons
$M_{ij}=J_{ab}Q^{a}_{i}Q^{b}_{j}$ and a superpotential $\mathcal{W}=M\tilde q\tilde q$, and for a
symplectic gauge group the mesons are antisymmetric in the flavor indices, so there are
$\binom{2N_{f}}{2}$ of them.  At $2N_{f}=1$ this is $\binom{1}{2}=0$.  There are no mesons to
write, which is the statement \eqref{meson_vanishes} that the only candidate gauge invariant
built from $Q$ alone vanishes identically, and hence no superpotential.  What is unusual about
\eqref{level_pair} among Seiberg-like dualities is therefore not that a superpotential is
missing but that the meson sector is empty, but these are actually the same fact.

\section{Abelian rank-$0$ Chern-Simons theories}
\label{sec:tadpole}
The Abelian theories that will appear as duals are the two-parameter family
$\mathcal{T}_{\ell,n}$ of \cite{Creutzig:2024ljv}, one member for each pair of positive integers
$(\ell,n)$.  We begin with the member $\ell=1$, the tadpole theory, and develop it in full.
There are two reasons for considering it first.  It is the one member for which every step can be
carried out explicitly, since its superpotential is built from \emph{bare} monopole operators
and its Bethe vacua are available in closed form, and it is the case already met in the
literature, its rank-one member being the minimal theory of \cite{Gang:2018huc} and its higher
ranks the theories of \cite{Gang:2023rei}.  Section~\ref{sec_general_rank0} then treats general
$\ell$, following the same order point by point, so that where the tadpole discussion
generalizes, and where it does not, can be read off directly.

\subsection{The tadpole theory, $\ell=1$}
\label{sec_tadpole_theory}
We consider the 3d $\mathcal{N} = 2$ supersymmetric gauge theory with
\begin{itemize}
  \item Abelian gauge group $G = U(1)^r$;
  \item mixed Chern--Simons level matrix
        $K_r = 2\,C(T_r)^{-1}$, i.e.\ $(K_r)_{IJ} = 2\min(I,J)$,
        with $C(T_r)$ the Cartan matrix of the \emph{tadpole graph}
        $T_r = A_{2r}/\Zb_2$ \cite{Gang:2023rei,Gaiotto:2024ioj};
  \item $r$ chiral multiplets $\Phi^{(1)},\dots,\Phi^{(r)}$ with diagonal charge matrix
        $\mathbf{Q} = \mathbb{I}$, i.e.\ $\Phi^{(I)}$ has charge $+1$ under
        $U(1)_I$ and is neutral under the others.  We write
        $\Phi^{(I)} = (\phi^{(I)},\psi^{(I)})$ for its scalar and fermion components;
  \item monopole superpotential
        $\mathcal{W} = \sum_{I=1}^{r-1} V_{\mathbf{m}_I}$ with
        $\mathbf{m}_I = (\mathbf{0}_{I-2},-1,2,-1, \mathbf{0}_{r-I-1})$,
        i.e.\ $\mathbf{m}_I = C(T_r)\,\mathbf{e}_I$, the leading $-1$ being absent
        for $I=1$, where $\mathbf{m}_1 = (2,-1,\mathbf{0}_{r-2})$.
\end{itemize}
The field content and the charges under the gauge group, 
the surviving topological symmetry $U(1)_y$ and the R-symmetry $U(1)_R$ are
\begin{align}
\label{tadpole_charges}
\begin{array}{c|c|c|c}
& U(1)_1\times\cdots\times U(1)_r & U(1)_y & U(1)_R \\ \hline
\textrm{VM} & \textrm{level } K_r & 0 & 0 \\[2pt]
\Phi^{(I)} & \mathbf{e}_I & 0 & R^{(I)} \\[2pt]
V_{\mathbf m} & q_I[V_{\mathbf m}] & \vec A\cdot\mathbf m & R[V_{\mathbf m}]
\end{array}
\qquad
\vec A = (1,2,\dots,r).
\end{align}
We will discuss the charges of $V_{\mathbf m}$ later in this section and will see that the marginality of $\mathcal{W}$ forces
\begin{align}
\label{Rassignment}
R^{(1)}=\dots=R^{(r-1)}=0\,,
\qquad
R^{(r)} \ \text{unconstrained}\,,
\end{align}
so that the trial R-charge $R^{(r)}$ is a free parameter, exactly as $r_Q$ is on the symplectic side. 
We denote the theory by
\begin{align}
\label{tadpole_name}
U(1)^r_{K_r}\,,
\qquad
K_r = 2\,C(T_r)^{-1}\,,
\end{align}
with the $r$ charge-one chirals and the superpotential
\begin{align}
\label{W_tadpole}
\mathcal{W} \;=\; \sum_{I=1}^{r-1} V_{\mathbf m_{I}}\,,
\qquad
\mathbf m_{I} = C(T_{r})\,\mathbf{e}_{I}\,,
\end{align}
understood.  The sum stops at $r-1$.  Marginality of $V_{\mathbf m_{I}}$ forces $R^{(I)}=0$, as
we verify below, so including the term $I=r$ would force $R^{(r)}=0$ as well and remove the free
parameter of \eqref{Rassignment}.  We call it the \emph{tadpole theory} of rank $r$.

The name is meant literally. 
Every piece of the defining data is read off from the tadpole graph $T_r$.\footnote{Throughout, ``tadpole'' refers to the graph $T_r$, never to a one-point function or to a tadpole cancellation condition.}
The level matrix is $K_r = 2\,C(T_r)^{-1}$. 
The superpotential fluxes are the columns $\mathbf m_I = C(T_r)\,\mathbf{e}_I$. 
The surviving flavor symmetry is generated by the vector $\vec A$ obeying $C(T_r)\vec A = \mathbf{e}_r$ 
and the single node at which $C(T_r)$ fails to be a genuine Cartan matrix, $C(T_r)_{rr} = 1$ rather than $2$, 
is the node carrying the free R-charge $R^{(r)}$. 
These are not independent observations. 
They all follow from the single relation $K\,\mathbf m_I = 2\mathbf{e}_I$, as shown below. 
The monopole superpotential is left out of the name but is always understood. 
Without it the topological symmetry would be the full $U(1)^r$ rather than the single $U(1)_y$ of (\ref{tadpole_charges}).

The tadpole theory (\ref{tadpole_charges}) has a distinguishing feature which
is the Abelian counterpart of the one noted below (\ref{USp2n_Nf1_charges}): 
each node carries a single chiral of charge $+1$ and no conjugate partner, so here too
the matter cannot be paired into a self-conjugate set.  There is no pseudo-reality to appeal to
in the Abelian case, and the statement is simply that the matter is chiral.  
Consequently there is no flavor symmetry acting on the matter at all, 
and the usual Abelian mirror symmetries for $U(1)_k$ with $N_f$ flavors have no immediate counterpart here.
In addition, as we will discuss in detail in section~\ref{sec_LevQuAbel}, since the charge is odd, the CS level is half-integrally quantized.
In section~\ref{sec_matching} we will argue that this theory is the Abelian dual of the $USp(2n)$ theory of the previous section, with the identification $r = n$. 
In the present section we develop the theory on its own terms.

\paragraph{Relation to the literature}
Particular members of the family have been studied before, and it is worth locating them.
At $r=1$, where $K_1=2$ and no superpotential is possible, 
one has $U(1)$ at half-integral level with a single charge-one chiral, 
which up to orientation is the minimal theory of Gang and Yamazaki \cite{Gang:2018huc}. 
The level there is $-\tfrac32$ whereas $\widetilde K_1 = +\tfrac32$, so the two are related by parity. 
In the notation of section~\ref{sec_general_rank0} this is $\mathcal{T}_{1,1}$, the one
member for which both parameters take their smallest value.
For $r>1$ the level matrix $K_r$ and the fluxes $\mathbf m_I$ coincide with those of the Abelian Chern-Simons matter theories of Gang, Kim and Stubbs \cite{Gang:2023rei}, 
whose boundary chiral algebras are argued to realize the non-unitary Virasoro minimal models $M(2,2r+3)$. 
These are the members $\mathcal{T}_{1,r}$, the whole $\ell=1$ line of the family.  The
Abelian theories at $\ell>1$, which we will need in section~\ref{sec_matching}, do not appear
there.
See also \cite{Gang:2021hrd} for the general rank-$0$ framework and \cite{Dedushenko:2023cvd} for a review.

All of these are rank-$0$ theories.
The term was introduced in \cite{Gang:2021hrd} for 3d $\mathcal{N}=4$ SCFTs whose
Coulomb and Higgs branches are both points.
The Abelian theories of this section are a priori only $\mathcal{N}=2$, for which the
invariant statement is simply that the moduli space of supersymmetric vacua is a point.
This is the sense in which the term applies to the Abelian theories of \cite{Gang:2023rei},
and the conjectured infrared enhancement to $\mathcal{N}=4$ is what relates the two
characterizations.  On the symplectic side no such conjecture is needed, the enhancement
being manifest by section~\ref{sec_electric_susy}.
We recover that condition from the Lagrangian in section~\ref{sec_magnetic_moduli}. 
The absence of a Higgs branch follows because $\mathbf{Q}=\mathbb{I}$ admits no gauge invariant monomial in the $\Phi^{(I)}$, 
and the absence of a Coulomb branch follows because the Chern-Simons term gives the vector multiplets a topological mass, 
the monopole superpotential playing no role in either statement. 
It is worth stressing that rank-$0$ is a strong condition, and that the $USp(2n)$ theory of the previous section inherits it if the duality of section~\ref{sec_matching} holds.

What differs here is the treatment of the R-symmetry. 
In
\cite{Gang:2021hrd, Gang:2023rei} a reference R-charge assignment is fixed and a mixing parameter $\nu$ with the topological symmetry is then introduced, 
its superconformal value being determined afterwards by $F$-maximization \cite{Jafferis:2010un}. 
We find such a parameter to be redundant. 
The consistency conditions already leave a one-parameter family of admissible assignments, in the direction $K_r^{-1}\vec A$, 
and since $U(1)_y$ is the only global symmetry available this is precisely the direction along which mixing can act. 
Concretely, for the gauge invariant operators of section~\ref{sec_tadpole_monopoles} the shift $R^{(r)}\to R^{(r)}+\delta$ changes the R-charge by $-2\delta A$, 
which is exactly the effect of mixing with $\nu=-2\delta$ and in the index the entire dependence on $R^{(r)}$ enters through the single factor $q^{-R^{(r)} A}$, 
again as a mixing would. 
The two descriptions parametrize the same one-dimensional freedom, 
i.e.\ $R^{(r)}$ is not an extra choice made on top of $\nu$ but instead of it. 
We treat it as part of the defining data of the theory. 

There is a further reason to prefer this bookkeeping. 
The index is defined for any $\mathcal{N}=2$ theory equipped with a choice of $U(1)_R$, 
with no reference to a superconformal point, and it is that choice which the index grades by. 
Fixing a reference assignment and then hunting for $\nu$ inside the index is therefore not extracting new data. 
It is relabelling which $U(1)_R$ one has graded by, a choice already made in the Lagrangian. 
Keeping $R^{(r)}$ free from the outset makes the whole family manifest at once, 
and the superconformal point is then one value of $R^{(r)}$ among others rather than a prerequisite for writing the index down. 
Which value it is does not follow from anything in this section, and we do not determine it
here.  Section~\ref{sec_matching} fixes it from the symplectic side. 

The topological twist is likewise not additional structure from the present point of view, but a specialization. 
A topological twist requires the R-charge to be integrally quantized on all local operators, 
which by the grading discussed in section~\ref{sec_tadpole_index} happens exactly at $R^{(r)} \in \Zb+\tfrac12$. 
The two values $R^{(r)} = \pm\tfrac12$ are the ones appearing in \eqref{trivial_index}. 
We will not use the twisted theories. 
The comparison in section~\ref{sec_matching} is carried out for the untwisted $\mathcal{N}=2$ theories at generic $R^{(r)}$, 
which is strictly more information than any single twisted specialization.

\subsubsection{Level quantization and the parity anomaly}
\label{sec_LevQuAbel}
The integral matrix $K_r$ appearing in the Lagrangian is a choice of quantization scheme, 
and it is convenient to fix the scheme by declaring what $K_r$ measures: 
$K_r$ is the effective level matrix of the pure Chern-Simons theory obtained when every chiral $\Phi^{(I)}$ is given a large positive real mass and integrated out. 
Integrating out a single fermion of charge vector $\mathbf{Q}_I$ and real mass $m$ shifts the level matrix by $(1/2) \mathrm{sign}(m) \mathbf{Q}_I \mathbf{Q}_I^T$, 
so with this definition the two effective level matrices reachable by uniform real mass deformations are
\begin{align}
K_r \quad (\textrm{all masses positive}) 
\qquad\text{and}\qquad
K_r - \mathbf{Q}\mathbf{Q}^T = K_r - \mathbb{I} \quad(\textrm{all masses negative})
\end{align}
and their midpoint, the parity-symmetric level, is
\begin{align}
\label{Ktilde}
\widetilde K_r= K_r - \frac12\,\mathbf{Q}\mathbf{Q}^{T}=K_r - \frac12\,\mathbb{I},
\qquad
(\widetilde K_r)_{II} = 2I-\frac12\in\Zb+\frac12.
\end{align}
That the parity-symmetric level is half-integral, and so not a legitimate level
on its own, is the parity anomaly.  Equivalently $2(\widetilde K_r)_{II} = 4I-1$
is odd, while the off-diagonal entries $2\min(I,J)$ are integral.  The same
half-unit shift will reappear in section~\ref{sec_matching} as the deformation
that maps to a real mass on the symplectic side.

We quote the level as the integral matrix $K_r$, which is the convention in which every
exponent in the index integrand is an integer.  The alternative is to quote the parity-symmetric
$\widetilde K_r$ of \eqref{Ktilde}, whose entries are half-integral precisely because each node
carries a single charge-one fermion.  That is the convention in which a single charge-one
fermion coupled to $U(1)$ carries a half-integral level, as for the $U(1)_{-1/2}$ theory of
\cite{Seiberg:2016gmd}, and it is also the one in which the minimal theory of
\cite{Gang:2018huc} was quoted at level $-\tfrac32$ above.  The two ways of writing the index
integrand, and the conventions we adopt, are compared below \eqref{mplus_split}.
The price is the bare monopole factor
\begin{align}
\label{bare_monopole_factor}
\Big(\frac{-\,q^{(1-R_\Phi)/2}}{s}\Big)^{m^{+}},
\qquad
m^{+} \equiv \max(m,0) = \tfrac12\big(|m|+m\big),
\end{align}
so that the integrand carries an overall factor $(-1)^{\sum_I (m^{(I)})^{+}}$.

As in section~\ref{sec:electric_level}, the extra sign is the hallmark of a half-integral level, and the decomposition
\begin{align}
\label{mplus_split}
m^{+} = \frac12|m| + \frac12 m
\end{align}
shows exactly how this arises. 
The parity-symmetric piece $\tfrac12|m|$ is common to both ways of writing the theory. 
The piece $\tfrac12 m$, linear in the flux, is precisely the half-unit shift relating $K_r$ to $\widetilde K_r$, 
and is the supersymmetric counterpart of the familiar rule 
that integrating out a fermion of charge $Q$ and real mass $m$ shifts the level by $\tfrac12\,\mathrm{sign}(m)Q^{2}$ \cite{Dimofte:2011py,Beem:2012mb}. 
The sign is carried by the classical Chern-Simons factor if the linear piece is kept, and by the matter 1-loop determinant if it is not. 
It is present either way, and the theory is a \emph{spin} theory.

\paragraph{Appearance of $m^{+}$}
The asymmetry between the two sections is not a change of convention but a property of the matter, 
and it is the same property that made the 1-loop induced gauge charge vanish in section~\ref{sec_monopoles}. 
The fundamental of $USp(2n)$ is pseudo-real, so its weights come in pairs $\pm e_i$, and for each pair
\begin{align}
\big(\rho(m)\big)^{+} + \big(-\rho(m)\big)^{+} \;=\; |\rho(m)|\,.
\end{align}
The linear halves of \eqref{mplus_split} cancel \emph{within} the multiplet, 
the zero-point exponent is $\sum_i |m_i|$ whatever one does, and there is simply nothing to choose. 
The symplectic 1-loop determinant is parity symmetric on its own, and the entire anomaly is carried by the half-integrality of $k$, i.e.\ by the sign from $(-s_i)^{2k m_i}$. 
Here the matter is a single charge-$(+1)$ chiral per node with no conjugate partner, 
the cancellation does not happen, and the linear piece survives.

Two clarifications are in order. 
First, the two ways of writing are not two different integrands but the same one. 
The exponent of $s^{(I)}$ is
\begin{align}
(K_r\mathbf m)^{(I)} - (m^{(I)})^{+}=(\widetilde K_r\mathbf m)^{(I)} - \tfrac12\big|m^{(I)}\big|,
\end{align}
and it is an integer in both. 
No square root survives in the product, and the supersymmetric index is of course the same. 
Second, the convention standard in the supersymmetric index literature is the parity-symmetric one 
\cite{Kim:2009wb,Imamura:2011su,Kapustin:2011jm}, in which every factor is written with $\tfrac12|\rho(m)|$. 
The form with $m^{+}$ is the one natural to effective Chern-Simons levels, 
to the 3d index of \cite{Dimofte:2011py} and to holomorphic blocks \cite{Beem:2012mb}. 
We use the latter since each factor is then separately single valued, 
which makes the bookkeeping of the $|A|=1$ sectors in section~\ref{sec_tadpole_index} easier to follow. 
A reader who prefers the standard convention may restore it by the substitution above.

As we did for the $USp(2n)$ theory, we comment on the relevance of the sign in the index.
There is no Weyl group here, so no Weyl-invariance argument is available but the sign is not removable in any case. 
It cannot be absorbed into $s^{(I)} \to -s^{(I)}$, because the 1-loop determinant of $\Phi^{(I)}$ contains odd powers $(s^{(I)})^{\pm1}$, nor is the convergence of the sum over fluxes sensitive to it. 
Dropping $(-1)^{\sum_I (m^{(I)})^{+}}$ therefore does not spoil any manifest symmetry of the integrand. 
It is nevertheless not a legitimate operation. 
A local counterterm contributes a number of factors of $-1$ linear in the flux, whereas $(-1)^{(m^{(I)})^{+}}$ is not. 
It equals $(-1)^{m^{(I)}}$ for $m^{(I)}>0$ and $+1$ for $m^{(I)}<0$, so it agrees with no $(-1)^{c\,m^{(I)}}$. 
Therefore, the sign cannot be removed by a change of convention, only relocated between the two factors as above, 
and an expression with the sign deleted is not the index of any 3d $\mathcal{N}=2$ theory. 

\subsubsection{The choice of level matrix}
\label{sec_magnetic_level_choice}
The particular matrix $K_r$ is not arbitrary. 
It is fixed, essentially uniquely, by demanding a marginal gauge invariant monopole superpotential of maximal rank. 

Anticipating (\ref{tadpole_monopole_charges}) and (\ref{tadpole_monopole_gauge_charge}) below, 
a bare monopole of flux $\mathbf m$ carries gauge charge $q_I[V_{\mathbf m}] = (K\mathbf m)^{(I)} - (m^{(I)})^{+}$ 
and R-charge $R[V_{\mathbf m}] = \sum_I(1-R^{(I)})(m^{(I)})^{+}$ for $\mathbf{Q}=\mathbb{I}$. 
Suppose we ask for a set of fluxes $\{\mathbf m_I\}$ so that the bare monopole is simultaneously gauge invariant and marginal, 
and whose only positive entry is a single $+2$ in the $I$-th slot.
This last condition is a minimality requirement rather than an extra assumption. Gauge invariance forces $(K\mathbf m)^{(J)} = (m^{(J)})^{+}$, so that $K\mathbf m$ is a non-negative vector whose entries are exactly the positive parts of the flux.  Marginality reads
$\sum_J (1-R^{(J)})(m^{(J)})^{+} = 2$, and since the R-charges of the chirals are non-negative
each term is at most $(m^{(J)})^{+}$, and hence
\begin{align}
\sum_J (m^{(J)})^{+} \ \ge\ 2\,,
\end{align}
with equality precisely when $R^{(J)} = 0$ at every node carrying positive flux.  A total
positive weight of $2$ is therefore the smallest one available, and the only one that does not
require some $R^{(J)}$ to be non-zero and tuned.  Two patterns realize it, $K\mathbf m = 2\mathbf e_I$
and $K\mathbf m = \mathbf e_I + \mathbf e_J$ with $I\neq J$.  We take the former, which assigns one
superpotential term to each node, makes the $\mathbf m_I$ automatically linearly independent
because $K$ is invertible, and is what \emph{maximal rank} means above.  The latter is a
different branch, labelled by pairs of nodes rather than by nodes, which we do not pursue here.
Marginality then reads $R[V_{\mathbf m_I}] = 2(1-R^{(I)}) = 2$, i.e.\ $R^{(I)} = 0$, and gauge invariance reads
\begin{align}
\label{K_defining}
K\mathbf m_I =2\mathbf{e}_I\,,
\end{align}
where we should note that $I \in \{1, 2, \cdots, r-1\}$ (i.e.\ the case $I = r$ is not included) since the monopole superpotential is a sum of $r-1$ (rather than $r$) terms.
The minimal such fluxes are the columns of a Cartan matrix,
$\mathbf m_I = C\,\mathbf{e}_I$, whence $K = 2C^{-1}$. 
Three further requirements select $C$:
\begin{itemize}
  \item \emph{Symmetry}: 
  $K$ is a Chern-Simons level matrix, so $K \in \mathrm{Sym}^2(\Zb^r)$ and $C$ must be symmetric. 
  This restricts $C$ to the simply-laced cases and excludes the Cartan matrix of $C_r$ and the other non-simply-laced Cartan matrices.
  \item \emph{Integrality} of $K = 2C^{-1}$: 
  since $C^{-1} = (\det C)^{-1}\operatorname{adj}C$ with $\operatorname{adj}C$ integral, 
  $2C^{-1}$ is integral precisely when $2\Zb^{r} \subseteq C\Zb^{r}$.
  (This is weaker than $\det C \in \{1,2\}$, since $D_4$ has $\det C = 4$ and yet $\Zb^{4}/C\Zb^{4} = \Zb_2\times\Zb_2$, so $2C^{-1}$ is integral.)
  \item \emph{A single surviving flavor symmetry} requires the $\mathbf m_I$ to span a codimension-one sublattice of $\Zb^r$, 
  i.e.\ we impose the $r-1$ linearly independent conditions \eqref{K_defining}.
\end{itemize}
The tadpole graph $T_r$ satisfies all three requirements: 
$C(T_r)$ is the $A_r$ Cartan matrix except for $C_{rr} = 1$, hence it is symmetric. 
It is positive definite with $\det C(T_r) = 1$, so $\Zb^{r}/C(T_r)\Zb^{r}$ is trivial. 
Its inverse is the integral matrix $\big(C(T_r)^{-1}\big)_{IJ} = \min(I,J)$. 
Hence
\begin{align}
K_r = 2\,C(T_r)^{-1},
\qquad
(K_r)_{IJ} = 2\min(I,J),
\qquad
\det K_r = 2^r\,.
\end{align}
The omitted relation $I=r$ is exactly what leaves one unbroken topological symmetry.
The vectors of topological charges left unbroken by the superpotential, i.e. the solutions of 
$\vec{A}\cdot m_I=0$ for $I=1,\cdots, r-1$, form a rank-one sublattice generated by $\vec A = (1,2,\dots,r)$, 
equivalently $C(T_r)\vec A = (0,\dots,0,1)^{T}$, and correspondingly $R^{(I)} = \lambda\,\delta_{Ir}$ leaves precisely one free R-charge, namely $R^{(r)}$, 
in agreement with \eqref{Rassignment}.

The tadpole graph is the naive $\Zb_2$ folding of $A_{2r}$, 
the one in which the last column is not rescaled, so that $C(T_r)_{rr} = 1$ rather than $2$. 
Rescaling that column by $2$ would instead produce the Cartan matrix of $C_r$, 
which is not symmetric and is therefore excluded by the first requirement above. 
The failure of $C(T_r)$ to be a genuine Cartan matrix \footnote{Although we loosely refer to this as a Cartan matrix, it has a diagonal element $C_{rr} = 1 \ne 2$.} is exactly what keeps it symmetric. 
We stress that this is a statement about the level matrix only. 
The folding that produces $C_r = \mathfrak{usp}(2r)$ is $A_{2r-1}\to C_r$, not $A_{2r}\to T_r$, 
and nothing here should be read as evidence for a relation to $USp(2r)$.

Let us also note that $T_r$, corresponding to $A_{2r}$ ($\Zb_2$), is not the only graph passing these tests. 
Among the positive-definite simply-laced Cartan matrices, 
$2\Zb^{r} \subseteq C\Zb^{r}$
for $A_1$ ($\Zb_2$), $D_{2k}$ ($\Zb_2\times\Zb_2$), $E_7$ ($\Zb_2$) and $E_8$ (trivial), 
while $A_{r\ge2}$ ($\Zb_{r+1}$), $D_{2k+1}$ ($\Zb_4$) and $E_6$ ($\Zb_3$) fail.
The label ``tadpole'' therefore does genuinely specify the family. 
We do not consider the alternatives here.

\subsubsection{Global symmetries and the moduli space}
\label{sec_magnetic_moduli}
The global symmetry is
\begin{align}
U(1)_y \times U(1)_R,
\end{align}
with no non-Abelian factor and, crucially, no factor acting on the chiral multiplets. 
Naively a global $U(1)^r$ acts on the $\Phi^{(I)}$, but with $\mathbf{Q} = \mathbb{I}$ all of it is gauged. 
The topological symmetry $U(1)_{J_1}\times\dots\times U(1)_{J_r}$ is broken 
by the superpotential to those combinations $\vec A\cdot\vec M$ with $\vec A\cdot\mathbf m_I = 0$, 
and since the $\mathbf m_I$ span a codimension-one sublattice a single $U(1)_y$ survives, with charge $A = \sum_{I} I\,m^{(I)}$. 
The chiral multiplets are $A$-neutral so $U(1)_y$ acts only on disorder operators. 
There is no one-form symmetry either. 
The line operators of $U(1)^r_{K_r}$ modulo screening would form $\Zb^r/K_r\Zb^r$, of order $|\det K_r| = 2^r$, 
but $\mathbf{Q} = \mathbb{I}$ generates the full charge lattice $\Zb^r$, so the matter screens everything 
and there is no discrete gauging ambiguity to keep track of.

There is no gauge invariant polynomial in the $\Phi^{(I)}$ at all.  
A superpotential or chiral ring generator built from the chirals would be a monomial $\prod_I (\Phi^{(I)})^{n_I}$ with $n_I \in \Zb_{\ge0}$, 
carrying charge $n_I$ under $U(1)_I$. 
Because $\mathbf{Q}$ is diagonal with $Q_I \neq 0$, 
the charges of distinct fields cannot cancel against one another, and gauge invariance forces $n_I = 0$ for every $I$. 
There is therefore no Higgs branch.

The statement should be distinguished from the \emph{real} moment maps $\mu^{(I)} = |\phi^{(I)}|^2$, 
which need not vanish for $\phi^{(I)} \neq 0$. 
In the presence of the CS term the vacuum equations read
\begin{align}
\label{magnetic_vacuum_eqs}
\frac{1}{2\pi}(K_r)_{IJ}\,\sigma^{(J)} + |\phi^{(I)}|^2 = 0,
\qquad
\sigma^{(I)}\,\phi^{(I)} = 0 \quad (\text{no sum over $I$}).
\end{align}
Contracting the first with $\sigma^{(I)}$ and using the second gives $\frac{1}{2\pi}\,\sigma^{T}K_r\,\sigma = 0$. 
Since $C(T_r)$ is positive definite, so is $K_r = 2C(T_r)^{-1}$, hence $\sigma = 0$. 
The remaining condition $|\phi^{(I)}|^2 = 0$ then forces $\phi = 0$. 
The only solution is the origin.

It is worth separating the two mechanisms at work here, because they are easy to conflate. 
Without a Chern-Simons term the Coulomb branch of $U(1)^r$ would have complex dimension $r$. 
Each photon is dual to a compact scalar $\gamma^{(I)}$, and the branch is parametrized by the chiral superfields 
whose lowest components are $\sigma^{(I)}+i\gamma^{(I)}$, 
equivalently by the monopole operators $V_{\pm\mathbf m}$. 
A Chern-Simons term removes this description altogether. 
At level $K_r$ the gauge fields acquire a topological mass of order $g^{2}K_r$, 
so there is no massless photon left to dualize and hence no dual photon available as a modulus. 
Supersymmetry gives the same mass to the partners $\sigma^{(I)}$, in accordance with the vacuum equations above, 
which force $\sigma = 0$ as soon as $K_r$ is non-degenerate, and here it is positive definite. 
Whatever vacua remain are isolated. 
The monopole operators of section~\ref{sec_tadpole_monopoles} survive, but as ordinary massive local
disorder operators rather than as coordinates on a branch.

The monopole superpotential is therefore \emph{not} what removes the Coulomb branch. 
By the time $\mathcal{W}$ is switched on there is nothing left to lift.
Its role is the different one of breaking $U(1)_{J}^{r}$ down to $U(1)_y$ and imposing relations in the chiral ring of dressed monopoles. 
The absence of moduli is thus a statement about $K_r$, while the symmetry content is a statement about $\{\mathbf m_I\}$.

Because $U(1)_y$ is topological, its contact term is governed by the inverse level matrix. 
Using $C(T_r)\vec A = \mathbf{e}_r$ and $\vec A^{(r)} = r$,
\begin{align}
\label{A_contact}
\vec A^{T} K_r^{-1}\vec A = \frac12\vec A^{T} C(T_r)\vec A
&= \frac12\vec A^{(r)} = \frac{r}{2}.
\end{align}
The mixed and gravitational contact terms $k_{AR},k_{RR},k_{gg}$ are additional data, not determined by the Lagrangian alone.

\subsubsection{Monopole operators}
\label{sec_tadpole_monopoles}
Let $\mathbf m = (m^{(1)},\dots,m^{(r)})\in\Zb^r$ be the magnetic flux through a small sphere surrounding the insertion point. 
The gauge group is Abelian, so there is no Weyl group and no restriction to a fundamental domain. 
The sum over fluxes runs over all of $\Zb^r$ and every $\mathbf m$ labels a distinct sector. 

The quantum numbers of the bare monopole $V_{\mathbf m}$ are, with the level written as $K_r$,
\begin{align}
\label{tadpole_monopole_charges}
R[V_{\mathbf m}] &= \sum_I (1-R^{(I)})\,(m^{(I)})^{+}
= \sum_{I} (m^{(I)})^{+} - R^{(r)}\,(m^{(r)})^{+}, \\
A[V_{\mathbf m}] &= \sum_{I=1}^{r} I\,m^{(I)},
\end{align}
and
\begin{align}
\label{tadpole_monopole_gauge_charge}
\text{gauge charge under } U(1)_I: \quad
q_I[V_{\mathbf m}]&= (K_r\mathbf m)^{(I)} - (m^{(I)})^{+}
\nonumber\\
&= 2\sum_J \min(I,J)\,m^{(J)} - (m^{(I)})^{+}.
\end{align}

Equation (\ref{tadpole_monopole_gauge_charge}) deserves comment. 
Besides the Chern-Simons contribution $(K_r\mathbf m)^{(I)}$ there is a 1-loop piece, and here it does \emph{not} vanish. 
The matter is genuinely chiral, a single charge-$(+1)$ field per node with no conjugate partner, 
so the weights do not come in pairs $\pm\rho$ and the 1-loop sum leaves $-(m^{(I)})^{+}$. 
It is precisely this term that is not linear in $\mathbf m$, and precisely its linear part that is the half-unit shift of (\ref{Ktilde}).
 
Because of the CS term, the bare monopole is \emph{not} gauge invariant. 
It carries $2\sum_J\min(I,J)m^{(J)} - (m^{(I)})^{+}$ units of electric charge under the $I$-th $U(1)$. 
A genuine local operator must be dressed by modes carrying the compensating charge. 
There are no $W$-bosons, so the only available dressings are matter modes:
\begin{itemize}
  \item modes of $\phi^{(I)}$ carry charge $+1$ under $U(1)_I$;
  \item modes of $\bar\psi^{(I)}$ carry charge $-1$ under $U(1)_I$.
\end{itemize}
Consequently a bare monopole with $q_I[V_{\mathbf m}] < 0$ is dressed with $|q_I|$ modes of $\phi^{(I)}$, 
while one with $q_I[V_{\mathbf m}] > 0$ requires $q_I$ modes of $\bar\psi^{(I)}$.

\paragraph{Example}
For a negative flux $\mathbf m = -\mathbf{e}_c$ one finds
\begin{align}
q_I[V_{\mathbf m}] = -2\min(I,c),
\qquad
R[V_{\mathbf m}] = 0,
\qquad
A[V_{\mathbf m}] = -c,
\end{align}
so the bare operator is not gauge invariant. 
Dressing with $2\min(I,c)$ modes of $\phi^{(I)}$ at each node restores gauge invariance, at a total cost of
\begin{align}
\sum_{I=1}^{r} 2\min(I,c) = c\,(2r-c+1)
\end{align}
matter modes.  
Since only $\Phi^{(r)}$ carries a non-zero R-charge, the dressed operator has
\begin{align}
R = 2c\,R^{(r)},
\qquad
A = -c.
\end{align}
For a positive flux $\mathbf m = +\mathbf{e}_c$ 
one has instead $q_I = 2\min(I,c) - \delta_{Ic} > 0$ for all $I$, 
which must be soaked up by $\bar\psi^{(I)}$ modes. 
These are more expensive in $q$, and correspondingly the $A>0$ side of the index is populated by different, more economical fluxes. 
The full set of admissible dressings is most reliably extracted from the supersymmetric index, and we do so in the next subsection.

\subsubsection{The full index}
\label{sec_tadpole_index}
The full index of the tadpole theory $U(1)^r_{K_r}$ is
\begin{align}
\label{ind_tadpole_full}
\mathcal{I}_r
&= \sum_{\mathbf m\in\Zb^{r}} \oint
\left( \prod_{I=1}^{r} \frac{ds^{(I)}}{2\pi i\,s^{(I)}}\;
(s^{(I)})^{\,(K_r\mathbf m)^{(I)}} \right)
\Big(q^{-1/2}y\Big)^{A}
\nonumber \\
&\qquad \prod_{I=1}^{r}
\Big(\frac{-\,q^{(1-R^{(I)})/2}}{s^{(I)}}\Big)^{(m^{(I)})^{+}}
\frac{\big(q^{\,1-\frac{R^{(I)}}{2}+\frac{|m^{(I)}|}{2}}(s^{(I)})^{-1};q\big)_{\infty}}
     {\big(q^{\,\frac{R^{(I)}}{2}+\frac{|m^{(I)}|}{2}}s^{(I)};q\big)_{\infty}}\,,
\end{align}
where
$(K_r\mathbf m)^{(I)} = 2\sum_J\min(I,J)m^{(J)}$,
the R-charge is fixed by the assignment \eqref{Rassignment}, and $y$ is the $U(1)_y$ fugacity. 

The sum over $\mathbf m$ runs over the full magnetic lattice $\Zb^r$, 
with no Weyl factor and no identification of fluxes. 
The projection onto gauge singlets is implemented by the contour integrals alone. 
There is likewise no vector multiplet 1-loop determinant, since an Abelian vector multiplet has no charged modes. 
The entire flux dependence beyond the classical term comes from the matter.
The classical Chern-Simons contribution is $\prod_I (s^{(I)})^{(K_r\mathbf m)^{(I)}}$, 
and the sign of section~\ref{sec:electric_level} resides in the bare monopole factor (\ref{bare_monopole_factor}). 
The exponent $(K_r\mathbf m)^{(I)} - (m^{(I)})^{+}$ of $s^{(I)}$ is an integer, 
so the integrand is single valued, and the sign $(-1)^{(m^{(I)})^{+}}$ is independent of $R^{(I)}$. 
This sign is often described as counting fermionic zero modes, of which there are $|m^{(I)}|$. 
The description is a useful mnemonic but not exact, 
since $(m^{(I)})^{+}$ and $|m^{(I)}|$ differ precisely by the linear term relating $K_r$ to $\widetilde K_r$ in (\ref{Ktilde}). 
The factor $(q^{-1/2}y)^{A}$ supplies the topological charge of the bare monopole together with its universal R-charge shift.
The ratio of $q$-Pochhammer symbols is the one loop contribution from $\Phi^{(I)}$. 
The numerator supplies $\bar\psi^{(I)}$ modes of charge $-1$ at cost $q^{1-\frac{R^{(I)}}{2}+\frac{|m^{(I)}|}{2}}$ 
and the denominator $\phi^{(I)}$ modes of charge $+1$ at cost $q^{\frac{R^{(I)}}{2}+\frac{|m^{(I)}|}{2}}$.

The R-charge parameter enters in only one combination. 
Setting $\zeta^{(I)} = s^{(I)} q^{R^{(I)}/2}$, i.e.\ $\zeta^{(I)} = s^{(I)}$ for $I<r$ and
$\zeta^{(r)} = s^{(r)} q^{R^{(r)}/2}$, removes $R^{(r)}$ from the matter factors and
the residual Chern-Simons factor is a pure shift of the $U(1)_y$ fugacity:
\begin{align}
\label{ind_tadpole_zeta}
\mathcal{I}_r
= \sum_{\mathbf m\in\Zb^{r}} \oint \prod_{I=1}^{r}
\frac{d\zeta^{(I)}}{2\pi i\,\zeta^{(I)}}
(\zeta^{(I)})^{\,(K_r\mathbf m)^{(I)}}
\Big(\frac{-q^{1/2}}{\zeta^{(I)}}\Big)^{(m^{(I)})^{+}}
\frac{\big(q^{1+\frac{|m^{(I)}|}{2}}(\zeta^{(I)})^{-1};q\big)_{\infty}}
     {\big(q^{\frac{|m^{(I)}|}{2}}\zeta^{(I)};q\big)_{\infty}}
\Big(q^{-\frac12-R^{(r)}}y\Big)^{A}.
\end{align}
Defining
\begin{align}
\label{ahat_magnetic}
\hat y \equiv y\,q^{-R^{(r)}}\,,
\end{align}
the last factor is $(q^{-1/2}\hat y)^{A}$ and \emph{all} dependence on the trial R-charge is absorbed into $\hat y$. 
Two consequences:
\begin{enumerate}
  \item The index carries no information about the superconformal R-charge.  It
        must be determined independently, e.g.\ by $F$-maximization \cite{Jafferis:2010un}.
        If the duality of section~\ref{sec_matching} holds it predicts the value, as we
        note there.
  \item For computational purposes one may set $R^{(r)}$ to any convenient value and
        restore it at the end by rescaling $y$ by an appropriate power of $q$. For the assignment $R^{(r)}=+\tfrac12$ the purely $q$-graded expansion
        degenerates and the refined truncation described in
        section~\ref{sec_index_matching} is required.
\end{enumerate}
The grading is nonetheless sensitive to $R^{(r)}$ as a formal series. 
Since $\frac12\sum_I (m^{(I)})^{+}$ contributes powers of $q$ in $\frac12\Zb$ correlated with $A$, one has
\begin{align}
(\text{power of }q) \equiv \left(\frac12-R^{(r)}\right)A \pmod 1,
\end{align}
so for $\frac12-R^{(r)} = p/N$ with $p$ and $N$ coprime, $\mathcal{I}_r$ is a series in $q^{1/N}$ 
and the coefficient of $q^{k/N}$ contains only $y^{A}$ with $pA \equiv k \ (\mathrm{mod}\ N)$.
In particular the grading is integral exactly at $R^{(r)}\in\Zb+\tfrac12$.

At the two such points closest to the origin the index collapses completely in the limit $y \to 1$, i.e.,
\begin{align}
\label{trivial_index}
\mathcal{I}_r\big|_{R^{(r)}=\pm\frac12}\big(q;y=1\big) = 1,
\end{align}
so that no operator other than the identity contributes. 
This is the index counterpart of the statement of section~\ref{sec_magnetic_moduli} that the
vacuum equations have only the origin as a solution. 
It confirms independently that there is no continuous moduli space, and it does so at the two values of $R^{(r)}$ at which the theory can be topologically twisted.

Regarding convergence, the contour is $|\zeta^{(I)}| = 1$ with the denominators expanded in positive powers of $\zeta^{(I)}$, 
which separates the poles of $1/(q^{|m^{(I)}|/2}\zeta^{(I)};q)_\infty$ from their images for $|q|<1$, 
except for the single factor $1/(1-\zeta^{(I)})$ arising at $m^{(I)}=0$ for the nodes with $R^{(I)}=0$, 
whose pole sits on the contour.  For it, the positive-power expansion \emph{is} the
prescription, 
equivalent to approaching the unitarity bound from above, $R^{(I)}\to0^{+}$, 
which moves the pole to $|\zeta^{(I)}|=q^{-R^{(I)}/2}>1$. 
No condition on $y$ arises as the chiral multiplets are $A$-neutral, so the fugacity $y$ appears only in the flux prefactor and never inside a $q$-Pochhammer symbol. 

It is instructive to identify the cheapest contributions with $A = \pm1$.  Working with
(\ref{ind_tadpole_zeta}), at node $I$ the integrand carries
$(\zeta^{(I)})^{\,(K_r\mathbf m)^{(I)} - (m^{(I)})^{+}}$, and gauge invariance requires this to
be compensated.  The available sources are:
\begin{itemize}
  \item $\phi^{(I)}$ modes: one unit of $+1$ charge each, at cost
        $q^{|m^{(I)}|/2}$;
  \item $\bar\psi^{(I)}$ modes: one unit of $-1$ charge each, at cost
        $q^{1+|m^{(I)}|/2}$;
  \item the bare monopole factor $(-q^{1/2}/\zeta^{(I)})^{(m^{(I)})^{+}}$, which is fixed by
        the flux and contributes $q^{\frac12\sum_I (m^{(I)})^{+}}$ and the sign.
\end{itemize}

The cheapest negative flux is $\mathbf m = -\mathbf{e}_1$. 
Then $(m^{(I)})^{+} = 0$ for all $I$ and $(K_r\mathbf m)^{(I)} = -2$ at every node, so two $\phi^{(I)}$ modes are needed at each node. 
These cost $q^{|m^{(I)}|}$, i.e.\ $q$ at node $1$ and nothing at nodes $I \ge 2$ where $m^{(I)} = 0$. 
Together with $[q^{-1/2}\hat y]^{-1} = q^{1/2}\hat y^{-1}$ the total is
\begin{align}
\label{magnetic_Am1}
q^{1}\cdot q^{\frac12}\hat y^{-1} = q^{\frac32}\hat y^{-1}.
\end{align}

Here the cheapest positive flux is not $\mathbf{e}_1$, which would require $(\zeta^{(I)})^{-2}$ at every node $I \ge 2$ at a cost $q^{2}$ apiece, 
but $\mathbf m = \mathbf{e}_r - \mathbf{e}_{r-1}$. 
For this flux $(K_r\mathbf m)^{(I)} = 2[\min(I,r)-\min(I,r-1)]$ vanishes for $I < r$ and equals $2$ at $I = r$, while $(m^{(I)})^{+} = \delta_{Ir}$. 
Only node $r$ is unbalanced, carrying $(\zeta^{(r)})^{\,2-1}$, and a single $\bar\psi^{(r)}$ mode at cost $q^{1+\frac12} = q^{3/2}$ suffices. 
With the bare monopole factor $q^{1/2}$ and $[q^{-1/2}\hat y]^{+1}$ the total is
\begin{align}
\label{magnetic_Ap1}
q^{\frac12}\cdot q^{\frac32}\cdot q^{-\frac12}\hat y = q^{\frac32}\hat y,
\end{align}
the two minus signs 
(from the bare monopole factor and from the leading term of the numerator $q$-Pochhammer) cancelling. 
For $r=1$ the same flux degenerates to $\mathbf m = \mathbf{e}_1$ and gives the same answer. 

Neither leading contribution depends on $r$. 
The extra nodes are either at zero flux, where $\phi^{(I)}$ modes are free, or already balanced. 
The rank is thus invisible in the $|A| = 1$ sector and first appears at higher order. 
The two contributions sit at the same $q$-degree and are exchanged by $\hat y\to\hat y^{-1}$.
At $R^{(r)} = 0$ they combine into $(y+y^{-1})q^{3/2}$, which is indeed common to $\mathcal{I}_1$, $\mathcal{I}_2$ and $\mathcal{I}_3$. 

\subsection{The general rank-$0$ theory $\mathcal{T}_{\ell,n}$}
\label{sec_general_rank0}

We now return to general $\ell$ and go through the same points, indicating in each case
where the tadpole discussion generalizes and where it does not.

\subsubsection{Field content and level matrix}
\label{sec_general_content}

For positive integers $\ell$ and $n$ the theory $\mathcal{T}_{\ell,n}$ has gauge group
$U(1)^{n(2\ell-1)}$, with the factors labelled by a pair $I=(i,a)$, $i=1,\dots,n$ and
$a=1,\dots,2\ell-1$, and a single chiral multiplet $\phi_{ia}$ of unit charge at each node, so
that the charge matrix is again $\mathbf Q=\mathbb{I}$.  The level matrix is
\begin{align}
\label{kappa_family}
\kappa \;=\; C(T_{n})^{-1}\otimes C(A_{2\ell-1})\,,
\qquad
\big(C(T_{n})^{-1}\big)_{ij}=\min(i,j)\,,
\end{align}
with $C(A_{2\ell-1})$ the Cartan matrix of $A_{2\ell-1}$.  It is symmetric and integral, its
diagonal entries being $2i$.  The tadpole graph sits in the first factor and an $A$-type diagram
in the second.  At $\ell=1$ one has $C(A_{1})=(2)$ and $\kappa$ reduces to $K_{n}$
of \eqref{tadpole_charges}, so $\mathcal{T}_{1,n}$ is the tadpole theory of
section~\ref{sec_tadpole_theory}.

Two remarks on names and labels.  The subscripts are those of \cite{Creutzig:2024ljv}, our
$\mathcal{T}_{\ell,n}$ being their $\mathcal{T}_{N,k}$ with $N=\ell$ and $k=n$, so that the
second subscript is the one carried by the tadpole graph.  And we reserve the term \emph{tadpole
theory} for $\ell=1$, where the level matrix is built from $T_{n}$ alone.  The tadpole graph
does not disappear for $\ell>1$, since $C(T_{n})^{-1}$ remains one factor of \eqref{kappa_family},
but it is then only part of the level matrix.  It is worth noting which parameter each factor
carries.  Under the duality of section~\ref{sec_matching} the tadpole factor $T_{n}$ tracks the
rank of the symplectic gauge group and the factor $A_{2\ell-1}$ tracks its level, so the dual of
$USp(2n)$ is built on $T_{n}$ for every $\ell$.

\subsubsection{Level quantization}
\label{sec_general_level}

The discussion of section~\ref{sec_LevQuAbel} applies verbatim.  Each node carries one
chiral of unit charge, so the parity anomaly makes the parity-symmetric level
$\widetilde\kappa=\kappa-\tfrac12\mathbb{I}$ half-integral on the diagonal while $\kappa$ itself
is integral, and every one of these theories is a spin theory.  In the index the sign appears
exactly as in \eqref{CS_classical_sign}, through the bare monopole factor, and gives an overall
$(-1)^{\sum_{I}(m^{(I)})^{+}}$.

\subsubsection{The superpotential and the role of $\ell=1$}
\label{sec_general_W}

Here the family departs from section~\ref{sec_magnetic_level_choice}, and the way it does so
is worth stating precisely, because it is what distinguishes $\ell=1$.  Gauge invariance of a
bare monopole requires $(\kappa\mathbf m)^{(I)}=(m^{(I)})^{+}$, and the minimal solutions are
those with a single $+2$, so that $\kappa\mathbf m=2\mathbf e_{I}$ and
$\mathbf m=2\kappa^{-1}\mathbf e_{I}$.  Integral bare monopole fluxes therefore exist precisely
when $2\kappa^{-1}$ is integral, which is the second of the three requirements of
section~\ref{sec_magnetic_level_choice}.  For the family
$\kappa^{-1}=C(T_{n})\otimes C(A_{2\ell-1})^{-1}$, and $C(A_{2\ell-1})^{-1}$ has entries in
$\tfrac{1}{2\ell}\Zb$, so the requirement holds only at $\ell=1$.  We have checked directly at
$\ell=2$, $n=1$, where $\kappa=C(A_{3})$, that the condition
$(\kappa\mathbf m)^{(I)}=(m^{(I)})^{+}$ has no non-zero integral solution at all.

Accordingly \cite{Creutzig:2024ljv} use monopoles dressed by the chirals.  Their
superpotential is a sum of $n(2\ell-1)-1$ terms,
\begin{align}
\label{W_general}
\mathcal{W}_{\ell,n}
=\sum_{i=1}^{n-1}\sum_{a=1}^{2\ell-1}\phi_{i(a-1)}\phi_{i(a+1)}V_{\mathfrak m_{ia}}
+\sum_{a=1}^{2\ell-2}\phi_{n(a-1)}\phi_{n(a+2)}V_{\mathfrak m_{a}}\,,
\end{align}
with $\phi_{i0}=\phi_{i(2\ell)}=1$ and with fluxes
$(\mathfrak m_{ia})_{jb}=(2\delta_{ij}-\delta_{(i-1)j}-\delta_{(i+1)j})\delta_{ab}$ and
$(\mathfrak m_{a})_{jb}=(\delta_{jn}-\delta_{j(n-1)})(\delta_{ab}+\delta_{(a+1)b})$.  The number
of terms is $(n-1)(2\ell-1)+(2\ell-2)=n(2\ell-1)-1$, one fewer than the number of nodes.  At
$\ell=1$ the second sum is empty, the dressing factors are absent, and one recovers the bare
monopole superpotential \eqref{W_tadpole} with its $n-1$ terms.  The comparison is worth making
in detail, since it is the only place where the two descriptions of the superpotential meet.  At
$\ell=1$ the node label $(i,a)$ reduces to $(i,1)$ and $\phi_{i1}$ is the chiral called
$\Phi^{(i)}$ in section~\ref{sec_tadpole_theory}.  The dressing factors are
$\phi_{i0}=\phi_{i2}=1$ by the convention above, so they disappear.  The flux
$(\mathfrak m_{i1})_{j1}=2\delta_{ij}-\delta_{(i-1)j}-\delta_{(i+1)j}$ is the $i$-th column of
$C(T_{n})$, which is $\mathbf m_{i}$.  And the first sum runs over $i=1,\dots,n-1$, which is the
range in \eqref{W_tadpole}.

Marginality of \eqref{W_general} fixes the R-charges of the chirals up to a single
parameter, as it does at $\ell=1$, and we record the solution since it enters the duality in
section~\ref{sec_matching}.  For the terms of the first sum the positive part of
$\mathfrak m_{ia}$ is a single $2$ at the node $(i,a)$, so
$R[V_{\mathfrak m_{ia}}]=2\big(1-R^{(ia)}\big)$, the R-charge of a bare
monopole being $\sum_{i,a}\big(1-R^{(ia)}\big)(m_{ia})^{+}$ as in
\eqref{tadpole_monopole_charges} and as recorded at general $\ell$ in
section~\ref{sec_general_monopoles}, and $R[\mathcal{W}]=2$ reads
\begin{align}
\label{Rcharge_bulk}
2R^{(ia)} = R^{(i,a-1)}+R^{(i,a+1)}\,,
\qquad i=1,\dots,n-1\,,
\end{align}
with $R^{(i0)}=R^{(i,2\ell)}=0$ from the convention $\phi_{i0}=\phi_{i(2\ell)}=1$.  A solution of
\eqref{Rcharge_bulk} is linear in $a$, and the two boundary values force it to vanish, so
$R^{(ia)}=0$ for every $i<n$.  For the terms of the second sum the positive part of
$\mathfrak m_{a}$ is a $1$ at each of the nodes $(n,a)$ and $(n,a+1)$, and marginality reads
\begin{align}
\label{Rcharge_last}
R^{(n,a-1)}+R^{(n,a+2)} = R^{(na)}+R^{(n,a+1)}\,,
\qquad a=1,\dots,2\ell-2\,.
\end{align}
Writing $g(a)=R^{(na)}-R^{(n,a-1)}$ this says $g(a+2)=g(a)$, so $g$ takes one value on odd
arguments and another on even ones, and $R^{(n,2\ell)}=0$ makes the two opposite.  Hence
\begin{align}
\label{Rcharge_solution}
R^{(ia)}=0 \quad (i<n)\,,
\qquad
R^{(na)}=\begin{cases} \alpha\,, & a \text{ odd}\,,\\ 0\,, & a \text{ even}\,,\end{cases}
\end{align}
with $\alpha$ the one free trial R-charge, carried by the $\ell$ odd nodes of the last row.  At
$\ell=1$ the last row is the single node $a=1$ and \eqref{Rcharge_solution} reduces to
$R^{(I)}=0$ for $I<n$ with $R^{(n)}=\alpha$ free, which is the assignment of
section~\ref{sec_tadpole_theory}.  Setting $\alpha=0$ gives the frame of
\cite{Creutzig:2024ljv} used in \eqref{ind_general_full}.

\subsubsection{Global symmetries and the moduli space}
\label{sec_general_moduli}

The counting of section~\ref{sec_magnetic_moduli} is unchanged in form.  Since
$\mathbf Q=\mathbb{I}$, the $n(2\ell-1)$ flavor symmetries that would act on the chirals are
entirely gauged away, and the $n(2\ell-1)$ topological symmetries are broken by the
$n(2\ell-1)-1$ superpotential terms to the single factor generated by
\begin{align}
\label{S_general}
S=\sum_{i=1}^{n}\sum_{a=1}^{2\ell-1}(-1)^{a+1}\,i\,M_{ia}\,,
\end{align}
with $M_{ia}$ the topological charge of $U(1)_{ia}$ and with $i$ the row label of
section~\ref{sec_general_content}.  Both factors in \eqref{S_general} are forced.  Since $V_{\mathfrak m}$ carries
topological charge $\mathfrak m$, the surviving symmetry must annihilate every flux occurring in
\eqref{W_general}.  For the first family, $\mathfrak m_{ia}$ is a discrete Laplacian in the row
direction, and
\begin{align}
S\cdot\mathfrak m_{ia}=(-1)^{a+1}\big[\,2i-(i-1)-(i+1)\,\big]=0
\nonumber
\end{align}
holds precisely because the coefficient is linear in $i$.  For the second family,
\begin{align}
S\cdot\mathfrak m_{a}=\big[\,n-(n-1)\,\big]\big[\,(-1)^{a+1}+(-1)^{a+2}\,\big]=0
\nonumber
\end{align}
holds precisely because the sign alternates in $a$.  At $\ell=1$ the second family is empty and
\eqref{S_general} is the charge $\vec A=(1,2,\dots,n)$ of section~\ref{sec_magnetic_moduli},
where the same linearity was what made $\vec A$ orthogonal to the columns of $C(T_{n})$.  The
global symmetry is therefore
$U(1)_{S}\times U(1)_{R}$, matching the $U(1)_{a}\times U(1)_{R}$ of the symplectic side, and
the moduli space is a point for the same reason as there, the Chern-Simons term forcing
$\sigma=0$ and the D-term then forcing the chirals to vanish.

\subsubsection{Monopole operators}
\label{sec_general_monopoles}

Let $\mathbf m=(m_{ia})\in\Zb^{n(2\ell-1)}$ be the flux.  The gauge group is Abelian, so
every $\mathbf m$ labels a distinct sector and there is no Weyl group to quotient by, exactly as
in section~\ref{sec_tadpole_monopoles}.  The quantum numbers of the bare monopole are
\begin{align}
\label{general_monopole_charges}
R[V_{\mathbf m}] &= \sum_{i,a} \big(1-R^{(ia)}\big)\,(m_{ia})^{+}\,,
\nonumber\\
S[V_{\mathbf m}] &= \sum_{i=1}^{n}\sum_{a=1}^{2\ell-1} (-1)^{a+1}\,i\,m_{ia}\,,
\end{align}
and
\begin{align}
\label{general_monopole_gauge_charge}
q_{I}[V_{\mathbf m}] = (\kappa\mathbf m)^{(I)} - (m^{(I)})^{+}\,,
\end{align}
the one-loop piece being $-(m^{(I)})^{+}$ for the same reason as at $\ell=1$, namely that
$\mathbf Q=\mathbb{I}$ so that each node sees exactly one chiral of unit charge.  These are
\eqref{tadpole_monopole_charges} and \eqref{tadpole_monopole_gauge_charge} with $K_{r}$ replaced
by $\kappa$ and $\vec A$ by \eqref{S_general}.

Inserting the R-charge assignment \eqref{Rcharge_solution} makes the first line explicit.
Only the odd nodes of the last row carry a non-vanishing R-charge, so
\begin{align}
\label{general_monopole_R}
R[V_{\mathbf m}]
= \sum_{i,a} (m_{ia})^{+} \;-\; \alpha \sum_{a\ \mathrm{odd}} (m_{na})^{+}\,,
\end{align}
which at $\ell=1$ is $\sum_{I}(m^{(I)})^{+}-\alpha\,(m^{(n)})^{+}$, the tadpole expression with
$\alpha=R^{(r)}$.  The dressed operators of \eqref{W_general} have $R=2$ by construction, since
that is what fixed \eqref{Rcharge_solution} in the first place, and the free parameter $\alpha$
drops out of the marginality condition, appearing only in the quantum numbers of the other
monopoles.

What changes is which monopoles are gauge invariant.  At $\ell=1$ the minimal solutions of
$q_{I}[V_{\mathbf m}]=0$ are the fluxes $\mathbf m_{I}$ of section~\ref{sec_tadpole_theory}, and
the superpotential is built from the bare operators $V_{\mathbf m_{I}}$.  For $\ell>1$ there are
none, by the argument of section~\ref{sec_general_W}, and gauge invariance is restored instead by
dressing, the operator $\phi_{i(a-1)}\phi_{i(a+1)}V_{\mathfrak m_{ia}}$ of \eqref{W_general}
carrying $q_{I}=0$ because the two chiral modes supply the two missing units of charge.  The
spectrum of gauge-invariant dressed monopoles is correspondingly richer, and we have not
enumerated it.

\subsubsection{The full index}
\label{sec_general_index}

The index of $\mathcal{T}_{\ell,n}$ is \eqref{ind_tadpole_full} with the same two
replacements and with the R-charges of \eqref{Rcharge_solution}.  Writing it out,
\begin{align}
\label{ind_general_full}
\mathcal{I}\big[\mathcal{T}_{\ell,n}\big]
&= \sum_{\mathbf m\in\Zb^{\,n(2\ell-1)}} \oint
\left( \prod_{i=1}^{n}\prod_{a=1}^{2\ell-1} \frac{dz_{ia}}{2\pi i\,z_{ia}}\;
z_{ia}^{\,(\kappa\mathbf m)_{ia}} \right)
\Big(q^{-1/2}y\Big)^{\sum_{i,a}(-1)^{a+1}i\,m_{ia}}
\nonumber \\
&\qquad \prod_{i=1}^{n}\prod_{a=1}^{2\ell-1}
\left(\frac{-\,q^{\frac{1-R^{(ia)}}{2}}}{z_{ia}}\right)^{(m_{ia})^{+}}
\frac{\big(q^{\,1-\frac{R^{(ia)}}{2}+\frac{|m_{ia}|}{2}}z_{ia}^{-1};q\big)_{\infty}}
     {\big(q^{\,\frac{R^{(ia)}}{2}+\frac{|m_{ia}|}{2}}z_{ia};q\big)_{\infty}}\,,
\end{align}
with the contour the product of unit circles and $(x;q)_{\infty}=\prod_{p\ge0}(1-xq^{p})$.  By
\eqref{Rcharge_solution} every $R^{(ia)}$ vanishes except on the odd nodes of the last row, where
it equals $\alpha$, so the parameter enters through $\ell$ of the $n(2\ell-1)$ factors and
nowhere else.  At $\ell=1$ this is \eqref{ind_tadpole_full}, and setting $\alpha=0$ gives the
frame of \cite{Creutzig:2024ljv}.  The fugacity $\eta$ of
\cite{Creutzig:2024ljv} is related to ours by $y=-\eta$ up to the inversion discussed below, the
prefactors $(-q^{-1/2}\eta)^{S}$ and $(q^{-1/2}y)^{S}$ agreeing under that substitution.

A word on the sign of \eqref{S_general} is needed, since it is not a convention we are free
to choose.  Reversing it sends $y\to y^{-1}$, which by \eqref{parity_conjugate} is the parity
conjugation, so it exchanges $\mathcal{T}_{\ell,n}$ with $\overline{\mathcal{T}}_{\ell,n}$ and
these are inequivalent.  We have fixed the sign by the requirement that the resulting index
matches the symplectic one, and the opposite sign fails already at order $q^{5/2}$ for
$\mathcal{T}_{2,1}$.  The specialization of the topological fugacities used in
\cite{Creutzig:2024ljv} carries the exponent $(-1)^{a}i$ rather than the $(-1)^{a+1}i$ of
\eqref{S_general}, so their $\mathcal{T}_{\ell,n}$ and ours may well differ by exactly this
parity conjugation.  Since \cite{Creutzig:2024ljv} note the same ambiguity in their own
comparison of fusion rings, we do not attempt to fix it here, and every statement we make about
$\mathcal{T}_{\ell,n}$ refers to the theory with \eqref{S_general} as written.

Three remarks on \eqref{ind_general_full}.  The overall sign
$(-1)^{\sum_{ia}(m_{ia})^{+}}$ carried by the bare monopole factor is the parity anomaly of
section~\ref{sec_general_level}, and cannot be removed.  The parameter $\alpha$ is a choice of frame rather than a
restriction.  Absorbing it by $z_{na}\to z_{na}q^{-\alpha/2}$ at the odd nodes of the last row
removes it from the factors just described and leaves it only
in the classical factor, as $q^{-\frac{\alpha}{2}\sum_{a\ \mathrm{odd}}(\kappa\mathbf m)_{na}}$,
and since $\sum_{a\ \mathrm{odd}}C(A_{2\ell-1})_{ab}=2(-1)^{b+1}$ and $\min(n,j)=j$ this
exponent is $-\alpha S$.  All dependence on the trial R-charge therefore sits in
\begin{align}
\label{yhat_general}
\hat y \equiv y\,q^{-\alpha}\,,
\end{align}
which is \eqref{ahat_magnetic} with $\alpha=R^{(r)}$ at $\ell=1$, and it is this that makes the
comparison \eqref{family_index_identity} independent of the frame.  And the factors
$1/(q^{|m_{ia}|/2}z_{ia};q)_{\infty}$ have a pole on the contour at $m_{ia}=0$, exactly as in
section~\ref{sec_tadpole_index}, for which the expansion in positive powers of $z_{ia}$ is the
prescription.

Two properties of $\mathcal{T}_{\ell,n}$ established in \cite{Creutzig:2024ljv} will be used
in section~\ref{sec_matching}.  The first is that it has $\binom{\ell+n}{n}$ Bethe vacua at
generic fugacity.  Their Bethe equations are Nahm's equations for the bilinear form $\kappa$,
and the count is obtained there by matching it to the number of simple modules of
$L_{n}(\mathfrak{osp}_{1|2\ell})$.  At $\ell=1$ the count is $n+1$, which is the result proved
by elimination in appendix~\ref{app_bethe}, where the solutions are also obtained in closed
form.  That derivation is independent of the vertex algebra input, and we know of no closed form
beyond $\ell=1$.

The second is a 3d mirror symmetry within the family, which we state carefully because it
involves a parity conjugation and because we use it repeatedly.  For any theory $\mathcal{T}$ of
the present kind we write $\overline{\mathcal{T}}$ for its \emph{parity conjugate}, obtained by
reversing the sign of every Chern-Simons level.  Since the topological charge is parity odd, the
index of the conjugate is the original one with the fugacity of the surviving topological
symmetry inverted,
\begin{align}
\label{parity_conjugate}
\mathcal{I}\big[\overline{\mathcal{T}}\big](\hat y) = \mathcal{I}\big[\mathcal{T}\big](\hat
y^{-1})\,.
\end{align}
The bar is not removable here.  Each node of $\mathcal{T}_{\ell,n}$ carries a single chiral of
unit charge, hence an odd number of charged fermions, so parity is broken by the anomaly of
section~\ref{sec_general_level} and $\mathcal{T}_{\ell,n}$ is not equivalent to
$\overline{\mathcal{T}}_{\ell,n}$.

The statement of \cite{Creutzig:2024ljv} is then that $\mathcal{T}_{\ell,n}$ and
$\overline{\mathcal{T}}_{n,\ell}$ are 3d mirror to one another, equivalently that
\begin{align}
\label{cgk_mirror}
\mathcal{I}\big[\mathcal{T}_{\ell,n}\big](\hat y)
= \mathcal{I}\big[\mathcal{T}_{n,\ell}\big](\hat y^{-1})\,.
\end{align}
The two sides are genuinely different gauge theories whenever $\ell\neq n$.  Their gauge groups
are $U(1)^{n(2\ell-1)}$ and $U(1)^{\ell(2n-1)}$, whose ranks agree only at $\ell=n$, and their
level matrices $C(T_{n})^{-1}\otimes C(A_{2\ell-1})$ and $C(T_{\ell})^{-1}\otimes C(A_{2n-1})$
are unrelated.  At $\ell=n$ the two theories coincide and \eqref{cgk_mirror} says that
$\mathcal{T}_{n,n}$ is mirror to its own parity conjugate, so that its index is invariant under
$\hat y\to\hat y^{-1}$.  This is what we call self-mirror, and it does not mean that parity is
unbroken, only that the theory is carried to its parity image by the mirror map.

The two topological twists of $\mathcal{T}_{\ell,n}$ are conjectured in
\cite{Creutzig:2024ljv} to carry the boundary vertex operator algebras
$W^{\min}_{n-\frac12}(\mathfrak{sp}_{2\ell})$ and $L_{n}(\mathfrak{osp}_{1|2\ell})$, in the $H$-
and $C$-twists respectively, using the terminology of \cite{Costello:2018fnz} in which the two
twists are labelled by the branch whose chiral ring they compute.  The first reduces at $\ell=1$
to the minimal model $M(2,2n+3)$ quoted in the introduction.  Consistently with
\eqref{cgk_mirror}, the pair of algebras assigned to $\mathcal{T}_{n,\ell}$ is the same pair in
the opposite order.


\section{The duality}
\label{sec_matching}
We now come to the proposal of this paper.  For every pair of positive integers $(n,\ell)$
we propose that
\begin{align}
\label{family_main}
USp(2n)_{\,n+\frac12+\ell} \ \text{with one } \mathbf{2n}
\qquad\longleftrightarrow\qquad
\mathcal{T}_{\ell,n}\,,
\end{align}
with $\mathcal{T}_{\ell,n}$ the Abelian rank-$0$ theory of section~\ref{sec_general_rank0}, under
the identification
\begin{align}
\label{family_dictionary}
\alpha = \tfrac12 - r_Q\,,
\qquad
U(1)_{a}\ \longleftrightarrow\ U(1)_{S}\,,
\qquad
y = a^{2}\,,
\end{align}
where $r_{Q}$ is the trial R-charge of \eqref{USp2n_Nf1_charges} and $\alpha$ the one of
\eqref{Rcharge_solution}, and where the axial symmetry of section~\ref{sec_electric_moduli} is
identified with the surviving topological symmetry \eqref{S_general}.  Equivalently, and more
invariantly, $\hat y=q^{-1/2}\hat a^{2}$ in the combinations $\hat a$ of \eqref{ahat_electric}
and $\hat y$ of \eqref{yhat_general} that carry all the dependence on the trial R-charges.  At
$\ell=1$ the parameter $\alpha$ is $R^{(r)}$ and \eqref{family_dictionary} becomes the
dictionary \eqref{the_dictionary} recorded below.  Note that the same relation $\alpha=\tfrac12-r_{Q}$ holds at every
$\ell$, although $\alpha$ is carried by $\ell$ of the Abelian chirals and $r_{Q}$ by the single
chiral $Q$.  That there is exactly one parameter to match on each side is worth remarking on,
since the two sides arrive at it in opposite ways.  On the symplectic side no superpotential can
be written, the meson vanishing by \eqref{meson_vanishes} and the bare monopoles not being gauge
invariant, so $r_{Q}$ is unconstrained.  On the Abelian side the $n(2\ell-1)$ chiral R-charges
are cut down by the $n(2\ell-1)-1$ marginality conditions of section~\ref{sec_general_W} to the
single $\alpha$ of \eqref{Rcharge_solution}.  The two countings agree because both sides carry
the same global symmetry $U(1)\times U(1)_{R}$, with one Abelian factor available to mix with
the R-symmetry.  One consequence is worth recording, and it is a consequence of the
duality rather than an independent determination.  The superconformal R-charge on the symplectic
side is $r_Q=\tfrac12$ by section~\ref{sec_electric_susy}, where it follows from supersymmetry
that is manifest in the Lagrangian, so if \eqref{family_main} holds then
\eqref{family_dictionary} places the superconformal point of $\mathcal{T}_{\ell,n}$ at
$\alpha=0$, that is at vanishing R-charge for every chiral.  Nothing in
section~\ref{sec:tadpole} fixes that value.  Reaching it from the Abelian side would require
either $F$-maximization \cite{Jafferis:2010un} or the conjectured infrared enhancement of
\cite{Gang:2018huc,Gang:2023rei}, so agreement between $\alpha=0$ and the value obtained by
those routes would be a test of \eqref{family_main} rather than a check of something already
known.
The level on the left is not a choice, since by
section~\ref{sec_electric_vacua} the index vanishes identically for $\ell<0$ and equals $1$ at
$\ell=0$, so $\ell\ge1$ is the statement that the theory is non-trivial, and
$\mathcal{T}_{\ell,n}$ likewise exists only there.  The rest of this
section treats the case $\ell=1$, where $\mathcal{T}_{1,n}$ is the tadpole theory of
section~\ref{sec_tadpole_theory} and the statement reads
\begin{align}
\label{the_duality}
USp(2n)_{n+\frac32} \ \text{with one } \mathbf{2n}
\qquad\longleftrightarrow\qquad
U(1)^r_{K_r} \ \text{with } r \text{ chirals of charge } 1,
\end{align}
under the identification
\begin{align}
\label{the_dictionary}
n = r\,,
\qquad
R^{(r)} = \tfrac12 - r_Q\,,
\qquad
U(1)_a \longleftrightarrow U(1)_y\,,
\qquad
y = a^{2}\,.
\end{align}
The symplectic axial symmetry maps to the Abelian topological symmetry, with the fugacities related by $y=a^{2}$, 
and the two trial R-charges are identified up to the shift in \eqref{the_dictionary},
which is the same shift that makes the two combined fugacities of section~\ref{sec_index} correspond. 
In the three subsections that follow we assemble the evidence for \eqref{the_duality},
organized by the two constraints of section~\ref{sec_dual_constraints} and by the structure of
the two indices.  This is the case we can treat in most detail, because the Abelian side has a
bare monopole superpotential and its Bethe vacua are available in closed form.
General $\ell$ is treated alongside, the index
comparisons being collected in sections~\ref{sec_matching_12} to \ref{sec_matching_22}, and the pattern of
charge-conjugation symmetry in section~\ref{sec_matching_conj}.

\subsection{Global symmetries and the chiral ring}
\label{sec_matching_symmetries}
Constraint (ii) of section~\ref{sec_dual_constraints} was the more restrictive of the two,
and it is met throughout the family in the same way.  The symplectic theory has only
$U(1)_{a}\times U(1)_{R}$, and on the Abelian side the $n(2\ell-1)$ flavor symmetries that would
act on the chirals are entirely gauged away, since $\mathbf Q=\mathbb{I}$, while the
$n(2\ell-1)$ topological symmetries are broken by the $n(2\ell-1)-1$ superpotential terms of
\eqref{W_general} to the single factor $U(1)_{S}$ of \eqref{S_general}.  The mapping of the two
theories is summarized in Table~\ref{table_family_mapping}.
\begin{table}[h!]
\begin{center}
\begin{tabular}{l|l|l}
& $USp(2n)_{\,n+\frac12+\ell}$ & $\mathcal{T}_{\ell,n}$ \\ \hline
gauge group & $USp(2n)$ & $U(1)^{n(2\ell-1)}$ \\
level & $k=n+\frac12+\ell$ & $\kappa=C(T_{n})^{-1}\otimes C(A_{2\ell-1})$ \\
matter & one chiral in $\mathbf{2n}$ & $n(2\ell-1)$ chirals of charge $1$ \\
superpotential & $\mathcal{W}=0$ & $n(2\ell-1)-1$ dressed monopoles \\
global symmetry & $U(1)_{a}\times U(1)_{R}$ & $U(1)_{S}\times U(1)_{R}$ \\
flavor versus topological & $U(1)_{a}$ & $U(1)_{S}$ of \eqref{S_general} \\
moduli space & origin only & origin only \\
vacua under the two masses & $\binom{n+\ell-1}{n},\ \binom{n+\ell}{n}$ &
$\binom{n+\ell-1}{n},\ \binom{n+\ell}{n}$
\end{tabular}
\end{center}
\caption{Duality mappings at general level}
\label{table_family_mapping}
\end{table}

We now set $\ell=1$ and follow the two sides in detail. 
The symplectic theory has only $U(1)_a\times U(1)_R$, 
so on the Abelian side the topological symmetry must map to $U(1)_a$ and \emph{no further} flavor symmetry may survive. 
This is what forced $N=1$ in the naive proposal \eqref{N1_proposal}, and hence restricted $n$ (or $n+2$) to being a perfect square. 

In $U(1)^r_{K_r}$ the constraint is met for every $r$ by a different mechanism.
The $r$ spurious $U(1)$'s that would act on the $\Phi^{(I)}$ are entirely gauged away by $\mathbf{Q} = \mathbb{I}$, 
while the $r$ topological $U(1)$'s are broken to a single $U(1)_y$ by the $r-1$ monopole superpotential terms. 
This is not quite the second option left open in section~\ref{sec_dual_constraints}, where
we envisaged a superpotential lifting the flavor symmetries themselves.  Here the flavor
symmetries never appear, being gauged away, and the superpotential acts on the topological ones
instead. 
By the argument of section~\ref{sec:tadpole} monopole operators are the only superpotential terms available, so the mechanism is essentially forced. 

Sections~\ref{sec_electric_moduli} and \ref{sec_magnetic_moduli} may now be read side by side, and the parallel is close, as summarized in Table~\ref{table_dualitymapping}.

\begin{table}[h!]
\begin{center}
\begin{tabular}{l|l|l}
& $USp(2n)_{n+\frac32}$ & $U(1)^r_{K_r}$ $=\mathcal{T}_{1,r}$ \\ \hline
global symmetry & $U(1)_a\times U(1)_R$ & $U(1)_y\times U(1)_R$ \\
absent factor & $U(1)_J$, as $\pi_1(USp(2n))=1$ & flavor acting on matter, all gauged \\
gauge invariant from matter & $M=J_{ab}Q^aQ^b\equiv0$ & $\prod_I(\Phi^{(I)})^{n_I}$ never neutral \\
Higgs branch & none & none \\
vacuum equations & $\tfrac{k}{2\pi}|\sigma|^2=0\Rightarrow\sigma=0$ &
$\tfrac{1}{2\pi}\sigma^{T}K_r\sigma=0\Rightarrow\sigma=0$ \\
Coulomb branch & lifted by the CS term & lifted by the CS term \\
moduli space & origin only & origin only \\
contact term & $\delta k_{aa}=n$ & $\vec A^{T}K_r^{-1}\vec A=r/2$
\end{tabular}
\end{center}
\caption{Duality mappings}
\label{table_dualitymapping}
\end{table}

The two mechanisms differ in detail but agree in outcome. 
For example, the meson of the symplectic theory vanishes for the group-theoretic reason that $J_{ab}$ is antisymmetric while the components of $Q$ commute, 
whereas on the Abelian side no gauge-invariant monomial exists 
because $\mathbf{Q}=\mathbb{I}$ permits no cancellation between distinct fields. 
Likewise $\sigma=0$ follows on the symplectic side from $k\neq0$ and on the Abelian side from the positive definiteness of $K_r$. 

This last point is a structural parallel rather than a consistency requirement. 
A duality is a statement about the infrared, and there is no fundamental obstruction to the two sides lifting their flat directions by different mechanisms.
For instance, Seiberg-like dualities routinely match a classical constraint on one side to a quantum-generated one on the other. 
It is nevertheless reassuring that here the mechanisms line up exactly. 
The symplectic theory has $\mathcal{W} = 0$, so the Chern-Simons term is the only available lifting mechanism there. 
Had the Coulomb branch of the Abelian theory been removed by the monopole superpotential instead, the parallel would have failed.
As discussed in section~\ref{sec_magnetic_moduli}, this is not what happens since on the Abelian side too, the Chern-Simons term does the lifting on its own, 
and $\mathcal{W}$ only fixes the symmetry content. 
We note that both statements are classical and independent of the R-charge assignment, 
since the vacuum equations \eqref{electric_vacuum_eqs} and \eqref{magnetic_vacuum_eqs} do not involve $U(1)_R$. 
What does depend on the assignment is the unflavored index diagnostic \eqref{trivial_index}, 
which collapses to $1$ only at the two values $R^{(r)} = \pm \tfrac12$.
At generic admissible $R^{(r)}$ the unflavored index is a non-trivial series, 
but its content is the discrete spectrum of dressed monopole operators rather than bosonic moduli, in agreement with the classical statement.

Since the Abelian theory is one of the rank-$0$ theories of \cite{Gang:2018huc,Gang:2023rei},
the duality \eqref{the_duality} implies that $USp(2n)_{n+\frac32}$ with a single
fundamental provides a Lagrangian, non-Abelian realization of the same rank-$0$ SCFT.
These theories are argued to enhance to $\mathcal{N}=4$ in the infrared, and the duality then
predicts the corresponding enhancement on the symplectic side.  As explained in
section~\ref{sec_electric_susy}, on the symplectic side that enhancement is instead manifest in
the Lagrangian.

This is also the point at which the question left open in the introduction, whether
\eqref{the_duality} is a form of mirror symmetry, can be addressed. The tadpole theory is
$\mathcal{T}_{1,r}$ in the notation of section~\ref{sec_general_rank0}, and a mirror of that
family, $\overline{\mathcal{T}}_{r,1}$, is conjectured in \cite{Creutzig:2024ljv}.  What
\eqref{the_duality} supplies is then a non-Abelian Lagrangian for a theory known so far only in
an Abelian presentation.

The point of view of section~\ref{sec:tadpole} makes this concrete, and the information runs
in the opposite direction to what one might expect. 
Supersymmetry enhancement to $\mathcal{N}=4$ is not a property of a theory alone but of a theory together with a choice of $U(1)_R$, 
the enhanced $\mathcal{N}=4$ R-symmetry containing a particular $U(1)_R$, 
and in the $\nu$ language one says that enhancement occurs at a particular value of the mixing parameter. 
Since $\nu$ is nothing but $R^{(r)}$, the statement is that enhancement occurs at a particular value $R^{(r)}_\star$ in the family \eqref{Rassignment}. 
On the Abelian side that value is not determined by anything in section~\ref{sec:tadpole},
and \cite{Gang:2018huc,Gang:2023rei} obtain it from the infrared.  On the symplectic side there
is no such freedom, the enhancement being manifest in the Lagrangian by
section~\ref{sec_electric_susy} and fixing $r_Q=\tfrac12$.  Since \eqref{the_duality} matches the
two one-parameter families as a whole rather than two isolated points, the dictionary
\eqref{the_dictionary} transports that determination and gives $R^{(r)}_{\star}=0$.  What the
duality supplies is therefore not a prediction of enhancement on the symplectic side, which is
already known there, but a prediction of where in the family \eqref{Rassignment} the Abelian
theory enhances.
The content of the statement is not the genericity of the R-charge, which by
\eqref{yhat_ahat} below carries no information beyond a single assignment, but the dictionary itself, since
\eqref{the_dictionary} fixes which combination of $U(1)_R$ and $U(1)_a$ on the symplectic side
corresponds to a given combination of $U(1)_R$ and $U(1)_y$ on the Abelian side, and hence
which Abelian R-charge is the one sitting inside the enhanced R-symmetry.

The contact terms \eqref{electric_contact} and \eqref{A_contact} agree at $n=r$ up to the normalization fixed by $y=a^{2}$. 
The mixed and gravitational contact terms $k_{aR},k_{RR},k_{gg}$ are not fixed by this argument and must be specified as part of the duality map. 
They are what allows the overall powers of $q$ in the two indices to differ, as discussed below.

\subsection{Vacuum counting}
\label{sec_vacuum_counting}
Constraint (i) of section~\ref{sec_dual_constraints} requires the pair
\eqref{two_vacuum_counts} to be reproduced under the two real mass deformations, and at general
$\ell$ this is immediate on the Abelian side.  The larger count, $\binom{n+\ell}{n}$, is the
number of Bethe vacua that \cite{Creutzig:2024ljv} assigns to $\mathcal{T}_{\ell,n}$, so the
match holds throughout the family, and it is the first of the checks of \eqref{family_main} that
is available at every $(n,\ell)$.  What the rest of this subsection adds is the mechanism, which
we can follow only at $\ell=1$, where the Bethe vacua are available in closed form.

At $\ell=1$ constraint (i) requires the pair $(1,\,n+1)$ of \eqref{two_vacuum_counts} to be
reproduced under the two real mass deformations.

Under \eqref{the_dictionary} a real mass for the symplectic $U(1)_a$ maps to an FI deformation of the Abelian theory along the direction $\xi^{(I)} = \zeta\,\vec A^{(I)} = \zeta\,I$. 
The corresponding vacuum equation $\frac{1}{2\pi}K_r\sigma = -\xi$ gives, using $K_r^{-1} = \tfrac12 C(T_r)$ and $C(T_r)\vec A = \mathbf{e}_r$,
\begin{align}
\sigma^{(I)} = -\pi\,\zeta\,\delta_{Ir},
\end{align}
so that the deformation switches on a vev only for the last node, exactly the node carrying the free R-charge. 
This is a non-trivial structural check. 
The distinguished node of the tadpole graph is the one the duality singles out. 

At $r=1$ the counting is immediate. 
There $K_1 = 2$, no superpotential is possible, and in the parity-symmetric scheme $\widetilde K_1 = \tfrac32$. 
The effective levels are $\widetilde K_1 \pm \tfrac12 \in \{1, 2 \}$ and the vacuum counts $\{1, 2 \}$ agree with $(1,\,n+1) = (1,2)$ at $n=1$. 
Equivalently, $r=1$ reproduces the level $k' = \tfrac{n+2}{2}$ with $\sum_i q_i^{2}=n$ demanded by \eqref{dual_condition}, i.e.\ the duality \eqref{n1_duality}.

For $r \ge 2$ the deformed theory is not a pure Chern-Simons theory, 
and the quantity $|\det K_{\rm eff}|$ that counted vacua above is simply not the relevant one. 
Two ingredients are missing from it. 
The FI deformation gives a vev only to $\sigma^{(r)}$, so only $\Phi^{(r)}$ becomes massive, 
and the chirals $\Phi^{(I)}$ with $I<r$ remain massless and charged. 
The monopole superpotential survives the deformation as well, 
since each $V_{\mathbf m_I}$ carries $A = \vec A\cdot\mathbf m_I = 0$ and is neutral under the symmetry being deformed. 
A determinant of a level matrix knows about neither.

The tool that does accommodate both is the effective twisted superpotential, 
and we now use it to reproduce the count $r+1$.

\paragraph{Bethe vacua}
We recall the framework briefly. 
Compactifying a 3d $\mathcal{N}=2$ theory on $\mathbb{R}^{2}\times S^{1}_{\beta}$ gives an effective 2d $\mathcal{N}=(2,2)$ theory on its Coulomb branch, 
governed by an effective twisted superpotential $\widetilde{\mathcal{W}}(u)$ of the dimensionless variables $u_I = a_I + i\beta\sigma^{(I)}$ \cite{Nekrasov:2009uh}. 
Its critical points, the solutions of the Bethe equations
\begin{align}
\label{bethe_general}
\Pi_I(u) \equiv \exp\Big(2\pi i \frac{\partial\widetilde{\mathcal{W}}}{\partial u_I}\Big) = 1,
\qquad I = 1,\dots,r,
\end{align}
are the Bethe vacua, and whenever the Witten index is well defined it equals the number of gauge inequivalent solutions of \eqref{bethe_general} \cite{Closset:2016arn}. 
Equivalently, the twisted index on $\Sigma_g\times S^1$ at $g=1$ is a UV computation of that index. 
The conventions are the ones we have already adopted, in particular the quantization in which the level matrix is the integral $K_r$ \cite{Closset:2023vos}.

Writing $x_I = e^{2\pi i u_I}$, the ingredients are standard. 
The Chern-Simons term at level $K_r$ contributes $\prod_J x_J^{\,(K_r)_{IJ}}$ to $\Pi_I$, 
a chiral multiplet of charge $+1$ under the $I$-th node contributes $(1-x_I)^{-1}$, 
and an FI parameter contributes its fugacity $\eta_I$. 
The R-charges do not enter, as they should not for a count of vacua.

The monopole superpotential does enters in one way. 
While it does not contribute to $\widetilde{\mathcal{W}}$, which is computed from the perturbative Kaluza-Klein spectrum on the Coulomb branch, 
it does is constrain the FI fugacities. 
$V_{\mathbf m_I}$ carries topological charge $\mathbf m_I$, so \eqref{W_tadpole} is neutral only if
\begin{align}
\prod_J \eta_J^{\,(\mathbf m_I)^{(J)}} = 1
\qquad (I = 1,\dots,r-1)\,,
\end{align}
whose general solution is $\eta_J = \eta^{\,\vec A^{(J)}} = \eta^{\,J}$. 
This is the same statement as the breaking of $U(1)_{J}^{r}$ to $U(1)_y$, now expressed in terms of the FI parameters, 
and it leaves the single parameter $\eta$. 
Consistently, it is the same direction $\xi^{(I)} = \zeta\,I$ that the dictionary
\eqref{the_dictionary} assigns to the symplectic real mass.  The two statements are one and the
same, both expressing that the superpotential leaves only $U(1)_y$ unbroken.

Assembling these, the Bethe equations of the tadpole theory read
\begin{align}
\label{bethe_tadpole}
\Pi_I = \eta^{I} \frac{\prod_J x_J^{(K_r)_{IJ}}}{1-x_I} = 1,
\qquad I = 1,\dots,r\,,
\end{align}
and the tadpole level matrix makes the monomial factorize,
\begin{align}
\prod_J x_J^{\,2\min(I,J)} \;=\; \prod_{c=1}^{I} w_c^{2}\,,
\qquad
w_c \;\equiv\; \prod_{J\ge c} x_J\,,
\end{align}
since $\min(I,J)$ counts the $c$ with $c\le I$ and $c\le J$. 
Equation \eqref{bethe_tadpole} is a system of $r$ Laurent polynomial equations, 
and we count its solutions in $(\mathbb{C}^{*})^{r}$.
In appendix~\ref{app_eliminant} we solve the system in closed form and prove that for generic $\eta$
\begin{align}
\label{bethe_count}
\#\,\text{Bethe vacua}\big[U(1)^r_{K_r}\big] = r+1
\qquad \text{for all } r\ge1\, .
\end{align}
An explicit Gr\"obner basis computation, as in \cite{Closset:2023jiq,Closset:2023vos},
independently confirms the count for $r\le4$, see appendix~\ref{app_counting}.

This is the second of the two symplectic counts, $\binom{n+1}{n} = n+1$ of \eqref{two_vacuum_counts} at $\ell=1$ and $n=r$: 
the one that grows with the rank, and hence the one that actually tests the identification $n=r$. 
Three descriptions of the same integer, with no free parameter left to adjust, now agree: 
the $n+1$ integrable weights of $\widehat{\mathfrak{usp}}(2n)_{\ell=1}$ on the symplectic side, 
the $r+1$ Bethe vacua of \eqref{bethe_count}, 
and the $r+1$ primaries of the minimal model $M(2,2r+3)$ that the twist of the Abelian theory is argued to produce \cite{Gang:2023rei}. 
We regard \eqref{bethe_count} as the main evidence for \eqref{the_duality} outside the index. 
We note also that the theory \eqref{tadpole_charges} and its Bethe vacua appear in \cite{Gaiotto:2024ioj}, 
where the level matrix is likewise written as $2\,C(T_r)^{-1}$ in terms of the tadpole diagram.

The remaining symplectic number, the $1$ of \eqref{two_vacuum_counts}, is not a count at
generic $\eta$, that count being $r+1$ for every generic value, but a statement about the
opposite infinite-mass limit.  Under \eqref{the_dictionary} the two symplectic real mass
deformations correspond to $\zeta\to\mp\infty$, that is to $\eta\to0$ and $\eta\to\infty$, and in
these limits solutions of \eqref{bethe_tadpole} run off to the boundary of
$(\mathbb{C}^{*})^{r}$, and deciding which of them still describe vacua of the deformed theory
requires a care that we do not attempt here.
The closed-form solution of appendix~\ref{app_eliminant} is at least suggestive.
As $\eta\to0$ exactly one solution, $x_I\to1$ for all $I$, remains at an interior point of the
torus, while all the others escape to the boundary, see the remark at the end of the appendix.
This does not affect the index comparison below, which tests the undeformed theories directly and at generic R-charge.

\subsection{Matching of the full indices}
\label{sec_index_matching}
The identity to be checked is
\begin{align}
\label{family_index_identity}
\mathcal{I}^{USp(2n)}\big(q;\hat a\big)\Big|_{k=n+\frac12+\ell}
\;=\;
\mathcal{I}\big[\mathcal{T}_{\ell,n}\big]\big(q;\hat y\big)\,,
\qquad
\hat y=q^{-\frac12}\hat a^{\,2}\,,
\end{align}
The two sides are \eqref{ind_USp2n_Nf1_full} at trial R-charge $r_{Q}$ and \eqref{ind_general_full} at
trial R-charge $\alpha$, and each depends on its parameter only through the combination shown,
by \eqref{ahat_electric} and \eqref{yhat_general}.  The identity is therefore one statement, not
a family of statements indexed by the frame, and we may check it in whichever frame is
convenient.

In this section we spell out the index test of the duality \eqref{the_duality} (for $\ell = 1$) rank by rank,
writing the explicit expressions to be compared for $n=r=1,2,3$
and recording the structural features of each pair of theories that the comparison probes.
These are the cases $(n,\ell)=(1,1)$, $(2,1)$ and $(3,1)$ of the family
\eqref{family_main}, and the analysis here is the most detailed we are able to give, since it
uses the explicit form of the tadpole index of section~\ref{sec_tadpole_index}.  The
corresponding tests at $\ell>1$ are collected in sections~\ref{sec_matching_12} to \ref{sec_matching_22}.

\paragraph{The identity to be checked}
The duality \eqref{the_duality} with the dictionary \eqref{the_dictionary} predicts,
for every admissible value of the trial R-charge,
\begin{align}
\label{index_identity}
\mathcal{I}^{USp(2n)}\big(q;a\big)\Big|_{r_Q}
\;=\;
\mathcal{I}_{r}\big(q;y\big)\Big|_{R^{(r)}=\frac12-r_Q\,,\;\;y\,=\,a^{2}}\,,
\qquad n=r\,.
\end{align}
This is \eqref{family_index_identity} at $\ell=1$, written in $a$ and $y$
rather than in the combinations $\hat a$ and $\hat y$, with the relationship given in \eqref{yhat_ahat} below.
From here on we use $n$ for the common rank, restoring the separate symbol $r$ only where
the two families of indices are compared off the diagonal $n=r$.
As functions, the two sides depend on their trial R-charges only through the combinations
\eqref{ahat_electric} and \eqref{ahat_magnetic}, which the dictionary relates as
\begin{align}
\label{yhat_ahat}
\hat y \;=\; y\, q^{-R^{(r)}}
\;=\; a^{2}\, q^{\,r_Q-\frac12}
\;=\; q^{-\frac12}\,\hat a^{2}\,,
\end{align}
so that the identities \eqref{index_identity} at two values $r_Q$ and $r_Q'$ are related by the
substitution $a \to a\, q^{(r_Q'-r_Q)/2}$, and agreement at one admissible value implies agreement at all.

\paragraph{Choice of expansion scheme}
Equation \eqref{yhat_ahat} might suggest eliminating the R-charges altogether and comparing the two
indices as series in $(q,\hat a)$ and $(q,\hat y)$.
We deliberately do \emph{not} do this, for the following reason.
Expanding in powers of $q$ at fixed $\hat a$ is, by the definition $\hat a = a\,q^{r_Q/2}$,
nothing but the expansion at the assignment $r_Q=0$.
At that assignment the scalar letters of $Q$ descend to weight $q^{0}$,
and the power of $q$ alone no longer truncates the mode sums entering the localization
formula.  Each order in $q$ receives contributions from arbitrarily many such letters, and no
finite computation returns an exact coefficient.
The assignment itself is nevertheless not pathological.
Each letter of $Q$ carries $U(1)_a$ charge $+1$.  Grading in addition by that charge is
legitimate in the regime \eqref{electric_contour}, where every geometric sum is an expansion in
positive powers of $a\,s^{\pm1}$, and it restores a finite computation at every order.
In this refined scheme the $r_Q=0$ expansion exists, its coefficients are Laurent polynomials in
$a$, and it coincides with the Abelian series at $R^{(r)}=+\tfrac12$, as the duality dictates.
For the systematic comparison we therefore keep the R-charge dependence explicit and quote,
for every rank, the expansions at one and the same set of three assignments:
the two interior values $r_Q=\tfrac12$ and $r_Q=\tfrac13$, i.e.\ $R^{(r)}=0$ and $R^{(r)}=\tfrac16$,
for which the $q$-grading alone truncates every sum,
together with the boundary value $r_Q=0$, i.e.\ $R^{(r)}=+\tfrac12$, in the refined scheme.
Using the same assignments at every rank independently tests the implementation and the grading
structure described next, and makes the rank-by-rank comparison of coefficients directly readable
at each of them.
\paragraph{Grading}
For $r_Q = p/N$, with $p$ and $N$ coprime, both sides of \eqref{index_identity} are series in $q^{1/N}$, with
\begin{align}
\label{grading_rule}
(\text{power of } q) \;\equiv\; r_Q\, A \pmod 1\,,
\qquad
A \;=\; \tfrac12\times\big(U(1)_a \text{ charge}\big) \;=\; U(1)_y \text{ charge}\,,
\end{align}
which is the grading rule of section~\ref{sec_tadpole_index} written at $R^{(r)}=\tfrac12-r_Q$.
The pattern of fractional powers in the symplectic expansion therefore exhibits the map of gradings
directly, independently of the values of the coefficients, and does so identically for all
ranks.  In particular, for the assignment $r_Q=0$ all powers are integral.

\paragraph{Contact terms}
The identity \eqref{index_identity} is claimed as a strict equality.
The relative contact terms left unfixed in section~\ref{sec_matching} could in principle act as an
overall power of $q$ (from $k_{RR}$, $k_{gg}$) or as a shift $y \to q^{c}\, y$ (from $k_{aR}$).
Both sides begin with $1+\dots$, which fixes the overall power to zero,
and the leading $|A|=1$ coefficients line up with $c=0$, as we exhibit at $n=r=1$ below.
Any residual mismatch found in the expansion should be attributed to these counterterms
and would then become part of the duality map.

\paragraph{A corollary}
The trivial-index property \eqref{trivial_index} of the Abelian theory at the two values
$R^{(r)}=\mp\tfrac12$, at which the R-charge is integrally quantized and the theory admits a
topological twist, translates, through the dictionary, into the all-order predictions
\begin{align}
\label{electric_trivial_prediction}
\mathcal{I}^{USp(2n)}\big(q;\,a^{2}=1\big)\Big|_{r_Q=1} \;=\; 1\,,
\qquad
\mathcal{I}^{USp(2n)}\big(q;\,a^{2}=1\big)\Big|_{r_Q=0} \;=\; 1\,,
\end{align}
the second understood in the refined scheme described above.
We have checked both directly: at $n=1$ to order $q^{7}$ and at $n=2,3$ to order $q^{5}$,
the specialization $a^{2}=1$ of the symplectic expansions at both $r_Q=1$ and $r_Q=0$
indeed collapses to $1$.
For $r_Q=0$ this is manifest in the boundary series quoted in the following subsections.

For the expansions themselves, only finitely many flux sectors contribute at any fixed order in $q$,
since the minimal $q$-degree of a sector grows with the flux,
as illustrated by \eqref{electric_m1_min} on the symplectic side and by
\eqref{magnetic_Am1}, \eqref{magnetic_Ap1} on the Abelian side.

\subsubsection{$SU(2)_{\frac52}$ with one doublet versus the minimal Abelian theory}
\label{sec_matching_rank1}
\paragraph{The symplectic theory}
Using $USp(2)\cong SU(2)$, the symplectic theory is $SU(2)_{5/2}$ with a single doublet $Q$ and $\mathcal{W}=0$.
The Weyl group is $\Zb_2$, the GNO flux is a single integer $m$,
and the only positive root is the long root $2e_1$,
so the vector multiplet contributes the two factors $(1-q^{|m|}s^{\pm2})$
and the monopole power $q^{-|m|}$.
The level satisfies $2k=5$, odd, so the classical factor is
$(-s)^{5m}=(-1)^{m}s^{5m}$: the theory is a spin theory, and the sign $(-1)^{m}$
is irremovable in the sense of section~\ref{sec:electric_level}.
The two mass deformations give $k_{\mathrm{eff}}\in\{2,3\}$ and, by \eqref{two_vacuum_counts},
the vacuum counts $(1,2)$.
The full index \eqref{ind_USp2n_Nf1_full} specializes to
\begin{align}
\label{ind_SU2}
\mathcal{I}^{USp(2)}
&= \frac{1}{2}\sum_{m\in\Zb} \oint \frac{ds}{2\pi i\, s}\,
(-s)^{5m}\;
q^{-|m|}\,
\big(1-q^{|m|}s^{2}\big)\big(1-q^{|m|}s^{-2}\big)
\nonumber\\
&\qquad\times
\big(q^{\frac{1-r_Q}{2}} a^{-1}\big)^{|m|}\,
\frac{(q^{1-\frac{r_Q}{2}+\frac{|m|}{2}}a^{-1}s^{-1};q)_{\infty}\,(q^{1-\frac{r_Q}{2}+\frac{|m|}{2}}a^{-1}s;q)_{\infty}}
     {(q^{\frac{r_Q}{2}+\frac{|m|}{2}}a\, s;q)_{\infty}\,(q^{\frac{r_Q}{2}+\frac{|m|}{2}}a\, s^{-1};q)_{\infty}}\,,
\end{align}
with the contour and expansion regime \eqref{electric_contour}.

\paragraph{The Abelian theory}
The rank-one tadpole theory is $U(1)_{K_1}$ with $K_1=2$ and a single chiral $\Phi$ of charge $+1$.
There is no monopole superpotential, since the sum in $\mathcal{W}$ has $r-1=0$ terms,
so the R-charge $R\equiv R^{(1)}$ is the free parameter and the topological symmetry is the full $U(1)_y$
with charge $A=m$.
The parity-symmetric level is $\widetilde K_1 = \tfrac32$,
half-integral as required by the parity anomaly of a single charge-one fermion,
and the effective levels $\widetilde K_1 \pm \tfrac12 \in \{1,2\}$ give
$|k_{\mathrm{eff}}|$ vacua each, reproducing the symplectic pair of vacuum counts $(1,2)$ directly.
Up to parity this is the minimal $\mathcal{N}=2$ theory of \cite{Gang:2018huc}.
The full index \eqref{ind_tadpole_full} specializes to
\begin{align}
\label{ind_tadpole1}
\mathcal{I}_1
= \sum_{m\in\Zb} \oint \frac{ds}{2\pi i\,s}\;
s^{2m}\,
\big(q^{-\frac12}y\big)^{m}\,
\Big(\frac{-\,q^{(1-R)/2}}{s}\Big)^{m^{+}}
\frac{\big(q^{1-\frac{R}{2}+\frac{|m|}{2}}s^{-1};q\big)_{\infty}}
     {\big(q^{\frac{R}{2}+\frac{|m|}{2}}s;q\big)_{\infty}}\,,
\end{align}
with contour $|s|=1$, $|q|<1$ and no condition on $y$.

The identification \eqref{index_identity} at this rank is the index form of
\eqref{n1_duality}.  Up to a parity transformation switching the sign of the $SU(2)$
Chern-Simons level, this duality has previously been conjectured \cite{Okazaki:2024paq} in the
presence of a boundary, with a check of matching half-indices.

\paragraph{Leading coefficients}
Before quoting the expansions we exhibit the $|A|=1$ sector,
which fixes the contact-term ambiguity discussed above.
On the Abelian side the cheapest contributions \eqref{magnetic_Ap1} and \eqref{magnetic_Am1} are,
restoring $y=\hat y\,q^{R}$ and setting $R=\tfrac12-r_Q$,
\begin{align}
q^{\frac32}\hat y \;=\; q^{\,1+r_Q}\,y\,,
\qquad
q^{\frac32}\hat y^{-1} \;=\; q^{\,2-r_Q}\,y^{-1}\,,
\end{align}
from $V_{+1}$ dressed with one $\bar\psi$ mode and $V_{-1}$ dressed with two $\phi$ modes, respectively.
On the symplectic side the natural candidate contributions with these weights arise already at zero flux:
the modes of $Q$ enter \eqref{ind_SU2} with weight $q^{\frac{r_Q}{2}+\ell}\,a\,s^{\pm1}$
and those of $\bar\psi$ with weight $q^{1-\frac{r_Q}{2}+\ell}\,a^{-1}s^{\mp1}$, $\ell\ge0$,
so the gauge invariants
\begin{align}
J_{ab}\,Q^{a}\partial Q^{b}\;\longmapsto\; q^{\,1+r_Q}\,a^{2}\,,
\qquad
J^{ab}\,\bar\psi_{a}\bar\psi_{b}\;\longmapsto\; q^{\,2-r_Q}\,a^{-2}\,,
\end{align}
match the two Abelian terms in $q$-degree and in $U(1)_a$ charge, for every $r_Q$
(the meson $J_{ab}\, Q^{a} Q^{b}$ itself vanishes by \eqref{meson_vanishes}, so the derivative is essential).
We stress that this identifies index contributions but
the precise operator map, including possible mixing with other states of the same charges,
is fixed only by the full expansion.

\paragraph{The expansions}
We have expanded \eqref{ind_SU2} and \eqref{ind_tadpole1} at the following three R-charge assignments and find
perfect agreement.

At $r_Q=\tfrac12$, corresponding to $R=0$ on the Abelian side, both sides give
\begin{align}
\label{rank1_expansion_rQhalf}
\mathcal{I}^{USp(2)}\Big|_{r_Q=\frac12}
= \mathcal{I}_1\Big|_{R=0}
&= 1 - q + \big(a^{2}+a^{-2}\big)q^{\frac32} - 2q^{2} + \big(a^{2}+a^{-2}\big)q^{\frac52} - 2q^{3}
\nonumber\\
&\quad
+ \big(a^{2}+a^{-2}\big)q^{\frac72} - 2q^{4} + \big(a^{4}+a^{-4}\big)q^{5}
- \big(a^{2}+a^{-2}\big)q^{\frac{11}{2}}
\nonumber\\
&\quad
+ \big(1+a^{4}+a^{-4}\big)q^{6} - 3\big(a^{2}+a^{-2}\big)q^{\frac{13}{2}}
+ \big(5+2a^{4}+2a^{-4}\big)q^{7}
\nonumber\\
&\quad
- 5\big(a^{2}+a^{-2}\big)q^{\frac{15}{2}} + \big(7+2a^{4}+2a^{-4}\big)q^{8}
- 7\big(a^{2}+a^{-2}\big)q^{\frac{17}{2}}
\nonumber\\
&\quad
+ \big(11+3a^{4}+3a^{-4}\big)q^{9} + \dots\,,
\end{align}
which we have verified up to and including order $q^{15}$.

At $r_Q=\tfrac13$, corresponding to $R=\tfrac16$, both sides give
\begin{align}
\label{rank1_expansion_rQthird}
\mathcal{I}^{USp(2)}\Big|_{r_Q=\frac13}
= \mathcal{I}_1\Big|_{R=\frac16}
&= 1 - q + a^{2}q^{\frac43} + a^{-2}q^{\frac53} - 2q^{2} + a^{2}q^{\frac73} + a^{-2}q^{\frac83} - 2q^{3}
+ a^{2}q^{\frac{10}{3}}
\nonumber\\
&\quad
 + a^{-2}q^{\frac{11}{3}}
- 2q^{4} + a^{4}q^{\frac{14}{3}} + \big(a^{-4}-a^{2}\big)q^{\frac{16}{3}}
+ \big(a^{4}-a^{-2}\big)q^{\frac{17}{3}} + q^{6}
\nonumber\\
&\quad
+ \big(a^{-4}-3a^{2}\big)q^{\frac{19}{3}} + \big(2a^{4}-3a^{-2}\big)q^{\frac{20}{3}} + 5q^{7} + \dots\,,
\end{align}
likewise verified up to and including order $q^{15}$.

Finally, at the boundary assignment $r_Q=0$, corresponding to $R=\tfrac12$,
both sides give, in the refined scheme,
\begin{align}
\label{rank1_expansion_rQzero}
\mathcal{I}^{USp(2)}\Big|_{r_Q=0}
= \mathcal{I}_1\Big|_{R=\frac12}
&= 1 + \big(a^{2}-1\big)q + \big(a^{2}-2+a^{-2}\big)q^{2} + \big(a^{2}-2+a^{-2}\big)q^{3}
\nonumber\\
&\quad
+ \big(a^{4}-2+a^{-2}\big)q^{4} + \big(a^{4}-a^{2}\big)q^{5} + \dots\,,
\end{align}
verified up to and including order $q^{7}$.
We have further checked the identity at $r_Q=\tfrac14$, i.e.\ $R=\tfrac14$,
where the grading has $N=4$, to order $q^{15}$.

Several features of \eqref{rank1_expansion_rQhalf}--\eqref{rank1_expansion_rQzero}
deserve comment.
First, the three series are related by the substitution rule below \eqref{yhat_ahat},
$a\to a\,q^{-1/12}$ between the first two and $a\to a\,q^{-1/4}$ from the first to the third,
so their separate agreement is a consistency check on the implementation rather than
independent evidence.  The fractional powers realize the grading \eqref{grading_rule} at $N=2$ and $N=3$ respectively,
with all powers integral at $r_Q=0$, exactly as predicted by the Abelian grading.
Second, the leading non-trivial terms are precisely the $|A|=1$ pair identified above:
$q^{1+r_Q}a^{2}$ and $q^{2-r_Q}a^{-2}$, which merge into $(a^{2}+a^{-2})\,q^{3/2}$ at $r_Q=\tfrac12$,
split into $a^{2}q^{4/3}$ and $a^{-2}q^{5/3}$ at $r_Q=\tfrac13$,
and sit at $a^{2}q$ and $a^{-2}q^{2}$ at $r_Q=0$.
Third, the term $-q$, common to both sides and carrying no $U(1)_a$ charge,
is naturally attributed to the conserved-current multiplet of the single $U(1)$ global symmetry,
consistent with the matching of global symmetries and the absence of any further flavor current.
Fourth, the boundary series \eqref{rank1_expansion_rQzero} collapses to $1$ at $a^{2}=1$,
term by term in $q$, in accordance with \eqref{trivial_index} at $R^{(r)}=+\tfrac12$,
as stated in the corollary \eqref{electric_trivial_prediction} above.
Finally, the first $|A|=2$ term appears at $q^{4+2r_Q}a^{4}$
rather than at the naive minimum $q^{2+2r_Q}a^{4}$ of \eqref{electric_m1_max}:
the top term of the $m=1$ sector cancels against other contributions of the same weight.
Within the verified range the coefficients of \eqref{rank1_expansion_rQhalf}
are symmetric under $a\to a^{-1}$ (equivalently $\hat y\to\hat y^{-1}$).
Pushing the expansion of the Abelian index $\mathcal{I}_1$ at $R^{(1)}=0$ further,
in yet another independent implementation with flux range verified sufficient by the
sector bounds \eqref{magnetic_Am1} and \eqref{magnetic_Ap1},
we find that the symmetry persists exactly through order $q^{25}$,
strongly suggesting that it is exact at rank one.
Neither theory has a manifest charge-conjugation symmetry, and the symmetry is indeed
absent at rank two already at order $q^{5/2}$, see \eqref{rank2_expansion_rQhalf}.
A natural reading of the symmetry is 3d mirror self-duality rather than a statement about
the amount of supersymmetry.  The tadpole theory of rank $r$ is $\mathcal{T}_{1,r}$ in the
notation of section~\ref{sec_general_rank0}, and the indices of that family are proposed in
\cite{Creutzig:2024ljv} to satisfy
$\mathcal{I}[\mathcal{T}_{\ell,n}](\hat y)=\mathcal{I}[\mathcal{T}_{n,\ell}](\hat y^{-1})$ under
3d mirror symmetry.  At $r=1$ the mirror is the theory itself, up to a parity transformation, so
the index is invariant under $\hat y\to\hat y^{-1}$.  For $r\ge2$ the mirror is a different
Abelian theory and no such invariance is implied.  This accounts both for the exact symmetry at
rank one and for its failure from rank two onwards, and does not indicate the amount of
supersymmetry.  Section~\ref{sec_level_pair} places the observation in the family as a whole.

It may be worth addressing an apparent tension with rank $0$.  Since both branches are
points, one might expect nothing to distinguish $SU(2)_H$ from $SU(2)_C$ and the index to be
symmetric under $\hat y\to\hat y^{-1}$ for every $r$.  The index does count local operators, but
not only the ones parametrizing the branches.  Rank $0$ constrains the branch rings alone,
equivalently the two specializations \eqref{trivial_index} at $y=1$ and $R^{(r)}=\pm\tfrac12$,
and away from those points the index is a non-trivial series counting operators of a quite
different kind.  The two contributions at $q^{3/2}$ found above are a case in point.  On the
symplectic side they are $J_{ab}Q^{a}\partial Q^{b}$ and $J^{ab}\bar\psi_{a}\bar\psi_{b}$, neither
of which is a branch operator, the meson vanishing identically by \eqref{meson_vanishes}.  There
is therefore no reason for the flavored index to be symmetric, and indeed
$\mathcal{T}_{1,2}$ and $\mathcal{T}_{2,1}$ are both rank $0$ while their indices are merely
related by $\hat y\to\hat y^{-1}$ rather than equal.  Since $U(1)_y$ is the axial combination of
the two R-symmetry Cartans, $\hat y\to\hat y^{-1}$ interchanges the roles of $SU(2)_H$ and
$SU(2)_C$, so the symmetry at $r=1$ reflects the coincidence of the two twists there and not
rank $0$, which holds for every $r$.

\subsubsection{$USp(4)_{\frac72}$ versus the $U(1)^2$ tadpole theory}
\label{sec_matching_rank2}
\paragraph{The symplectic theory}
The symplectic theory is $USp(4)_{7/2}$ with one fundamental $\mathbf{4}$ and $\mathcal{W}=0$.
This is the first rank at which the full non-Abelian structure enters the index, meaning
short roots and a Weyl group larger than $\Zb_2$.
The Weyl group is $\Zb_2^2\rtimes S_2$ of order $8$,
the fluxes are $(m_1,m_2)\in\Zb^2$,
and besides the long roots $\pm2e_{1,2}$ the short roots $\pm e_1\pm e_2$ appear,
contributing the four off-diagonal factors and the cross terms
$|m_1\pm m_2|$ in the monopole power.
The classical factor carries $2k=7$, so the integrand comes with the sign $(-1)^{m_1+m_2}$
and the theory is again a spin theory.
The vacuum counts \eqref{two_vacuum_counts} are $(1,3)$.
Explicitly,
\begin{align}
\label{ind_USp4}
\mathcal{I}^{USp(4)}
&= \frac{1}{8}\sum_{(m_1,m_2)\in\Zb^{2}} \oint
\prod_{i=1}^{2}\frac{ds_i}{2\pi i\, s_i}\,(-s_i)^{7m_i}\;
q^{-|m_1|-|m_2|-\frac12\big(|m_1+m_2|+|m_1-m_2|\big)}
\nonumber\\
&\quad\times
\prod_{i=1}^{2}\big(1-q^{|m_i|}s_i^{\pm2}\big)\;
\big(1-q^{\frac{|m_1+m_2|}{2}}(s_1 s_2)^{\pm1}\big)
\big(1-q^{\frac{|m_1-m_2|}{2}}(s_1 s_2^{-1})^{\pm1}\big)
\nonumber\\
&\quad\times
\big(q^{\frac{1-r_Q}{2}} a^{-1}\big)^{|m_1|+|m_2|}
\prod_{i=1}^{2}
\frac{(q^{1-\frac{r_Q}{2}+\frac{|m_i|}{2}}a^{-1} s_i^{\mp};q)_{\infty}}
     {(q^{\frac{r_Q}{2}+\frac{|m_i|}{2}}a\, s_i^{\pm};q)_{\infty}}\,,
\end{align}
where, as throughout, a superscript $\pm$ ($\mp$) means that both factors are multiplied.

\paragraph{The Abelian theory}
The rank-two tadpole theory has gauge group $U(1)^2$, level matrix
\begin{align}
K_2 = \begin{pmatrix} 2 & 2 \\ 2 & 4 \end{pmatrix}
= 2\,C(T_2)^{-1}\,,
\qquad
C(T_2) = \begin{pmatrix} 2 & -1 \\ -1 & 1 \end{pmatrix},
\qquad
\det K_2 = 4\,,
\end{align}
two charge-one chirals $\Phi^{(1)},\Phi^{(2)}$ with $\mathbf{Q}=\mathbb{I}$,
and the single monopole superpotential term
\begin{align}
\mathcal{W} = V_{(2,-1)}\,,
\qquad
\mathbf m_1 = C(T_2)\,\mathbf{e}_1 = (2,-1)\,.
\end{align}
One checks from \eqref{tadpole_monopole_gauge_charge} that since
$K_2\mathbf m_1 = (2,0)$ and $\big((\mathbf m_1)^{(J)}\big)^{+} = (2,0)$,
$q_I[V_{(2,-1)}] = 0$ so the superpotential is gauge invariant.
Similarly, from \eqref{tadpole_monopole_charges}, marginality $R[V_{(2,-1)}]=2(1-R^{(1)})=2$ enforces $R^{(1)}=0$,
leaving $R\equiv R^{(2)}$ free, in accordance with \eqref{Rassignment}.
The superpotential breaks $U(1)_{J_1}\times U(1)_{J_2}$ to the single $U(1)_y$ with charge
$A = m^{(1)}+2m^{(2)}$, i.e.\ $\vec A = (1,2)$.
The parity-symmetric level matrix $\widetilde K_2 = K_2 - \tfrac12\mathbb{I}$ has diagonal entries
$(\tfrac32,\tfrac72)$, half-integral as dictated by the parity anomaly.
The Bethe computation \eqref{bethe_count} gives $3$ vacua at generic $\eta$, matching the symplectic $n+1=3$.
The full index \eqref{ind_tadpole_full} specializes to
\begin{align}
\label{ind_tadpole2}
\mathcal{I}_2
&= \sum_{(m^{(1)},m^{(2)})\in\Zb^{2}} \oint
\prod_{I=1}^{2}\frac{ds^{(I)}}{2\pi i\,s^{(I)}}\;
\big(s^{(1)}\big)^{2m^{(1)}+2m^{(2)}}\,
\big(s^{(2)}\big)^{2m^{(1)}+4m^{(2)}}\;
\big(q^{-\frac12}y\big)^{m^{(1)}+2m^{(2)}}
\nonumber\\
&\qquad\times
\Big(\frac{-\,q^{\frac12}}{s^{(1)}}\Big)^{(m^{(1)})^{+}}
\frac{\big(q^{1+\frac{|m^{(1)}|}{2}}(s^{(1)})^{-1};q\big)_{\infty}}
     {\big(q^{\frac{|m^{(1)}|}{2}}s^{(1)};q\big)_{\infty}}
\nonumber\\
&\qquad\times
\Big(\frac{-\,q^{(1-R)/2}}{s^{(2)}}\Big)^{(m^{(2)})^{+}}
\frac{\big(q^{1-\frac{R}{2}+\frac{|m^{(2)}|}{2}}(s^{(2)})^{-1};q\big)_{\infty}}
     {\big(q^{\frac{R}{2}+\frac{|m^{(2)}|}{2}}s^{(2)};q\big)_{\infty}}\,.
\end{align}

\paragraph{The expansions}
As observed at the end of sections~\ref{sec_index} and \ref{sec_tadpole_index},
the leading $|A|=1$ contributions are rank independent:
on the Abelian side the cheapest fluxes are $-\mathbf{e}_1$ and $\mathbf{e}_2-\mathbf{e}_1$,
reproducing the pair $q^{1+r_Q}y$ and $q^{2-r_Q}y^{-1}$ of the previous subsection,
and on the symplectic side the same zero-flux invariants $J_{ab}Q^{a}\partial Q^{b}$ and
$J^{ab}\bar\psi_{a}\bar\psi_{b}$ are present.
The explicit expansion demonstrates where the rank first enters.
We have expanded \eqref{ind_USp4} and \eqref{ind_tadpole2} at the three R-charge assignments
listed below and find
perfect agreement.

At $r_Q=\tfrac12$, corresponding to $R=0$ on the Abelian side, both sides give
\begin{align}
\label{rank2_expansion_rQhalf}
\mathcal{I}^{USp(4)}\Big|_{r_Q=\frac12}
= \mathcal{I}_2\Big|_{R=0}
&= 1 - q + \big(a^{2}+a^{-2}\big)q^{\frac32} - 2q^{2} + a^{2}q^{\frac52}
+ \big({-1}+a^{-4}\big)q^{3}
\nonumber\\
&\quad
 + \big(a^{2}-a^{-2}\big)q^{\frac72}
+ a^{-4}q^{4} - \big(a^{2}+3a^{-2}\big)q^{\frac92}
+ \big(4+a^{4}+2a^{-4}\big)q^{5}
\nonumber\\
&\quad
 - \big(3a^{2}+5a^{-2}\big)q^{\frac{11}{2}}
+ \big(6+a^{4}+a^{-4}\big)q^{6}
\nonumber\\
&\quad
+ \big(a^{-6}-6a^{-2}-6a^{2}\big)q^{\frac{13}{2}}
+ \big(11+3a^{4}\big)q^{7} + \dots\,.
\end{align}

At $r_Q=\tfrac13$, corresponding to $R=\tfrac16$,
\begin{align}
\label{rank2_expansion_rQthird}
\mathcal{I}^{USp(4)}\Big|_{r_Q=\frac13}
= \mathcal{I}_2\Big|_{R=\frac16}
&= 1 - q + a^{2}q^{\frac43} + a^{-2}q^{\frac53} - 2q^{2} + a^{2}q^{\frac73} - q^{3}
+ \big(a^{2}+a^{-4}\big)q^{\frac{10}{3}}
\nonumber\\
&\quad
- a^{-2}q^{\frac{11}{3}} + \big(a^{-4}-a^{2}\big)q^{\frac{13}{3}}
+ \big(a^{4}-3a^{-2}\big)q^{\frac{14}{3}} + 4q^{5}
\nonumber\\
&\quad
+ \big(2a^{-4}-3a^{2}\big)q^{\frac{16}{3}}
+ \big(a^{4}-5a^{-2}\big)q^{\frac{17}{3}} + 6q^{6}
+ \big(a^{-4}-6a^{2}\big)q^{\frac{19}{3}}
\nonumber\\
&\quad
+ \big(3a^{4}-6a^{-2}\big)q^{\frac{20}{3}}
+ \big(11+a^{-6}\big)q^{7} + \dots\,,
\end{align}
both verified up to and including order $q^{7}$.

At the boundary assignment $r_Q=0$, corresponding to $R=\tfrac12$, in the refined scheme,
\begin{align}
\label{rank2_expansion_rQzero}
\mathcal{I}^{USp(4)}\Big|_{r_Q=0}
= \mathcal{I}_2\Big|_{R=\frac12}
&= 1 + \big(a^{2}-1\big)q + \big(a^{2}-2+a^{-2}\big)q^{2} + \big(a^{2}-1\big)q^{3}
\nonumber\\
&\quad
+ \big(a^{4}-a^{2}-a^{-2}+a^{-4}\big)q^{4}
+ \big(4+a^{4}-3a^{2}-3a^{-2}+a^{-4}\big)q^{5} + \dots\,,
\end{align}
verified up to and including order $q^{5}$ and collapsing to $1$ at $a^{2}=1$.

We comment on several features of the expansions.
The three series are related by the substitution rule below \eqref{yhat_ahat},
and their fractional powers realize the grading \eqref{grading_rule} at $N=2$ and $N=3$,
with integral powers at $r_Q=0$.
All begin exactly like the corresponding rank-one series
\eqref{rank1_expansion_rQhalf}--\eqref{rank1_expansion_rQzero},
confirming the rank independence of the leading $|A|\le1$ contributions
and, through the term $-q$, the presence of a single flavor current at this rank as well.
The first rank-dependent coefficients appear immediately afterwards, e.g.\ at order
$q^{5/2}$ for $r_Q = \tfrac{1}{2}$.
The term $a^{-2}q^{5/2}$ of rank one is absent here, and at order $q^{3}$ the coefficient is
$-1+a^{-4}$ instead of $-2$.
At $r_Q=0$ the two statements merge into the $q^{3}$ coefficient,
$a^{2}-1$ in \eqref{rank2_expansion_rQzero} against $a^{2}-2+a^{-2}$ in \eqref{rank1_expansion_rQzero}.
The rank is therefore visible at rather low order, and in the negative-$A$ sector,
not only through the top $a$-charges of the $m=(1,0)$ flux sector estimated in
\eqref{electric_m1_min} and \eqref{electric_m1_max}.
Note also that, in contrast with \eqref{rank1_expansion_rQhalf},
the series \eqref{rank2_expansion_rQhalf} is not symmetric under $a\to a^{-1}$,
the asymmetry setting in already at order $q^{5/2}$.

The term $a^{-4}q^{3}$ (more generally $q^{\,3+2R}\,y^{-2}$, i.e.\ $a^{-4}q^{10/3}$ at $R=\tfrac16$)
has a direct identification on the Abelian side. It is the dressed monopole of the example in
section~\ref{sec_tadpole_monopoles} with $c=2$:
$V_{-\mathbf{e}_2}$ dressed with two modes of $\phi^{(1)}$ and four of $\phi^{(2)}$,
carrying $A=-2$ and $R=4R^{(2)}$.
It has no counterpart at rank one, where the first $A=-2$ contribution occurs only at order $q^{5}$.
On the symplectic side, the natural candidate involves four $\bar\psi$ letters,
of the type $\big(J^{ab}\bar\psi_{a}\bar\psi_{b}\big)^{2}$, with weight
$q^{4-2r_Q}a^{-4} = q^{3}a^{-4}$ at $r_Q=\tfrac12$.
Its lowest component is non-vanishing only for $n\ge2$, by Fermi statistics of the finitely many
lowest $\bar\psi$ modes, consistent with its absence at rank one.
As at rank one, this identifies index contributions rather than a complete operator map.

\subsubsection{$USp(6)_{\frac92}$ versus the $U(1)^3$ tadpole theory}
\label{sec_matching_rank3}
\paragraph{The symplectic theory}
The symplectic theory is $USp(6)_{9/2}$ with one fundamental $\mathbf{6}$ and $\mathcal{W}=0$.
The Weyl group has order $2^3\,3! = 48$, the fluxes are $(m_1,m_2,m_3)\in\Zb^3$,
and there are now three pairs $i<j$ of short-root contributions.
The sign is $(-1)^{m_1+m_2+m_3}$ from $2k=9$,
and the vacuum counts \eqref{two_vacuum_counts} are $(1,4)$.
The full index is
\begin{align}
\label{ind_USp6}
\mathcal{I}^{USp(6)}
&= \frac{1}{48}\sum_{(m_1,m_2,m_3)\in\Zb^{3}} \oint
\prod_{i=1}^{3}\frac{ds_i}{2\pi i\, s_i}\,(-s_i)^{9m_i}\;
q^{-\sum_{i}|m_i|-\frac12\sum_{i<j}\big(|m_i+m_j|+|m_i-m_j|\big)}
\nonumber\\
&\quad\times
\prod_{i=1}^{3}\big(1-q^{|m_i|}s_i^{\pm2}\big)
\prod_{1\le i<j\le3}
\big(1-q^{\frac{|m_i+m_j|}{2}}(s_i s_j)^{\pm1}\big)
\big(1-q^{\frac{|m_i-m_j|}{2}}(s_i s_j^{-1})^{\pm1}\big)
\nonumber\\
&\quad\times
\big(q^{\frac{1-r_Q}{2}} a^{-1}\big)^{\sum_i|m_i|}
\prod_{i=1}^{3}
\frac{(q^{1-\frac{r_Q}{2}+\frac{|m_i|}{2}}a^{-1} s_i^{\mp};q)_{\infty}}
     {(q^{\frac{r_Q}{2}+\frac{|m_i|}{2}}a\, s_i^{\pm};q)_{\infty}}\,.
\end{align}

\paragraph{The Abelian theory}
The rank-three tadpole theory has gauge group $U(1)^3$, level matrix
\begin{align}
K_3 = \begin{pmatrix} 2 & 2 & 2 \\ 2 & 4 & 4 \\ 2 & 4 & 6 \end{pmatrix}
= 2\,C(T_3)^{-1}\,,
\qquad
C(T_3) = \begin{pmatrix} 2 & -1 & 0 \\ -1 & 2 & -1 \\ 0 & -1 & 1 \end{pmatrix},
\qquad
\det K_3 = 8\,,
\end{align}
three charge-one chirals with $\mathbf{Q}=\mathbb{I}$, and the monopole superpotential
\begin{align}
\mathcal{W} = V_{(2,-1,0)} + V_{(-1,2,-1)}\,,
\qquad
\mathbf m_I = C(T_3)\,\mathbf{e}_I\,,\quad I \in \{1, 2\}\,.
\end{align}
Both terms pass the checks of section~\ref{sec_magnetic_level_choice}:
$K_3\mathbf m_1 = (2,0,0)$ and $K_3\mathbf m_2 = (0,2,0)$ with
$\big((\mathbf m_I)^{(J)}\big)^{+} = 2\delta_{IJ}$,
so both are gauge invariant, and marginality enforces $R^{(1)}=R^{(2)}=0$ with $R\equiv R^{(3)}$ free.
The two superpotential terms break $U(1)_J^3$ to $U(1)_y$ with charge
$A = m^{(1)}+2m^{(2)}+3m^{(3)}$, i.e.\ $\vec A = (1,2,3)$.
The diagonal of $\widetilde K_3 = K_3-\tfrac12\mathbb{I}$ is $(\tfrac32,\tfrac72,\tfrac{11}2)$,
and the Bethe count \eqref{bethe_count} gives $4=n+1$ vacua.
The full index specializes to
\begin{align}
\label{ind_tadpole3}
\mathcal{I}_3
&= \sum_{\mathbf m\in\Zb^{3}} \oint
\prod_{I=1}^{3}\frac{ds^{(I)}}{2\pi i\,s^{(I)}}\;
\big(s^{(1)}\big)^{2(m^{(1)}+m^{(2)}+m^{(3)})}\,
\big(s^{(2)}\big)^{2m^{(1)}+4m^{(2)}+4m^{(3)}}\,
\big(s^{(3)}\big)^{2m^{(1)}+4m^{(2)}+6m^{(3)}}
\nonumber\\
&\qquad\times
\big(q^{-\frac12}y\big)^{m^{(1)}+2m^{(2)}+3m^{(3)}}
\prod_{I=1}^{2}
\Big(\frac{-\,q^{\frac12}}{s^{(I)}}\Big)^{(m^{(I)})^{+}}
\frac{\big(q^{1+\frac{|m^{(I)}|}{2}}(s^{(I)})^{-1};q\big)_{\infty}}
     {\big(q^{\frac{|m^{(I)}|}{2}}s^{(I)};q\big)_{\infty}}
\nonumber\\
&\qquad\times
\Big(\frac{-\,q^{(1-R)/2}}{s^{(3)}}\Big)^{(m^{(3)})^{+}}
\frac{\big(q^{1-\frac{R}{2}+\frac{|m^{(3)}|}{2}}(s^{(3)})^{-1};q\big)_{\infty}}
     {\big(q^{\frac{R}{2}+\frac{|m^{(3)}|}{2}}s^{(3)};q\big)_{\infty}}\,.
\end{align}

\paragraph{The expansions}
The pattern of the previous subsection continues.
The $|A|\le1$ terms of \eqref{ind_tadpole3} coincide with those of $\mathcal{I}_1$ and $\mathcal{I}_2$,
the cheapest positive flux now being $\mathbf m = \mathbf{e}_3-\mathbf{e}_2$.
We have expanded \eqref{ind_USp6} and \eqref{ind_tadpole3} at the same three assignments
as before and find perfect agreement.
At $r_Q=\tfrac12$, corresponding to $R=0$, both sides give
\begin{align}
\label{rank3_expansion_rQhalf}
\mathcal{I}^{USp(6)}\Big|_{r_Q=\frac12}
= \mathcal{I}_3\Big|_{R=0}
&= 1 - q + \big(a^{2}+a^{-2}\big)q^{\frac32} - 2q^{2} + a^{2}q^{\frac52}
+ \big(a^{-4}-1\big)q^{3}
\nonumber\\
&\quad
 + \big(a^{2}-a^{-2}\big)q^{\frac72}
+ \big(a^{-6}-2a^{-2}-a^{2}\big)q^{\frac92} + \big(4+a^{4}\big)q^{5}
\nonumber\\
&\quad
+ \big(a^{-6}-3a^{-2}-3a^{2}\big)q^{\frac{11}{2}}
+ \big(5+a^{4}-2a^{-4}\big)q^{6}
\nonumber\\
&\quad
+ \big(2a^{-6}-2a^{-2}-6a^{2}\big)q^{\frac{13}{2}}
+ \big(9+3a^{4}-4a^{-4}\big)q^{7} + \dots\,,
\end{align}
in which the coefficient of $q^{4}$ vanishes identically.

At $r_Q=\tfrac13$, corresponding to $R=\tfrac16$,
\begin{align}
\label{rank3_expansion_rQthird}
\mathcal{I}^{USp(6)}\Big|_{r_Q=\frac13}
= \mathcal{I}_3\Big|_{R=\frac16}
&= 1 - q + a^{2}q^{\frac43} + a^{-2}q^{\frac53} - 2q^{2} + a^{2}q^{\frac73} - q^{3}
+ \big(a^{2}+a^{-4}\big)q^{\frac{10}{3}}
\nonumber\\
&\quad
 - a^{-2}q^{\frac{11}{3}}
- a^{2}q^{\frac{13}{3}} + \big(a^{4}-2a^{-2}\big)q^{\frac{14}{3}} + \big(4+a^{-6}\big)q^{5}
- 3a^{2}q^{\frac{16}{3}}
\nonumber\\
&\quad
 + \big(a^{4}-3a^{-2}\big)q^{\frac{17}{3}}
+ \big(5+a^{-6}\big)q^{6} - \big(6a^{2}+2a^{-4}\big)q^{\frac{19}{3}}
\nonumber\\
&\quad
+ \big(3a^{4}-2a^{-2}\big)q^{\frac{20}{3}} + \big(9+2a^{-6}\big)q^{7} + \dots\,,
\end{align}
both verified up to and including order $q^{7}$,
the two series again being related by $a\to a\,q^{-1/12}$.

At the boundary assignment $r_Q=0$, corresponding to $R=\tfrac12$, in the refined scheme,
\begin{align}
\label{rank3_expansion_rQzero}
\mathcal{I}^{USp(6)}\Big|_{r_Q=0}
= \mathcal{I}_3\Big|_{R=\frac12}
&= 1 + \big(a^{2}-1\big)q + \big(a^{2}-2+a^{-2}\big)q^{2} + \big(a^{2}-1\big)q^{3}
\nonumber\\
&\quad
+ \big(a^{4}-a^{2}-a^{-2}+a^{-4}\big)q^{4}
+ \big(4+a^{4}-3a^{2}-2a^{-2}\big)q^{5} + \dots\,,
\end{align}
verified up to and including order $q^{5}$ and collapsing to $1$ at $a^{2}=1$.

Through order $q^{7/2}$ the series coincide with the rank-two ones,
including the $c=2$ dressed-monopole term $a^{-4}q^{4-2r_Q}$.
The first coefficient distinguishing $n=3$ from $n=2$ appears at order $q^{5-2r_Q}$:
the term $a^{-4}q^{\,5-2r_Q}$, which is $a^{-4}q^{4}$ at $r_Q=\tfrac12$, $a^{-4}q^{13/3}$
at $r_Q=\tfrac13$ and $a^{-4}q^{5}$ at $r_Q=0$, is present with coefficient $1$ at rank two and
absent at rank three,
in precise parallel with the rank-one/rank-two transition,
where the term $a^{-2}q^{3-r_Q}$ dropped.
It is accompanied by an $a^{-2}$ difference at order $q^{5-r_Q}$,
where the coefficient is $-3$ at rank two and $-2$ at rank three,
as visible at $q^{9/2}$ in \eqref{rank2_expansion_rQhalf} and \eqref{rank3_expansion_rQhalf}.
At the boundary assignment the two differences collide at order $q^{5}$.
The boundary series \eqref{rank2_expansion_rQzero} and \eqref{rank3_expansion_rQzero}
coincide through order $q^{4}$ and differ precisely in the $a^{-4}$ and $a^{-2}$
coefficients of $q^{5}$.
Both differences are instances of a single pattern, which we now describe.

Introducing the weight
\begin{align}
w \;\equiv\; q^{\,2-r_Q}\,a^{-2}
\end{align}
of $J^{ab}\bar\psi_{a}\bar\psi_{b}$, the expansions show that,
within the verified range,
the coefficient of $w^{c}$ equals $1$ for all $c\le n$ and vanishes for $c>n$,
while the coefficient of $q\cdot w^{c}$ equals $1$ precisely at rank $n=c$.
The matching of the tower at $w^{c}$ can be understood precisely.
On the Abelian side it is the dressed monopole of the example of section~\ref{sec_tadpole_monopoles},
$V_{-\mathbf{e}_c}$ dressed with $2\min(I,c)$ modes of $\phi^{(I)}$,
whose weight is exactly $w^{c}$ and which exists if and only if $c\le r$.
On the symplectic side it is $\big(J^{ab}\bar\psi_{a}\bar\psi_{b}\big)^{c}$,
non-vanishing if and only if $c\le n$ by Fermi statistics of the $2n$ lowest $\bar\psi$ modes.
The $c=3$ member is the new term $a^{-6}q^{6-3r_Q}$
visible in \eqref{rank3_expansion_rQhalf} and \eqref{rank3_expansion_rQthird}
and absent for the lower ranks.
This single tower therefore detects the rank on both sides of the duality
and its matching requires $n=r$ exactly,
realizing very concretely the off-diagonal statement implicit in \eqref{the_duality}.
Specifically, within the family $\{\mathcal{I}^{USp(2n)}\}_{n\le3}$ versus $\{\mathcal{I}_r\}_{r\le3}$
the indices agree along the diagonal $n=r$
and first disagree off it at $q\,w^{\min(n,r)}$, that is at order
$q^{\,1+\min(n,r)(2-r_Q)}$, which is $q^{\,3-r_Q}$ for $(n,r)=(1,2)$ and $q^{\,5-2r_Q}$ for
$(n,r)=(2,3)$.
\subsubsection{$USp(2)_{\frac72}$ versus $\mathcal{T}_{2,1}$}
\label{sec_matching_12}

The three comparisons above all sit at $\ell=1$.  We now raise the level.  The first case is
$(n,\ell)=(1,2)$, where the symplectic theory is $USp(2)_{7/2}$ and the Abelian theory is
$\mathcal{T}_{2,1}$, which by section~\ref{sec_general_rank0} is $U(1)^{3}$ at level $C(A_{3})$
with one charge-one chiral at each node and the surviving flavor charge $(1,-1,1)$ of
\eqref{S_general}.  The two ranks now differ, and the Abelian side is no longer a tadpole
theory, its superpotential requiring the dressed monopoles of section~\ref{sec_general_W}.  The
symplectic index is \eqref{ind_SU2} at $2k=7$, and \eqref{ind_general_full} specializes to
\begin{align}
\label{ind_T21}
\mathcal{I}\big[\mathcal{T}_{2,1}\big]
= \sum_{\mathbf m\in\Zb^{3}} \oint \prod_{I=1}^{3}\frac{dz_{I}}{2\pi i\,z_{I}}\;
z_{I}^{\,(\kappa\mathbf m)_{I}}\,
\big(q^{-\frac12}y\big)^{m_{1}-m_{2}+m_{3}}
\prod_{I=1}^{3}
\Big(\frac{-\,q^{\frac12}}{z_{I}}\Big)^{m_{I}^{+}}
\frac{\big(q^{1+\frac{|m_{I}|}{2}}z_{I}^{-1};q\big)_{\infty}}
     {\big(q^{\frac{|m_{I}|}{2}}z_{I};q\big)_{\infty}}\,,
\end{align}
with $\kappa=C(A_{3})$, so that
$(\kappa\mathbf m)_{1}=2m_{1}-m_{2}$, $(\kappa\mathbf m)_{2}=-m_{1}+2m_{2}-m_{3}$ and
$(\kappa\mathbf m)_{3}=-m_{2}+2m_{3}$, and with the contours $|z_{I}|=1$.  The R-charges are set
to zero here, which by \eqref{Rcharge_solution} is the choice $\alpha=0$.

At $r_{Q}=\tfrac12$, which by \eqref{family_dictionary} is $\alpha=0$, both indices are
\begin{align}
\label{ell2_expansion}
1-q+(y+y^{-1})q^{\frac32}-2q^{2}+y^{-1}q^{\frac52}+(y^{2}-1)q^{3}
+(y^{-1}-y)q^{\frac72}+y^{2}q^{4}
\nonumber\\
{}-(y^{-1}+3y)q^{\frac92}+(y^{-2}+4+2y^{2})q^{5}+\dots
\end{align}
in the variable $y=a^{2}$.  The series is not symmetric under $y\to y^{-1}$, in accordance with
$\ell\neq n$ and with the table of section~\ref{sec_level_pair}.  The Abelian side required
fluxes up to $|m_{I}|\le14$ at this order.

Changing frame illustrates the point.  At $r_{Q}=\tfrac13$ the dictionary
\eqref{family_dictionary} gives $\alpha=\tfrac16$, and rather than recompute the Abelian index at
that value we use \eqref{yhat_general} to trade it for the frame $\alpha=0$ already computed,
which by $\hat y=y\,q^{-\alpha}$ amounts to $y\to a^{2}q^{-1/6}$.  Both sides then become
\begin{align}
\label{ell2_expansion_rQthird}
1-q+a^{2}q^{\frac43}+a^{-2}q^{\frac53}-2q^{2}+(a^{-2}+a^{4})q^{\frac83}-q^{3}+\dots
\end{align}
The fractional powers realize the grading \eqref{grading_rule} with the same rule as at
$\ell=1$, the power of $q$ being congruent to $r_{Q}A$ modulo one with $A$ the $U(1)_{y}$
charge, which is a check on the identification of the flavor symmetry as much as on the
coefficients.

\subsubsection{$USp(2)_{\frac92}$ versus $\mathcal{T}_{3,1}$}
\label{sec_matching_13}

At $(n,\ell)=(1,3)$ the Abelian theory is $\mathcal{T}_{3,1}$, which is $U(1)^{5}$ at level
$C(A_{5})$.  The symplectic index is \eqref{ind_SU2} at $2k=9$, and the Abelian index is
\eqref{ind_T21} with the three nodes replaced by five, $\kappa$ by $C(A_{5})$ and the exponent of
$q^{-\frac12}y$ by $m_{1}-m_{2}+m_{3}-m_{4}+m_{5}$.  At $r_{Q}=\tfrac12$ both indices are
\begin{align}
\label{ell3_expansion}
1-q+(y+y^{-1})q^{\frac32}-2q^{2}+y^{-1}q^{\frac52}+(y^{2}-1)q^{3}
+(y^{-1}-y)q^{\frac72}
\nonumber\\
{}-(y^{-1}+2y-y^{3})q^{\frac92}+(y^{-2}+4)q^{5}+\dots
\end{align}
The coefficient of $q^{4}$ vanishes on both sides.  This is what separates $\ell=3$ from
$\ell=2$, where \eqref{ell2_expansion} has $y^{2}q^{4}$, so the comparison is sensitive to the
rank of the Abelian gauge group and not only to its leading behaviour.  At $r_{Q}=\tfrac13$ the
two sides agree as well, and the expansion coincides with \eqref{ell2_expansion_rQthird} through
the orders shown, the two levels first differing at $q^{10/3}$.

\subsubsection{$USp(4)_{\frac92}$ versus $\mathcal{T}_{2,2}$}
\label{sec_matching_22}

This is the first case in which neither side has rank one.  The symplectic theory is
$USp(4)_{9/2}$ and the Abelian theory is $\mathcal{T}_{2,2}$, which is $U(1)^{6}$ at level
$C(T_{2})^{-1}\otimes C(A_{3})$, the first member of the family in which both factors of the
tensor product are non-trivial.  The symplectic index is \eqref{ind_USp4} at $2k=9$, and the
Abelian one is \eqref{ind_general_full} with the six nodes labelled by $(i,a)$,
$i=1,2$ and $a=1,2,3$, the level matrix
\begin{align}
\label{kappa_T22}
\kappa=\begin{pmatrix} C(A_{3}) & C(A_{3})\\ C(A_{3}) & 2\,C(A_{3})\end{pmatrix},
\qquad
C(A_{3})=\begin{pmatrix}2&-1&0\\-1&2&-1\\0&-1&2\end{pmatrix},
\end{align}
and the flavor charge
$S=(m_{11}-m_{12}+m_{13})+2(m_{21}-m_{22}+m_{23})$ from \eqref{S_general}.  Writing
$\mathbf m_{i}=(m_{i1},m_{i2},m_{i3})$ for the fluxes of the two rows, the classical factor is
governed by
\begin{align}
\label{kappa_m_T22}
(\kappa\mathbf m)_{1a}=\big(C(A_{3})\,(\mathbf m_{1}+\mathbf m_{2})\big)_{a}\,,
\qquad
(\kappa\mathbf m)_{2a}=\big(C(A_{3})\,(\mathbf m_{1}+2\mathbf m_{2})\big)_{a}\,,
\end{align}
the coefficients $1,1,1,2$ being the entries $\min(i,j)$ of $C(T_{2})^{-1}$, so that
\begin{align}
\label{ind_T22}
\mathcal{I}\big[\mathcal{T}_{2,2}\big]
&= \sum_{\mathbf m\in\Zb^{6}} \oint \prod_{i=1}^{2}\prod_{a=1}^{3}
\frac{dz_{ia}}{2\pi i\,z_{ia}}\;
z_{ia}^{\,(\kappa\mathbf m)_{ia}}\;
\big(q^{-\frac12}y\big)^{S}
\nonumber\\
&\qquad \prod_{i=1}^{2}\prod_{a=1}^{3}
\Big(\frac{-\,q^{\frac12}}{z_{ia}}\Big)^{(m_{ia})^{+}}
\frac{\big(q^{\,1+\frac{|m_{ia}|}{2}}z_{ia}^{-1};q\big)_{\infty}}
     {\big(q^{\,\frac{|m_{ia}|}{2}}z_{ia};q\big)_{\infty}}\,,
\end{align}
at $\alpha=0$.  This is the first index in the paper that is not of tadpole type, both factors of
\eqref{kappa_family} being non-trivial, and the six-fold flux sum is what makes it the most
computationally expensive of our checks.  At $r_{Q}=\tfrac12$ both indices are
\begin{align}
\label{ell22_expansion}
1-q+(y+y^{-1})q^{\frac32}-2q^{2}+(y^{2}+y^{-2})q^{3}-(y+y^{-1})q^{\frac72}
\nonumber\\
{}+(y^{2}+1+y^{-2})q^{4}-3(y+y^{-1})q^{\frac92}+\big(2y^{2}+5+2y^{-2}\big)q^{5}+\dots
\end{align}
and are invariant under $y\to y^{-1}$, as they must be since $\ell=n$ here.  Note that the
coefficient of $q^{5/2}$ vanishes, whereas every tadpole index is non-zero at that order, so
$USp(4)_{9/2}$ is dual to no member of the family treated in
sections~\ref{sec_matching_rank1} to \ref{sec_matching_rank3}.  The Abelian side required
fluxes up to $|m_{I}|\le10$ at this order.

We have not attempted $(n,\ell)$ with $n\ge2$ and $\ell\ge3$, where the symplectic index
becomes computationally expensive for the same reason that the Weyl group and the short roots make
$\mathcal{I}^{USp(6)}$ expensive above.

We remark on how these Abelian indices were evaluated, since the flux sums converge slowly
and a direct expansion in several gauge fugacities is expensive.  Because the gauge group is
Abelian and each chiral carries charge one under a single factor, the one-loop contribution of a
node depends only on its own fugacity and flux, and the nodes are coupled only through the
classical factor $z_{I}^{(\kappa\mathbf m)^{(I)}}$.  Expanding a node by the $q$-binomial theorem
and reading off a single coefficient reduces the contour integrals to one-variable residues,
after which the flux sum multiplies $n$ ordinary $q$-series.  The on-contour poles of
section~\ref{sec_tadpole_index} need no separate treatment in this form, since only one
coefficient of each node is ever required.  Convergence in the flux is the limiting factor, and
it is slower at higher orders in $q$, the coefficient of $q^{5}$ for $\mathcal{T}_{2,2}$
stabilizing only at $|m_{I}|\le10$ while $q^{4}$ has already done so at $|m_{I}|\le9$.  Each
quoted coefficient is stable at the flux range used, and agrees with a symplectic side that has
itself converged at a much smaller range.

\subsubsection{Charge conjugation and the level pairs}
\label{sec_matching_conj}

One further consequence of \eqref{family_main} is present at all ranks, and it concerns the
symmetry discussed in section~\ref{sec_matching_rank1}.  By \eqref{cgk_mirror} the mirror
of $\mathcal{T}_{\ell,n}$ is $\overline{\mathcal{T}}_{n,\ell}$, so the theory is carried to its
own parity conjugate precisely when $\ell=n$, and \eqref{family_main} then predicts that the index of
$USp(2n)_{n+\frac12+\ell}$ is invariant under $\hat y\to\hat y^{-1}$ if and only if $\ell=n$.
That is exactly the pattern tabulated in section~\ref{sec_level_pair}, and it settles the
question left open in section~\ref{sec_matching_rank1}.

Composing \eqref{family_main} with the mirror map explains the relation
\eqref{level_pair} observed in section~\ref{sec_level_pair}.  Since $\mathcal{T}_{\ell,n}$ and
$\overline{\mathcal{T}}_{n,\ell}$ are mirror, the symplectic theories at $(n,\ell)$ and
$(\ell,n)$, which by \eqref{family_main} are dual to those two Abelian theories and which carry
the same level $k=n+\ell+\tfrac12$, must have equal indices up to the inversion of $\hat y$.
The self-conjugate cases $n=\ell$ are the invariance of the index under that inversion, which is
the diagonal of the table there.  That a relation between two symplectic theories of different
rank should follow in this way is a check of \eqref{family_main} rather than a new result, since
by section~\ref{sec_level_pair} the same relation is the Giveon-Kutasov duality with a single
fundamental.  The two routes to it are independent, one through the Abelian theories and their
mirror symmetry and the other through a Seiberg-like duality of the symplectic theories
themselves, and their agreement is among the better pieces of evidence we have.

It is worth asking how the two descriptions can be consistent, since a Seiberg-like duality
and a mirror symmetry are ordinarily different things.  Assuming \eqref{family_main}, the
square
\begin{align}
\label{commuting_square}
\begin{array}{ccc}
USp(2n)_{k} & \longleftrightarrow & \overline{USp(2\ell)_{k}}\\[4pt]
\updownarrow & & \updownarrow\\[4pt]
\mathcal{T}_{\ell,n} & \longleftrightarrow & \overline{\mathcal{T}}_{n,\ell}
\end{array}
\qquad k=n+\ell+\tfrac12\,,
\end{align}
commutes, the vertical arrows being \eqref{family_main} and the horizontal ones the
Giveon-Kutasov duality above and the mirror symmetry \eqref{cgk_mirror}.  The bars are the
parity conjugation of \eqref{parity_conjugate}, and they appear on both sides for the same
reason.  On the Abelian side each node carries an odd number of charged fermions, and on the
symplectic side the level is half-integral by section~\ref{sec:electric_level}, so parity is
broken in both descriptions and neither bar can be dropped.  The same operation is therefore a
Seiberg-like duality read on the symplectic side and a mirror symmetry read on the Abelian
side.

The reason is that the symmetry the two constructions act on is not the same kind of object
in the two descriptions.  A Giveon-Kutasov duality acts on the flavor symmetry of the
fundamentals, and for $2N_{f}\ge2$ that is a genuine flavor symmetry, unrelated to the
R-symmetry, so the duality is not a mirror symmetry.  At $2N_{f}=1$ there is no non-Abelian
flavor symmetry at all, and the surviving $U(1)_{a}$ is not a flavor symmetry either.  By the
$\mathcal{N}=4$ supersymmetry of section~\ref{sec_electric} it is the axial combination
$R_{H}-R_{C}$ of the two R-symmetry Cartans, as we used in
section~\ref{sec_matching_rank1}.  The charge conjugation that accompanies the Giveon-Kutasov
duality, $\hat a\to\hat a^{-1}$, is then the exchange of $SU(2)_{H}$ with $SU(2)_{C}$, which is
what mirror symmetry does.  The two names describe one operation because the symmetry being
inverted is a flavor symmetry in one reading and an R-symmetry in the other.

This is the same degeneracy that we met in the introduction, where the duality
\eqref{family_main} could not be placed in either of the two standard classes.  For theories of
rank $0$ with $\mathcal{N}=4$ supersymmetry the distinction between the classes is not clear,
since the Higgs and Coulomb branches carry no information and all that mirror symmetry can act
on is the axial R-symmetry, which is exactly what a Seiberg-like duality with a single
fundamental acts on as well.

What we have not done is to carry the analysis of sections~\ref{sec_matching_symmetries}
and \ref{sec_vacuum_counting} into the larger family.  The map of operators, and the constraints
of section~\ref{sec_dual_constraints}, which at $\ell=1$ nearly determine the dual, have not been
worked out for $\ell>1$, and the Bethe vacua of $\mathcal{T}_{\ell,n}$ are known only by their
number.  We will return to this.

\subsection*{Acknowledgements}
We would like to thank Yu Nakayama for discussions on related issues. 
The work of TO was supported by the Startup Funding no.\ 4007012317 of the Southeast University. 
The work of DJS was supported in part by the STFC Consolidated grant ST/T000708/1.

\appendix

\section{Bethe vacua of the tadpole theory}
\label{app_bethe}
We describe the details of the computation behind \eqref{bethe_count}.

\subsection{The system}
Clearing the denominator in \eqref{bethe_tadpole} turns the Bethe equations into $r$ polynomial equations
\begin{align}
\label{app_bethe_poly}
E_I(x;\eta) \equiv \eta^{\,I}\prod_{J=1}^{r} x_J^{\,2\min(I,J)} + x_I - 1 = 0,
\qquad I = 1,\dots,r,
\end{align}
in the variables $x_1,\dots,x_r$ and the single FI fugacity $\eta$.  Explicitly, for the
first few ranks,
\begin{align}
r=1:\quad & \eta\,x_1^{2}+x_1-1=0,
\nonumber\\
r=2:\quad & \eta\,x_1^{2}x_2^{2}+x_1-1=0,
\qquad
\eta^{2}x_1^{2}x_2^{4}+x_2-1=0,
\\
r=3:\quad & \eta\,x_1^{2}x_2^{2}x_3^{2}+x_1-1=0,
\qquad
\eta^{2}x_1^{2}x_2^{4}x_3^{4}+x_2-1=0,
\qquad
\eta^{3}x_1^{2}x_2^{4}x_3^{6}+x_3-1=0.
\nonumber
\end{align}
The exponent of $x_J$ in $E_I$ is $(K_r)_{IJ}=2\min(I,J)$, 
so the pattern of exponents is the tadpole level matrix itself.

The variables $x_I=e^{2\pi i u_I}$ are exponentiated Coulomb branch coordinates, 
so they take values in $\mathbb{C}^*=\mathbb{C}\setminus \{0\}$. 
The holonomy part of $u_I$ is periodic, and the two punctures $x_I=0,\infty$ are the limits $\beta\sigma^{(I)}\rightarrow \mp\infty$ 
 in which the vacuum escapes to infinity along the Coulomb branch. 
 Bethe vacua are therefore the solutions of \eqref{app_bethe_poly} lying in the algebraic torus $(\mathbb{C}^{*})^{r}$, not merely in $\mathbb{C}^{r}$. 
We impose this by saturation, adjoining one variable $t$ and one equation,
\begin{align}
\label{app_saturation}
t\prod_{I=1}^{r}x_I - 1 = 0,
\end{align}
so that the affine variety of the enlarged ideal $\mathcal{I}_r = \big(E_1,\dots,E_r,\;t\prod_I x_I-1\big) \subset \mathbb{C}[x_1,\dots,x_r,t]$
is in bijection with the solutions of \eqref{app_bethe_poly} in $(\mathbb{C}^{*})^{r}$.
For \eqref{app_bethe_poly} the saturation is in fact harmless, 
since $x_I=0$ for any $I$ already forces $E_I=-1\neq0$. 
We keep it only to make the count manifestly a count in the torus.

\subsection{Counting}
\label{app_counting}
For generic $\eta$ the ideal $\mathcal{I}_r$ is zero dimensional, and the number of solutions counted with multiplicity is
\begin{align}
\#\{\text{solutions}\} = \dim_{\mathbb{C}} \mathbb{C}[x_1,\dots,x_r,t]/\mathcal{I}_r,
\end{align}
which we evaluate by computing a Gr\"obner basis of $\mathcal{I}_r$ in lexicographic order and counting the standard monomials, 
i.e.\ the monomials not divisible by any leading monomial of the basis. 
Carrying this out for $\eta$ a fixed generic rational number gives
\begin{align}
\label{app_counts}
\begin{array}{c|cccc}
r & 1 & 2 & 3 & 4 \\ \hline
\dim_{\mathbb{C}}\,\mathbb{C}[x,t]/\mathcal{I}_r & 2 & 3 & 4 & 5
\end{array}
\end{align}
that is, $r+1$ in each case, in agreement with the general proof of the
next subsection, which the Gr\"obner computation confirms by an independent, explicit
calculation.

Concretely, eliminating $x_2,\dots,x_r$ in the lexicographic order
$x_r>\dots>x_2>t>x_1$ leaves a single univariate generator, which is the eliminant
$\mathcal{E}_r$ of \eqref{eliminant} below.  The standard monomials are then
$1,x_1,\dots,x_1^{\,r}$, whence $\dim_{\mathbb{C}}\mathbb{C}[x,t]/\mathcal{I}_r=r+1$.
Explicitly,
\begin{align}
\mathcal{E}_1 &= \eta x^{2}+x-1\,, \nonumber\\
\mathcal{E}_2 &= x^{3}+(4\eta-1)x^{2}-4\eta x+\eta\,, \nonumber\\
\mathcal{E}_3 &= \eta x^{4}+(2\eta+4)x^{3}-(\eta+8)x^{2}+(5-2\eta)x+\eta-1\,, \nonumber\\
\mathcal{E}_4 &= x^{5}+(9\eta+1)x^{4}-(12\eta+3)x^{3}+(4\eta-1)x^{2}+3x-1\,, \nonumber\\
\mathcal{E}_5 &= \eta x^{6}+(6\eta+9)x^{5}+(\eta-21)x^{4}+(16-22\eta)x^{3}
+(22\eta-4)x^{2}-8\eta x+\eta\,,
\nonumber
\end{align}
of degrees $2,3,4,5,6$ respectively.  We have also checked numerically, at $\eta=7/3$ and for
every $r\le6$, that each of the $r+1$ roots of $\mathcal{E}_r$ avoids the zeros of
$P_0,\dots,P_r$ and that the tuple reconstructed from it by \eqref{closedform:x} solves
\eqref{app_bethe_poly}, so that the table above extends to $r\le6$.

\subsection{The eliminant and the count for all $r$}
\label{app_eliminant}
Define polynomials $P_I(x)$ by
\begin{align}
P_{-1}&=P_0=1\,,
&
P_{I+1}(x)&=P_I(x)+(x-1)\,P_{I-1}(x)\,,
\label{Prec}
\end{align}
the first few being
\begin{align}
P_1=x\,,\qquad
P_2=2x-1\,,\qquad
P_3=x^2+x-1\,,\qquad
P_4=x\,(3x-2)\,.
\label{app_Ppoly}
\end{align}
Substituting $x=1+t$ turns \eqref{Prec} into $P_{I+1}=P_I+t\,P_{I-1}$, which is solved by
\begin{align}
P_I(x)=\sum_{k\ge0}\binom{I+1-k}{k}\,(x-1)^{k}\,,
\label{app_Pbinom}
\end{align}
which makes $\deg P_I=\lceil I/2\rceil$ manifest.%
\footnote{These are the Fibonacci polynomials in homogenized form.
Writing $F_n$ for the Fibonacci polynomials, $F_1=1$, $F_2=z$,
$F_{n+1}(z)=z\,F_n(z)+F_{n-1}(z)$, one has
$P_I(1+t)=t^{(I+1)/2}\,F_{I+2}\big(t^{-1/2}\big)$;
note that the recursion \eqref{Prec} carries the variable on the second rather than
the first term, so the two families agree only after this rescaling.
At $t=1$, i.e.\ $x=2$, one recovers the Fibonacci numbers themselves, $P_I(2)=F_{I+2}$.}
Two elementary properties follow by
induction, namely $P_I(1)=1$ for all $I$, and $\gcd(P_I,P_{I-1})=1$ (a common divisor
of $P_{I+1}$ and $P_I$ divides $(x-1)P_{I-1}$ and is coprime to $x-1$ since
$P(1)=1$, hence divides $\gcd(P_I,P_{I-1})$). The key identity is of Cassini
type,
\begin{align}
P_{I-1}(x)^2-P_{I-2}(x)\,P_I(x)=(1-x)^I\,,
\label{cassini}
\end{align}
proved by noting that
\begin{align}
D_I\equiv P_{I-1}^2-P_{I-2}P_I
\qquad\text{satisfies}\qquad
D_{I+1}=(1-x)\,D_I\,,
\qquad
D_1=1-x\,,
\label{cassiniproof}
\end{align}
by \eqref{Prec}.

\begin{proposition}
\label{prop_eliminant}
Fix $\eta\neq0$ and set
\begin{align}
\mathcal{E}_r(x;\eta)=\eta\,P_r(x)^2+(x-1)\,P_{r-1}(x)^2\,.
\label{eliminant}
\end{align}
The projection $(x_1,\dots,x_r)\mapsto x_1$ is a bijection from the set of solutions of
\eqref{app_bethe_poly} in $(\mathbb{C}^*)^r$ onto the set of roots $x$ of
$\mathcal{E}_r(x;\eta)$ satisfying $P_I(x)\neq0$ for $I\in\{0,1,\dots,r\}$.  Its inverse is
\eqref{closedform:x}.  Equivalently, for such a root $x$, a tuple
$(x_1,\dots,x_r)\in(\mathbb{C}^*)^r$ with $x_1=x$ solves \eqref{app_bethe_poly} if and only if the following equivalent expressions hold
\begin{subequations}
\label{closedform}
\begin{align}
x_I&=\frac{P_{I-2}(x)\,P_I(x)}{P_{I-1}(x)^2}\,,
\label{closedform:x}
\\
1-x_I&=\frac{(1-x)^I}{P_{I-1}(x)^2}\,,
\label{closedform:onemx}
\\
\prod_{J\le I}x_J&=\frac{P_I(x)}{P_{I-1}(x)}\,.
\label{closedform:prod}
\end{align}
\end{subequations}

\end{proposition}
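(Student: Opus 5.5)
The plan is to rewrite the system \eqref{app_bethe_poly} in terms of the prefix products $p_I\equiv\prod_{J\le I}x_J$, with $p_0=1$, and the single combination $\lambda\equiv\eta\,p_r^2$. Since every $x_J\in\mathbb{C}^*$, all $p_I$ are nonzero. Write $w_c=\prod_{J\ge c}x_J=p_r/p_{c-1}$. The factorization below \eqref{bethe_tadpole} then gives
\begin{align}
\eta^I\prod_J x_J^{2\min(I,J)}=\eta^I\prod_{c=1}^{I}\frac{p_r^2}{p_{c-1}^2}=\frac{\lambda^I}{(p_1\cdots p_{I-1})^2}\,.
\nonumber
\end{align}
Hence the Bethe equations become $1-x_I=\lambda^I/(p_1\cdots p_{I-1})^2$ for $I=1,\dots,r$, together with the definition $\lambda=\eta p_r^2$. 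The $I=1$ equation reads $\lambda=1-x$ with $x=x_1$. This already shows $x\neq1$, since $\lambda=\eta p_r^2\neq0$. It also shows that, once $x$ is fixed, each subsequent equation determines $x_I$, and hence $p_I=p_{I-1}x_I$, uniquely from the earlier ones. Uniqueness of the tuple over a given $x_1$, i.e.\ injectivity of the projection, follows immediately.

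For the forward direction I will prove by induction on $I$ that $p_I=P_I(x)/P_{I-1}(x)$, with $P_{I-1}(x)\neq0$ and $P_I(x)\neq0$. The base case is $p_0=1=P_0/P_{-1}$ and $p_1=x=P_1/P_0$. For the inductive step, the telescoping product gives $p_1\cdots p_{I-1}=P_{I-1}/P_0=P_{I-1}$. The $I$-th equation then reads $1-x_I=(1-x)^I/P_{I-1}^2$, which is \eqref{closedform:onemx}. The Cassini identity \eqref{cassini} rewrites the right-hand side as $(P_{I-1}^2-P_{I-2}P_I)/P_{I-1}^2$, so $x_I=P_{I-2}P_I/P_{I-1}^2$, which is \eqref{closedform:x}. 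Multiplying by $p_{I-1}=P_{I-1}/P_{I-2}$ gives \eqref{closedform:prod}. Nonvanishing of $P_I(x)$ is forced because $p_I\neq0$. I would keep this bookkeeping explicit: each division in the step uses only $P_{I-1}\neq0$, which the previous step supplies. Finally, substituting $p_r=P_r/P_{r-1}$ into $\lambda=\eta p_r^2=1-x$ and clearing $P_{r-1}^2$ gives exactly $\mathcal{E}_r(x;\eta)=\eta P_r^2+(x-1)P_{r-1}^2=0$. So $x_1$ is a root of $\mathcal{E}_r$ with $P_0,\dots,P_r$ nonzero at it.

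For the converse, take a root $x$ with $P_I(x)\neq0$ for $0\le I\le r$, and define $x_I$ by \eqref{closedform:x}. Each $x_I$ is finite and nonzero, so the tuple lies in $(\mathbb{C}^*)^r$. The telescoping and Cassini computations above run backwards and give $p_I=P_I/P_{I-1}$ and $1-x_I=(1-x)^I/(p_1\cdots p_{I-1})^2$. The root condition gives $\eta p_r^2=1-x$, which is the identification $\lambda=1-x$ needed to turn these relations back into \eqref{app_bethe_poly}. This proves surjectivity onto the stated set, and that \eqref{closedform:x} is the inverse; the three forms in \eqref{closedform} are equivalent by \eqref{cassini} and telescoping.

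I expect the only delicate step to be the initial reduction to the $(p_I,\lambda)$ variables. One must check that the exponent pattern $2\min(I,J)$ collapses exactly to $\lambda^I/(p_1\cdots p_{I-1})^2$, with the $p_r$ dependence absorbed entirely into $\lambda$. Once that is in place, the rest is the Cassini identity together with tracking which $P_I$ must be nonzero for each division.
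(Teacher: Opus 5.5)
Your proposal is correct and follows essentially the paper's proof: the $w_c$ factorization, an induction on the closed form using the Cassini identity, and substitution back into the $I=1$ equation to obtain $\mathcal{E}_r=0$. The only difference is that you substitute $\lambda=\eta p_r^2=1-x$ into every equation at once, where the paper takes ratios of consecutive equations to get a forward recursion; your version is slightly more direct and makes injectivity and the nonvanishing of each $P_I$ explicit.
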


\begin{proof}
With $w_c=\prod_{J\ge c}x_J$, the equations \eqref{app_bethe_poly} read
\begin{align}
1-x_I&=\eta^I\prod_{c\le I}w_c^2\,,
\qquad I=1,\dots,r\,.
\label{bethew}
\end{align}
Taking the ratio of consecutive equations gives
\begin{align}
1-x_{I+1}&=\eta\,w_{I+1}^2\,(1-x_I)\,,
\label{ratio}
\end{align}
and since $w_{I+1}=w_1/(x_1\cdots x_I)$ while the $I=1$ equation is
$\eta w_1^2=1-x$, equation \eqref{ratio} becomes
\begin{align}
x_{I+1}&=1-\frac{(1-x)(1-x_I)}{(x_1\cdots x_I)^2}\,,
\label{fwd}
\end{align}
a forward recursion determining $x_2,\dots,x_r$ uniquely from $x$. Induction
establishes \eqref{closedform}. The base case is trivial, and assuming
\eqref{closedform} at step $I$ (with $P_{I}(x)\neq0$ following from
$\prod_{J\le I}x_J=P_I/P_{I-1}\in\mathbb{C}^*$), equation \eqref{fwd} gives
\begin{align}
1-x_{I+1}
&=(1-x)\,\frac{(1-x)^{I}}{P_{I-1}^{2}}\,\frac{P_{I-1}^{2}}{P_{I}^{2}}
=\frac{(1-x)^{I+1}}{P_{I}^{2}}\,,
\label{indstepa}
\\
x_{I+1}
&=\frac{P_{I}^{2}-(1-x)^{I+1}}{P_{I}^{2}}
=\frac{P_{I-1}\,P_{I+1}}{P_{I}^{2}}\,,
\label{indstepb}
\end{align}
the last equality by \eqref{cassini}. Substituting
$\prod_{J\le r}x_J=P_r/P_{r-1}$ back into the $I=1$ equation
$\eta w_1^2=1-x$ yields
\begin{align}
\eta\left(\frac{P_r(x)}{P_{r-1}(x)}\right)^{2}=1-x
\qquad\Longleftrightarrow\qquad
\mathcal{E}_r(x;\eta)=0\,.
\label{finalsub}
\end{align}
Conversely, given a root $x$ of \eqref{eliminant} avoiding the zeros of the
$P_I$, defining $x_I$ by \eqref{closedform:x} and running the computation
backwards, using \eqref{cassini} at each step, verifies every equation of
\eqref{app_bethe_poly}.
\end{proof}

Since $\deg P_r=\lceil r/2\rceil$, the two terms of \eqref{eliminant} have
degrees $2\lceil r/2\rceil$ and $1+2\lceil(r-1)/2\rceil$, so that
\begin{align}
\deg_x \mathcal{E}_r
=\max\!\big(2\lceil r/2\rceil,\;1+2\lceil(r-1)/2\rceil\big)
=r+1
\label{degE}
\end{align}
for every $r$ and every $\eta\neq0$. For generic $\eta$ the roots are simple.
Since $P_r(1)=1$ and $\gcd(P_r,P_{r-1})=1$, the two members of the pencil
\eqref{eliminant} are coprime. If the discriminant of $\mathcal{E}_r$ in $x$ vanished
identically in $\eta$, there would be an algebraic family $x(\eta)$ with
$\mathcal{E}_r=\partial_x\mathcal{E}_r=0$, and differentiating $\mathcal{E}_r(x(\eta);\eta)=0$ along the
family, \emph{i.e.}\ with respect to $\eta$, would give $P_r(x(\eta))^2=0$ and hence also $(x-1)P_{r-1}^2=0$,
contradicting coprimality. The same argument shows that for generic $\eta$ no
root of $\mathcal{E}_r$ lies on the fixed, finite set of zeros of $\prod_{I\le r}P_I$.
Hence
\begin{align}
\#\,\text{Bethe vacua}\big[U(1)^r_{K_r}\big]=r+1
\qquad\text{for all }r\ge1\text{ and generic }\eta\,,
\label{count}
\end{align}
proving \eqref{bethe_count} in general.

\paragraph{Remark}
The closed form \eqref{closedform} also lets us analyze the behavior of the vacua in the two
limits of $\eta \to 0, \infty$.
As $\eta\to0$ the roots of \eqref{eliminant} approach the simple zero $x=1$ of
$(x-1)P_{r-1}^2$ together with the zeros of $P_{r-1}$, each of the latter with
multiplicity two, and, for $r$ odd, a single further root running off to $x\to\infty$.  The last
of these is accounted for by the drop in the degree of \eqref{eliminant} at $\eta=0$, in exact
parallel with the $\eta\to\infty$ behavior below.
At $x=1$ one has $x_I=1$ for all $I$ by \eqref{closedform:onemx}, an interior point of
the torus, while at a zero of $P_{r-1}$ the coordinate
$x_r=P_{r-2}P_r/P_{r-1}^2$ diverges and the solution escapes to the boundary of
$(\mathbb{C}^*)^r$.
Exactly one Bethe vacuum therefore remains interior as $\eta\to0$.
In the opposite limit $\eta\to\infty$ the roots approach the zeros of $P_r$, where
$x_r\to0$ by \eqref{closedform:x}, together with, for $r$ even, a single root running
off to $x\to\infty$. Hence all solutions escape.
Which of the escaping solutions still describe vacua of the deformed theory is the
question raised in section~\ref{sec_vacuum_counting}.

\bibliographystyle{utphys}
\bibliography{ref}

\end{document}